 \newtheorem{thm}{Theorem}[section]
 \newtheorem{cor}[thm]{Corollary}
 \newtheorem{lem}[thm]{Lemma}
 \newtheorem{prop}[thm]{Proposition}
 \theoremstyle{definition}
 \theoremstyle{remark}
 \numberwithin{equation}{section}
 \DeclareMathOperator{\RE}{Re}
 \DeclareMathOperator{\IM}{Im}
 \newcommand{\h}{\mathcal{H}}
 \newcommand{\abs}[1]{\left\vert#1\right\vert}
 \newcommand{\set}[1]{\left\{#1\right\}}
\begin{document}

\title{Schr\"{o}dinger's Equation is Universal, Dark Matter and Double Diffusion}

\author{ Johan G.B. Beumee, Herschel Rabitz}

\address{Flat 36, Bowles Lodge, All Saint's Road, Hawkhurst,
United Kingdom}

\email{jbeumee@gmail.com}

\thanks{Rights Reserved JoDe Group Financial 12 April 2021}

\thanks{The author thanks Mark Davis and Chris Rogers for comments.}

%\subjclass{Primary 47A15; Secondary 46A32, 47D20}

\date{}

\keywords{forward stochastic processes, backward stochastic processes, quantum mechanics, Schr\"{o}dinger equation, heatbath, diffusion, elastic collisions, Special Relativity, Dark Matter, Double Diffusion}

%\dedicatory{To FI, JF and SM}

\commby{James F. & Fransina I. Beumee}

%%% ----------------------------------------------------------------------

\begin{abstract}
This paper considers a main particle and an incident particle classical mechanics elastic collision preserving energy and momentum while ignoring the angular momentum, spin or other particle characteristics. The elastic collision is modelled using projections which reduce the collision to the one-dimensional main and incident particle surface. The main result of the paper shows that the colliding two particle classical Hamiltonian energy can be represented in four weighted individual particle in symmetric and anti-symmetric (osmotic) terms similar to the quadratic Nelson measure used in the derivation of the Schr\"{o}dinger wave function. However, these four energy terms representing the Hamiltonian of the particles have each a different mass-ratio weighting. Following Nelson, if the second particle behaviour can be captured in a potential and the ingoing and outgoing velocities of the main particle are modelled using stochastic differential equations the motion of the main particle satisfies the Schr\"{o}dinger's equation. The diffusion variance of this equation is replaced by a related ratio of masses and the assumed variance. The first example attempts to reconcile this result with quantum mechanics by considering the Schr\"{o}dinger equation in the presence of only one type of incident particle. The main particle energy levels become multiples of the incident particle and the energy expression for the entire system agrees with quantum mechanics but there are differences with the stochastic equation. As the relationship is classical the Schr\"{o}dinger equation can be used to represent corrections for Newton's equation and also suggests a user profile to be used in the search for Dark Matter making use of the altered Newton's equation. An alternative solution to the collision model also shows relativistic properties as the interactions suggest corrections to the Minkowski equation in Einstein's Special Relativity. It is also possible to use the classical Schr\"{o}dinger's equation both on the main and incident particle simultaneously leading to a correlated set of wave equations with different diffusion parameters. In principle this set of equations may lead to Dirac's equation.
\end{abstract}

%%% ----------------------------------------------------------------------
\maketitle
%%% ----------------------------------------------------------------------

\section*{Introduction}
\smallskip
Ever since Einstein's introduction of the molecular-kinetic theory of heat in 1905 the Brownian motion/Markovian formalism has been applied to a large variety of topics including the existence of atoms,
the classical particle diffusion and stochastic mechanics. The original derivation of the associated diffusion equation for the density distribution is due to Einstein~\cite{EINST1} where he used stochastic properties to measure the size of atoms. The second subject is more the domain of statistical mechanics and focusses on thermodynamic properties,  Goldstein~\cite{GOLDSTEIN1}, isothermal flows, Garbaczewksi~\cite{GARBA1}, transport equations (e.g. Master equation, Boltzmann's equation or Kramer's equation) and Markov Chains, for instance Posilicano~\cite{POSILICANO1}, van Kampen~\cite{KAMPEN1} or Gamba~\cite{GAMBA1}. In the third topic of stochastic mechanics Nelson~\cite{ENELSON3} showed in 1966 how to apply a set of symmetric and anti-symmetric velocity terms and employ an acceleration operator to derive Schr\"{o}dinger's equation. This result led to the stochastic interpretation of quantum mechanics see for instance the recent review by Carlen~\cite{ECARL2} or Nelson~\cite{ENELSON1},~\cite{ENELSON2}.

The collision method described in this paper concentrates entirely on an elastic classical collision between the main (mass $M$) and incident particle (mass $m$) ignoring angular momentum or spin. The main result of the paper looks at the classical particle collision and looks at the axis that the collision employs. As there is no other process involved at the very moment of collision the interaction between the two particles happens via the one-dimensional interaction path $\phi$ defined by the unit vector connecting the center of the main particle and the incident particle. The three dimension motions (extendable to n-dimensions) of $v_1,w_1,v_2,w_2$ projected onto this axis by the project operator $P(\phi)$  determine the actual collision. All other motion from $(I-P(\phi))v_1,(I-P(\phi))w_1,(I-P(\phi))v_2,(I-P(\phi))w_2$ do not affect the classical collision. Once the collision has been determined resolving the equation writing $v_2, w_2$ in terms of $v_1,w_1$ the remaining elements are added and the final result is obtained. This procedure is entirely classical and applies to any particle motion $x(t)$.

With collisions the pre - and post velocities of the main particle $v_1,v_2$ and incident particles $w_1,w_2$ are modelled implicitly from position $x=x(t)$ by investigating the previous collision position $x(t-\tau_1)$ and the next collision coordinate at $x(t+\tau_2)$. At time $t-\tau_1$ the main particle velocity equals $v_1$ so that $x(t)=x(t-\tau_1)+v_1(x(t-\tau_1),t-\tau_1)\tau_1$ while the incident particle velocity equals $w_1$. Once the collision happened at time $t$ the new collision speeds are $v_2$ for the main article and $w_2$ for the incident particle hence the new main particle position becomes $x(t +\tau_2)=x(t)+v_2(x(t),t)\tau_2$. The $x(t)$ process shows the continuous progress of the main particle subject to equations of continuity whereas the behaviour of the incident particle $w_1, w_2$ can be obtained either from a heatbath or it can be obtained from an equation correlated to $x(t)$.  Here the post - collision time $\tau_2$ and the pre - collision time $\tau_1$ can be represented by a second (or higher) order gamma distribution so that $E[\tau_1]=E[\tau_2]=\overline{\tau}$ for a stationary process of collision $\tau$. If the inter-collision times decrease or increase the collision times become important in the calculations.

The easiest way of modelling $x(t)$ is to describe the path of the main particle as a solution to a stochastic differential equation and presume that the collisions only occur with a frequency distributed with a 2nd order (or higher) Gamma distribution. In this case the forward equation supplies the current position $x(t)$ and the velocity $v_2$ but with the current position $x(t)=x$ known the backward momentum $v_1(t)=x(t)-x(t-\tau_1)$ is a random variable that will have to be conditioned on Bayes' theorem to obtain $v_1$. That way the actual process can be modelled to some degree though not with the probability that the inter-collision time equals zero. In this paper we assume this often for the main distribution and model the incident particles as a potential, a constructed function and eventually as a stochastic process.

With this projection based collision representation consider the symmetric (average velocity) and anti-symmetric equations (osmotic velocity) of $v_2+v_1$, $w_2+w_1$, $v_2-v_1$ and $w_2-w_1$. With a fair bit of calculation it will be shown that the sum of the main particle $\abs{(v_2+v_1)}^2/4$, $\abs{(v_2-v_1)}^2/4$ and the incident particle $\abs{(w_2+w_1)}^2/4$, $\abs{(w_2-w_1)}^2/4$ can be made equal to the two particle Hamiltonian if all elements are weighted appropriately. The parts of the energy momentum squared $\abs{(v_2-v_1)}^2/4$ and $\abs{(w_2-w_1)}^2/4$ will be referred to as the osmotic velocity or momentum and they become zero if there is no diffusion (stochastic process) or become acceleration squared without diffusion. Notice that to represent the total energy of both systems all four terms in the equation are different emphasizing the effect of the interaction.

The main proof in the paper shows that the classical main particle and incident particle energy in the collision exchange of energy is invariant under the collision as long as the appropriate weighting has been applied to the four terms. The proof is in the back of the paper and uses energy integrals subjected to repeated partial integrals. The most interesting aspect of this fact is that Nelson~\cite{ENELSON1} showed the same argument for the main particle but he did not use the appropriate weighting for the individual momentum terms and he did not refer to energy and momentum conservation affecting the two particles. The paper shows that it is possible to condense the terms of the second particle into a potential (even a velocity dependent one if needed) so that the energy reduces to the first two symmetries of the main particle plus a potential for the incident particle one. With this approximation the energy equation almost look like Nelson's example used in the 1966 paper except for the weighting of the osmotic term.

First consider the case that incident particle is approximated with a potential. Then the solution to this energy equation conserving energy and momentum can employ stochastic differential equations and show that main particle position satisfies Schr\"{o}dinger's equation with a variance equal to $\eta = M^{3/2}\sigma^2/m^{1/2}$. If the incident particle has a finite energy $mc_w^2/2$ the solution of Schr\"{o}dinger's equation carries multiple incident particle energies. This is likely a first step towards quantum mechanics as the first particle now carries energy Eigen solutions that are multiples of $mc_w^2\overline{\tau}/2$ where $m$ is the incident particle mass and $c_w$ becomes the incident particle velocity squared and $\overline{\tau}$ shows the average time to collisions. Only proving Schr\"{o}dingers equation from the stochastic differential equation is not sufficient to explain quantum mechanics as Wallstrom~\cite{WALLSTROM1} pointed out.

The assignment of angular momentum and spin are often done as an external requirement to the Schr\"{o}dinger equation but in this case the angular momentum can be modelled independently in the energy term and so for that matter can the particle spin. Obviously using the present approach the energy model needs to add the angular momentum and spin components and needs to prescribe the both the angular momentum and spin state to energy transition. Once that has been done the solution to the energy equation will be a set of angular momentum and spin related Schr\"{o}dinger equations. In that case a discreteness for measurements for angular momenta and spin will be the consequence and the Wallstrom objection will be countered. Since the derivation of the energy expression is classical and the stochastic differential equation is only a choice of convenience it is clear that better solutions may be available. In particular binomial or multinomial transition processes that may explain Wallstrom's and Nelson's objections to the stochastic approach and also in the future explain the re-normalization problem.

Another example of the use of this energy equation is the application to galactic clusters to explain and modify Newton's equation due to diffusion and use this result to find a solution to Dark Matter. In this case the main particle is a typical solar system and the incident particles are the remaining stars in the galactic system. The force experienced by a typical star system is subject to two incident particle contributions. The first force replicates the Newtonian distribution as a function of the distance to the radial while the second force is provided as the osmotic term and depends on the diffusion constant. The diffusion here is the result of an elongation of star trajectories around the galactic centre due to the randomly provided forces caused by the large number of different star interactions. The solution is Schr\"{o}dinger's equation with a double potential and an unknown diffusion rate. Interestingly, the Schr\"{o}dinger equation with only the Newtonian potential have been studied by a variety of different people including Penrose~\cite{PENROSE1} who investigated the behaviour of particles as a function of their own gravity. The Newton potential enriched with its osmotic additional terms has not been studied yet. In addition, Milgrom~\cite{MIGROM1}, ~\cite{MILGROM} suggested to change the Newtonian function subject to radial distance of the galactic cluster which this example seems to suggest as well due to the diffusive forces in the osmotic terms.

The stochastic mechanics results are important for this paper as part of the examples suggests using the energy formula and substituting stochastic differential equations for the main particle trajectory. The forward and backward time step perspective for continuous stochastic processes can be gleaned from the extensive work on stochastic processes by Nelson~\cite{ENELSON1}, Carlen~\cite{ECARL1}, ~\cite{ECARL3}, Guerra~\cite{GUERRA1}, ~\cite{GUERRA2} and see the references in Carlen~\cite{ECARL2}. Carlen in particular emphasized that this construction for the examples would almost always have weak solutions as long as the potentials have the Kato-Rellich property. This energy invariant is particularly useful as it is possible to enter other stochastic processes like a binomial process, a set of trinomial trees or a multinomial process. Taking the forward and backward processes into account it is possible to use Feynman's method for generating solutions to the Dirac equation in a real setting rather than in complex space using the Abelian Gauge Theory based on Dirac's equation.

Another interesting result is based on the same set of velocity equations for $v_1,v_2$ and $w_1,w_2$ but solves the set of equations differently by looking at the Eigenvectors and Eigenvalues. Using this method it is quite easy to obtain the inputs and outputs simultaneously in the first Eigenvector as the average velocity field $a$ covering all input and output velocities. The second Eigenvector provides the individual velocity changes in terms of velocity differences $(v_1-w_1)$ and $(v_2-w_2)$. In this example it can be shown a formula similar to the Minkowski surface describing Einstein Special Relativity solutions. The example then shows that if the momenta $(v_1-w_1)$ and $(v_2-w_2)$ are independent of $a$ or if the incident particle consist of light (photon mass $m=0$) the original Minkowski geometry holds. For large masses where the mass of the main particle $M$ is much larger than the incident particle it becomes clear that the effect of fluctuations equals changes in overall average velocity $a$ is affected by $w_2$ and $w_1$. Though these energy conservation formula provide relativistic behaviour there is no reference to the speed of light only to the speed of the incident particles $w_1,w_2$.

The last example in this paper takes the energy conservation formula and substitutes one stochastic differential equation for the main particle with variance $\eta$ and a second stochastic process describing the incident particle using a much smaller variance $\eta_w=m^2\eta/M^2$. The example applies stochastic differential equations to both the main and incident process where the main and incident particles were modelled exchanging forces based on two potentials satisfying a condition. The double stochastic process is resolved into two Schr\"{o}dinger equations and the energy minimization technique demands that any potentials between the world of the main particle and the incident particle are symmetrically represented. In principle these equations describe the previous example and should accommodate the relativistic case as well for massless photons. If the incident particle do have mass the equations can only accommodate fluctuations that obey a slightly different requirement than the Minkowski surface. Further research on how this set of equations compares to Dirac's equation will be deferred to the future.

Section 1 investigates a two particle elastic collision process ignoring angular momentum or spin and calculates the pre - collision and post - collision kinetic energies and momentum of the main particle being diffused by the incident particle. To represent the elastic collision the collision process is assumed one-dimensional along the main and incident particle centres and uses the (random) projection operator to calculate the post - collision velocities. From the energies then a pre - collision and post - collision Hamiltonian is constructed. As the main and incident particle are isolated the energies and momentum for both particle pre - and post - collision Hamiltonian's are conserved. Any energy from the main particle is radiating energy into the heatbath or the main particle is absorbing energy from its heatbath surroundings.

The main result of the paper in Section 1 shows that for elastic collisions the velocities of the classical main and incident particle can be written in a specific Nelson measure for each particle. Specifically the combined energy of both particles is written as four terms with all terms having different weighting terms depending on the mass ratio between the main and incident particle. Hence the Nelson measure is intended to represent energy and momentum conservation for classical Hamiltonian collision problems. This result is very promising as Nelson showed how to convert these symmetrical and anti-symmetric expressions into Schr\"{o}dinger's equation with the use of stochastic processes. However, notice that the energy and momentum constraints demanded a specific weighting of the individual terms than Nelson originally assumed. The other result in this Section uses the same momenta equations but solves the system employing Eigenvectors and Eigenvalues. As mentioned above the first Eigenvector equals the average velocity field $a$ covering all input and output velocities and the second Eigenvector provides the individual velocity changes in terms of velocity differences $(v_1-w_1)$ and $(v_2-w_2)$.

Section 2 shows the solution of the particle using the solution to the generic stochastic differential equation employing both the usual forward equations and the backward equations. The two sets of equations produce that the probability density for the process and they show that the probability density satisfies the useful continuity equity. In addition the inter-collision times are distributed with a second order Gamma distribution and this Section shows the estimated pre - and post collision speeds together with the estimated drift term. In future calculations there will be constant energy terms that depend on the average inter-collision rates. The Section also shows how to use the differential equations to estimate the equations for functions of the main particle stochastic process.

Section 3 shows how the double particle energy equation can be solved assuming that the main particle can be represented via the forward stochastic differential equation describing the forward steps and backward process representing the backward step assuming the current position $x(t)=x$. To deal with the incident particles this part of the energy representation is presented as a potential representing the average energy of the incident particle. The energy terms now represent the average main particle velocity, the osmotic velocity weighted by the mass ratio, the incident particle potential and the constant energy term. The minimum solution for the main particle in the form of a stochastic process the only probability density that renders the double term pre - and post - collision energies for the main particle and the incident potential is the squared modulus of the wave function obtained from Schr\"{o}dinger's equation. However for time dependent potentials there is an additional term in the potential section of the Schr\"{o}dinger equation showing the energy dependence on the changing potential. Another difference with quantum mechanics is that the drift term of the associated stochastic equation depends on the mass ratio due to the mass ratio dependency and the fact that the variance of this equation only refers to the individual masses and the main particle volatility.

The first example in Section 4 shows an adjustment employs to adjust to Newton's law for our solar system using the Earth’s mass $M$ and comparing the diffusion result to Newton’s calculations of the Earth’s orbit around the sun. The stochastic differential of the Earth's orbit leads to the Schr\"{o}dinger equation for the position of the Earth subject to the potential of the sun. The variance here in the equation obviously relies on the sun's and Earth's mass ratio and reflects all uncertainties of mass inside the sun, uncertainties in the earth's orbit and gas deposits throughout the solar system. Another example using the same technique shows the Schr\"{o}dinger equation for the main and incident particle but where there is no interacting potential. In this case the Schr\"{o}dinger equation shows that the main particle Eigen values have Eigen values showing multiples of the assumed incident particle energy. This example agrees with quantum mechanics assuming that the environment is providing an environment of small incident particles and calling the diffusion rate proportional to $\hbar$. In the next example there is an unexpected result as it shows that the correlation between main particles and incident particles increases as the speed of the main particle increases.

Another example uses the incident particle energy distribution to model the attraction on a mass residing within a galaxy and notices that the potential forces within the galactic cluster or galaxy can be changed to compensate for the diffusion of forces. The solution to this case of Dark Matter employs the Schr\"{o}dinger equation to demonstrate how Newton's equation has to be adjusted due to the noticeable amount of diffusion in the system. Random star behaviour, position sizes, observational uncertainties and unexplained stellar gas causes uncertainty and the interaction potentials need to be modified to incorporate Newton's formula. The uncertainty causes small changes to the mass and velocity distribution and the uncertainty is best described using Schr\"{o}dinger's equation. The profile for the galactic radial speed of the star can be adjusted by the assumed variance. Though the radial velocity profiles will flatten as a result of this it is not clear whether this explains Dark Matter.

In the next example there is the relationship between the mean inter-particle collision time and the particle energy can be expressed in the form of the geometric Minkowski invariant which is now based on the average velocity. The original equation incorporates fluctuations due to impacted heatbath particles and is based on a correlation between the heatbath particles and the average velocity. Once the correlation is removed the original Minkowski minimal surface reappears. The reason that the original equation is of considerable interest is that it provides a perfect way of incorporating individual particle interaction which provides an introduction to relativistic mechanics and quantum mechanics.

Section 5 describes the two particle Hamiltonian elastic collision using the all four particle representation from Section 2 representing both the main particle and incident particle energy. Previously, the incident particle energy was approximated with a combined potential which forced the main particle position satisfying a Schr\"{o}dinger equation so it is logical to extend the same treatment to the incident particle. The last Theorem in the paper shows that both the main and incident particles can be represented by an intertwined set of Schr\"{o}dinger equations. In the two equations one potential is positive and the other is negative but the variances of the two equations are different. The two potentials might be connected as the first equation for the main particle finds a residue potential employing the probability of the incident particle while the incident particle equation depends on the main particle probability density. There is another energy condition required involving both potentials.

\section{Energy/Momentum Conservation}
This Section introduces classical elastic collisions for the main particle of mass $M$ and the incident particle $m$ where typically $(M>>m)$. The first result shows that the fully conserved energy of the collision can be expressed in four terms representing the Hamiltonian by two average speed functions and two osmotic terms. The second result shows that the velocity movement can be solved using two Eigenvalues and two Eigenvectors.

By assumption the particles exchange energy and momentum during the collision hence $v_2\neq v_1$ and $w_2 \neq w_1$ and it is assumed that there are no other processes involved.  For a purely elastic collision the momentum and the energy of the entire system during the collision must be conserved as the particles collide. The energy and momenta for the main particle $v_2,v_1\in\Re^n$ and the incident particle $w_1,w_2\in\Re^n$ have the terms
\begin{subequations}
\label{l:ENERG1P}
\begin{align}
\label{l:ENERG1a}
&\mathcal{P}_1=p_1+q_1=Mv_1+mw_1,
\\
\label{l:ENERG1b}
&\mathcal{P}_2=p_2+q_2=Mv_2+mw_2,
\\
\label{l:ENERG1c}
&\h_{1}=\frac{1}{2}\left(M\abs{v_2}^2+m\abs{w_2}^2\right),
\\
\label{l:ENERG1d}
&\h_{2}=\frac{1}{2}\left(M\abs{v_1}^2+m\abs{w_1}^2\right),
\end{align}
\end{subequations}
then the elastic collision momentum and energy conservation requires that $\mathcal{P}_2=\mathcal{P}_1$ and $\h_{2}=\h_{1}$.

In one dimension it is easy to turn the $v_2, w_2$ into a linear function of the inputs $v_1$ and $w_1$ but in $n$ dimensions there are $n-2$ additional arbitrary terms for the solution. However, following Figure \ref{COLL3} at collision time let $\phi$ be the random unit vector stretching from the center of the main particle to the center of the incident particle exchanging momentum and energy. As the picture suggests for elastic collisions the exchange of momentum and energy has to happen across the unit-vector $\phi$. All collision motion is exchanged along $\phi$ while all motion transfer in the plane orthogonal to $\phi$ is zero for the main and the incident particle. The matter of spin and angular momentum conservation will be addressed later.

\begin{figure}
\centering
\includegraphics[width=4in]{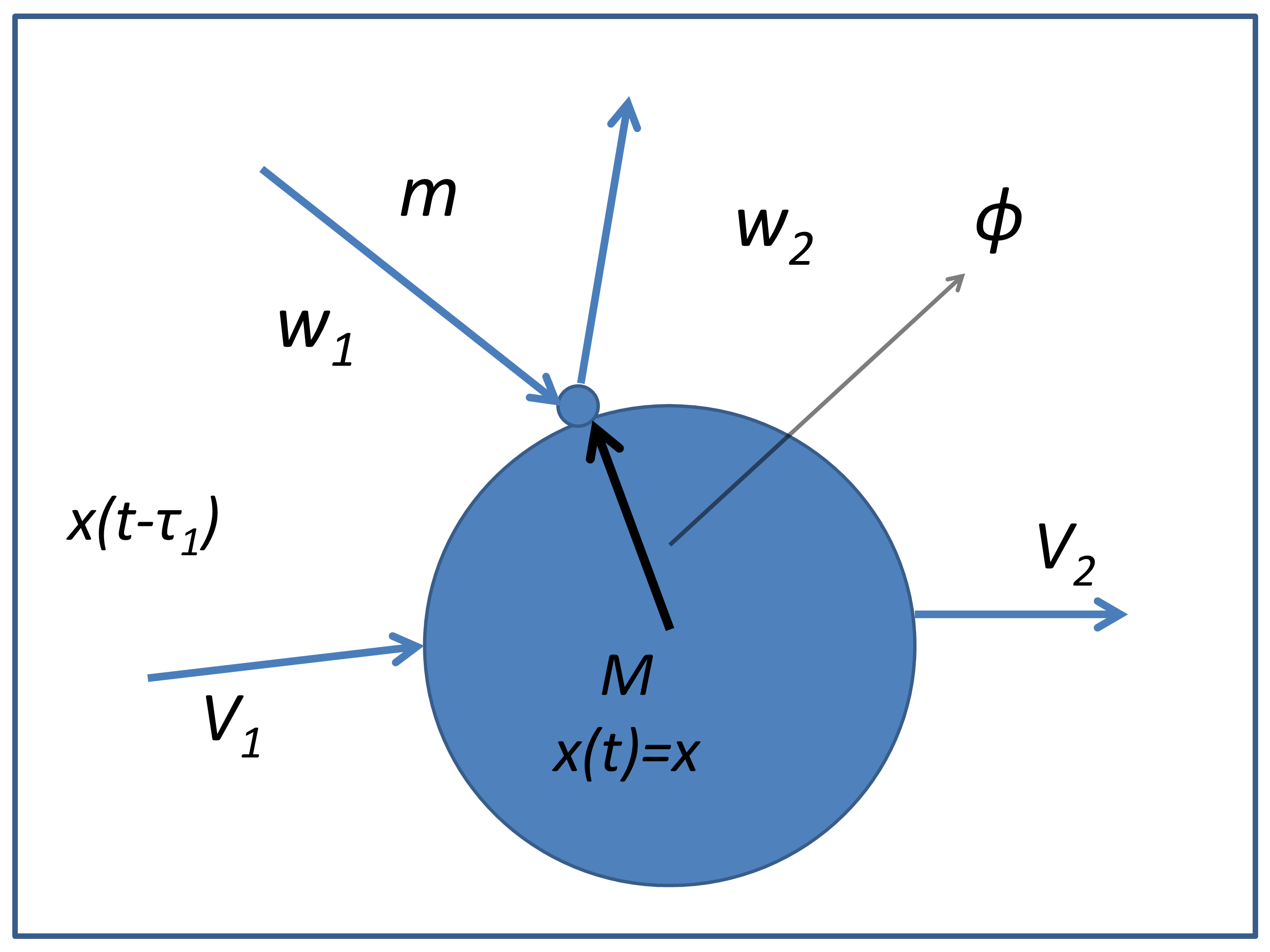}
\caption{Center of gravity line for elastic $M,m$ elastic collision}
\label{COLL3}
\end{figure}

To conveniently arrange for motion along a unit-size vector $\phi$ ($\abs{\phi}^2=\phi^T\phi=1$) consider the projection $P(\phi) = \phi\phi^T$ ($P(\phi)$ is an n-dimensional matrix) which maps all vectors onto the vector $\phi$. So $P(\phi)v_1 =\phi(\phi^Tv_1)$ is a vector project of $v_1$ on $\phi$ while $(I-P(\phi))v_1$ employing $I$ as the (n-dimensional) unit matrix is the remaining vector perpendicular to $\phi$. Clearly $P(\phi)$ is a projection because
\begin{align*}
P(\phi)P(\phi)=(\phi\phi^T)(\phi\phi^T)=\phi(\phi^T\phi)\phi^T=
\phi\phi^T=P(\phi),
\end{align*}
since $\phi$ is a unit vector. Also
\begin{align*}
P(\phi)(1-P(\phi))=(P(\phi)-P(\phi)^2) = (P(\phi)-P(\phi)) = 0,
\end{align*}
so operator $(1-P(\phi))$ is always orthogonal to $P(\phi)$ i.e. $(1-P(\phi))v\perp P(\phi)w$ for all vectors $v,w$.

To implement the conservation law along the $\phi$ axis decompose the $v_1,v_2,w_1$ and $w_2$  vectors as follows
\begin{align*}
v_1&= P(\phi)v_1 + (I-P(\phi))v_1,
\\
v_2&= P(\phi)v_2 + (I-P(\phi))v_2,
\\
w_1&= P(\phi)w_1 + (I-P(\phi))w_1,
\\
w_2&= P(\phi)w_2 + (I-P(\phi))w_2,
\end{align*}
then there is no momentum or energy change in the direction orthogonal to $\phi$ so clearly
\begin{align*}
(1-P(\phi))v_2 & = (1-P(\phi))v_1,
\\
(1-P(\phi))w_2 & = (1-P(\phi))w_1,
\end{align*}
while along $\phi$ the conservation laws for momentum $\mathcal{P}_1, \mathcal{P}_2$ and energy $\h_1, \h_2$ imply
\begin{align*}
MP(\phi)v_1 + mP(\phi))w_1 &= MP(\phi)v_2 + mP(\phi))w_2,
\\
Mv_1^TP(\phi)v_1 + mw_1^TP(\phi))w_1 &= Mv_2^TP(\phi)v_2 + mw_2^TP(\phi))w_2.
\end{align*}

The last equation implies
\begin{align*}
M\phi^Tv_1 + m\phi^Tw_1 &= M\phi^Tv_2 + m\phi^Tw_2,
\\
M(\phi^Tv_1)^2 + m(\phi^Tw_1)^2 &= M(\phi^Tv_2)^2 + m(\phi^Tw_2)^2,
\end{align*}
making use of the $P(\phi)$ projection property again by removing $\phi$ from the equation while the
second equation was created by substitution. Now the conservation problem is one-dimensional and as the Proposition \eqref{l:EQMOT1} shows the solution equals
\begin{align}
\label{App14}
\begin{pmatrix}
\phi^Tv_2 \\
\phi^Tw_2
\end{pmatrix}
&= \begin{pmatrix}
\cos(\theta) & \gamma_m\sin(\theta) \\
\frac{\sin(\theta)}{\gamma_m} & -\cos(\theta)
\end{pmatrix}
\begin{pmatrix}
\phi^Tv_1\\
\phi^Tw_1
\end{pmatrix},
\end{align}
where $\gamma_m^2 = m/M$, $\sin\left(\theta\right)
=2\gamma_m/\left(1+\gamma_m^2\right)$,
$\cos\left(\theta\right)=\left(1-\gamma_m^2\right)/\left(1+\gamma_m^2\right)$.

Notice again that there is no angular momentum modelled here. In principle each particle can have a three-dimensional angular momentum vector that upon collision can change interacting with the linear momentum and energy. Angular momentum assumes that the particles can twist or turn and the collision transfers this to change the post collision energy and momentum particles. Without the angular momentum Appendix A shows that \eqref{App14} can be used to obtain the post-velocities as follows.
\begin{thm}
\label{l:THEOREM3}
In n-dimensions ignoring angular momentum for an elastic collision it must be that $\h_1=\h_2$ and $\mathcal{P}_1=\mathcal{P}_2$ conserving energy and momentum between $v_2$ and $w_2$. Then for the main and incident input velocities $v_1,v_2$ and $w_1,w_2$ it follows that
\begin{align}
\label{l:EQMOT9}
\begin{split}
\begin{pmatrix}
v_2 \\
w_2
\end{pmatrix}
&= \begin{pmatrix}
\cos(\theta) & \gamma_m\sin(\theta) \\
\frac{\sin(\theta)}{\gamma_m} & -\cos(\theta)
\end{pmatrix}
\begin{pmatrix}
v_1\\
w_1
\end{pmatrix}
+
\begin{pmatrix}
\gamma_m \Phi
\\
-\frac{1}{\gamma_m} \Phi
\end{pmatrix}
\\
&=\begin{pmatrix}
I-\gamma_m\sin(\theta)P(\phi) & \gamma_m\sin(\theta)P(\phi) \\
\frac{\sin(\theta)}{\gamma_m}P(\phi) & I-\frac{\sin(\theta)}{\gamma_m}P(\phi)
\end{pmatrix}
\begin{pmatrix}
v_1
\\
w_1
\end{pmatrix},
\end{split}
\end{align}
where $I$ is the (n-dimensional) unit matrix and $\Phi = \sin(\theta) (1-P(\phi))(v_1-w_1)$ is an n-dimensional vector so that $\Phi^T(v_2-v_1)=0$. Here $P(\phi)$ is the usual random projection in the direction of $\phi$ with $\gamma_m^2 = m/M$, $\sin\left(\theta\right) =2\gamma_m/\left(1+\gamma_m^2\right)$, $\cos\left(\theta\right)=\left(1-\gamma_m^2\right)/\left(1+\gamma_m^2\right)$.
\end{thm}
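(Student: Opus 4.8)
The plan is to start from equation \eqref{App14}, which already solves the one-dimensional conservation problem along the collision axis $\phi$, and then reattach the orthogonal components that are unchanged by the collision. First I would write each velocity as the sum of its projection onto $\phi$ and its orthogonal part, i.e. $v_i = P(\phi)v_i + (I-P(\phi))v_i$ and similarly for $w_i$. The orthogonal components satisfy $(I-P(\phi))v_2 = (I-P(\phi))v_1$ and $(I-P(\phi))w_2 = (I-P(\phi))w_1$ as established just before the statement, so the only thing to compute is the projected part. Since $P(\phi)v_i = \phi(\phi^T v_i)$, applying $\phi$ on the left of the scalar relation \eqref{App14} gives
\begin{align*}
P(\phi)v_2 &= \cos(\theta)\,P(\phi)v_1 + \gamma_m\sin(\theta)\,P(\phi)w_1,
\\
P(\phi)w_2 &= \tfrac{\sin(\theta)}{\gamma_m}\,P(\phi)v_1 - \cos(\theta)\,P(\phi)w_1.
\end{align*}

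Next I would add back the orthogonal pieces. For $v_2$ this yields $v_2 = (I-P(\phi))v_1 + \cos(\theta)P(\phi)v_1 + \gamma_m\sin(\theta)P(\phi)w_1$, and the key algebraic simplification is to use $\cos(\theta) = 1 - \gamma_m\sin(\theta)$, which follows directly from the given formulas $\sin(\theta) = 2\gamma_m/(1+\gamma_m^2)$ and $\cos(\theta) = (1-\gamma_m^2)/(1+\gamma_m^2)$ since $1 - 2\gamma_m^2/(1+\gamma_m^2) = (1-\gamma_m^2)/(1+\gamma_m^2)$. Substituting gives $v_2 = v_1 - \gamma_m\sin(\theta)P(\phi)v_1 + \gamma_m\sin(\theta)P(\phi)w_1 = (I - \gamma_m\sin(\theta)P(\phi))v_1 + \gamma_m\sin(\theta)P(\phi)w_1$, which is the first row of the matrix form in the second line of \eqref{l:EQMOT9}. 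An analogous computation for $w_2$, using $-\cos(\theta) = -1 + \tfrac{\sin(\theta)}{\gamma_m}\cdot\gamma_m^2/1$... more carefully, one checks $1 - \tfrac{\sin(\theta)}{\gamma_m} = 1 - 2/(1+\gamma_m^2) = (\gamma_m^2-1)/(1+\gamma_m^2) = -\cos(\theta)$, so $w_2 = \tfrac{\sin(\theta)}{\gamma_m}P(\phi)v_1 + (I - \tfrac{\sin(\theta)}{\gamma_m}P(\phi))w_1$, giving the second row.

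To recover the first line of \eqref{l:EQMOT9}, I would rewrite $I - \gamma_m\sin(\theta)P(\phi) = I - \gamma_m\sin(\theta)P(\phi)$ by splitting off a full $\cos(\theta)$: using $\gamma_m\sin(\theta) = 1-\cos(\theta)$ again, $(I-\gamma_m\sin(\theta)P(\phi))v_1 + \gamma_m\sin(\theta)P(\phi)w_1 = \cos(\theta)v_1 + (1-\cos(\theta))(I-P(\phi))v_1 + \gamma_m\sin(\theta)P(\phi)w_1$. Then I would collect the non-$\cos(\theta)v_1$ terms and show they equal $\gamma_m\sin(\theta)w_1 + \Phi$ where $\Phi = \sin(\theta)(I-P(\phi))(v_1-w_1)$: indeed $\gamma_m\sin(\theta)P(\phi)w_1 = \gamma_m\sin(\theta)w_1 - \gamma_m\sin(\theta)(I-P(\phi))w_1$, and combining with $(1-\cos(\theta))(I-P(\phi))v_1 = \gamma_m\sin(\theta)(I-P(\phi))v_1$ gives $\gamma_m\sin(\theta)w_1 + \gamma_m\sin(\theta)(I-P(\phi))(v_1-w_1)$; the discrepancy between the stated coefficient $\gamma_m$ on $\Phi$ and the $\gamma_m\sin(\theta)$ I am getting should be reconciled by the precise definition of $\Phi$ in the statement (there $\Phi$ already carries a $\sin(\theta)$), so the $\gamma_m\Phi$ in the first line matches. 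A parallel manipulation handles the $w_2$ row with coefficient $-1/\gamma_m$ on $\Phi$. Finally I would verify the orthogonality claim $\Phi^T(v_2-v_1)=0$: from the matrix form $v_2 - v_1 = \gamma_m\sin(\theta)P(\phi)(w_1-v_1)$ lies in the range of $P(\phi)$, while $\Phi$ lies in the range of $I-P(\phi)$, and these subspaces are orthogonal by the projection identity $P(\phi)(I-P(\phi))=0$ proved earlier.

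The computation is essentially bookkeeping, so there is no deep obstacle; the main place to be careful is tracking exactly which factors of $\sin(\theta)$ and $\gamma_m$ are absorbed into the vector $\Phi$ versus left explicit as the scalar multipliers $\gamma_m$ and $-1/\gamma_m$ in the first line of \eqref{l:EQMOT9}, and making sure the two displayed forms of the update map are genuinely identical rather than differing by such a factor. A secondary point worth stating cleanly is that $\phi$ is random (it depends on the collision geometry), so \eqref{l:EQMOT9} is an identity that holds pathwise for each realization of $\phi$, which is all that is needed here; no expectation is taken at this stage.
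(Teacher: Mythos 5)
Your proposal is correct and follows essentially the same route as the paper's Appendix A proof: lift the one-dimensional solution \eqref{App14} via $P(\phi)$, add back the unchanged orthogonal components, and use the identities $1-\cos(\theta)=\gamma_m\sin(\theta)$ and $1+\cos(\theta)=\sin(\theta)/\gamma_m$ to reconcile the two displayed forms, with the factor of $\sin(\theta)$ correctly absorbed into $\Phi$. Your orthogonality argument ($v_2-v_1$ lies in the range of $P(\phi)$ while $\Phi$ lies in the range of $I-P(\phi)$) is a slightly cleaner packaging of the paper's direct cancellation, but the substance is identical.
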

\begin{proof}
For a detailed proof see Appendix \eqref{l:AppendixA}. The brute force proof of equation \eqref{l:EQMOT9} is substituting the expression $\Phi = \sin(\theta)(1-P(\phi))(w_1-v_1)$ into equation \eqref{l:EQMOT9} and then checking the equation straightforwardly. In the proof in Proposition \eqref{l:PROP1} the collision along $\phi$ is established first which is straightforward as this collision is one-dimensional along $\phi$. The remaining orthogonal motions $(I-P(\phi))v_2, (I-P(\phi))v_1, (I-P(\phi))w_2$ and $(I-P(\phi))w_1$ are assumed to be unaltered by the collision as they are parallel to the collision axis $\phi$. Eventually these orthogonal velocities are then substituted into the one-dimensional result to derive \eqref{l:EQMOT9}.
\end{proof}

Notice that the terms $w_1$ and $w_2$ still have a direct random relationship to $v_1$ and $v_2$ as specified by equation \eqref{l:EQMOT9} in Theorem \eqref{l:THEOREM3}. In fact from this equation it is clear that
\begin{align*}
v_2-v_1&=(\cos(\theta)-1)v_1+\gamma_m\sin(\theta)w_1+\gamma_m\Phi
=\gamma_m\sin(\theta)(w_1-v_1)+\gamma_m\Phi,
\\
w_2-w_1&=\left(\frac{\sin(\theta)}{\gamma_m}-1\right)v_1-(\cos(\theta)+1)w_1-\frac{1}{\gamma_m}\Phi
=\frac{\sin(\theta)}{\gamma_m}(v_1-w_1)-\frac{1}{\gamma_m}\Phi,
\end{align*}
so that the momentum conservation emerges $w_2-w_1 = -\frac{v_2 - v_1}{\gamma_m^2}$. Similarly from \eqref{l:EQMOT9} it is clear that
\begin{align*}
v_2+v_1&=(\cos(\theta)+1)v_1+\gamma_m\sin(\theta)w_1+\gamma_m\Phi
=\frac{\sin(\theta)}{\gamma_m}(v_1+\gamma_m^2w_1)+\gamma_m\Phi,
\\
w_2+w_1&=\frac{\sin(\theta)}{\gamma_m}v_1-(\cos(\theta)-1)w_1-\frac{1}{\gamma_m}\Phi
=\frac{\sin(\theta)}{\gamma_m}(v_1+\gamma_m^2w_1)-\frac{1}{\gamma_m}\Phi,
\end{align*}
so that
\begin{align*}
w_2+w_1 & = v_2 + v_1  - \frac{2}{\sin(\theta)}\Phi,
\end{align*}
with $\Phi$ is defined in \eqref{l:EQMOT9}. This equation shows that $v_1,v_2$ will depend on $P(\phi)$ via $\Phi$.

Using the projection $P(\phi)$ for elastic collisions the linear relationship between $v_1$, $w_1$ and $v_2$, $w_2$ due to energy conservation $\h_2=\h_1$ and momentum conservation $\mathcal{P}_2=\mathcal{P}_1$ can be represented as
follows.
\begin{prop}
\label{l:THEOREM4}
Using Theorem \eqref{l:THEOREM3} the elastic collision solution to the four equations in \eqref{l:ENERG1P} can also be represented as
\begin{align}
\label{l:EQMOT1P}
\begin{split}
\begin{pmatrix}
v_2 \\
w_2
\end{pmatrix}
&=
\begin{pmatrix}
v_1 \\
w_1
\end{pmatrix}
+
\begin{pmatrix}
\gamma_m \sin(\theta)P(\phi)(w_1-v_1) \\
- \frac{1}{\gamma_m}\sin(\theta)P(\phi)(w_1-v_1)
\end{pmatrix}
\\
&=
\begin{pmatrix}
v_1 \\
w_1
\end{pmatrix}
+
\begin{pmatrix}
\gamma_m\left(\sin(\theta)(w_1-v_1)-\Phi\right) \\
-\frac{1}{\gamma_m}\left(\sin(\theta)(w_1-v_1)-\Phi\right))
\end{pmatrix},
\end{split}
\end{align}
with $\phi$ showing the (random, unit-size) connection between the particles and $\gamma_m^2=m/M$,
$\sin\left(\theta\right) = 2\gamma_m/\left(1+\gamma_m^2\right)$,
$\cos\left(\theta\right)=\left(1-\gamma_m^2\right)/\left(1+\gamma_m^2\right)$.

\begin{proof}
For a detailed proof see Appendix $\ref{l:AppendixC}$. The first equation is just a rewrite of the second term in \eqref{l:EQMOT9} while the second term is created from the first term using the definition $\Phi$.
\end{proof}
\end{prop}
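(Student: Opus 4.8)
The plan is to deduce \eqref{l:EQMOT1P} directly from the representation \eqref{l:EQMOT9} already established in Theorem \eqref{l:THEOREM3}, so that no fresh appeal to the conservation laws \eqref{l:ENERG1P} is needed. Reading off the third (factored) expression in \eqref{l:EQMOT9}, the main particle output is $v_2 = (I-\gamma_m\sin(\theta)P(\phi))v_1 + \gamma_m\sin(\theta)P(\phi)w_1$; grouping the two $P(\phi)$ terms gives $v_2 = v_1 + \gamma_m\sin(\theta)P(\phi)(w_1-v_1)$, and the analogous manipulation on the second row yields $w_2 = w_1 - \tfrac{1}{\gamma_m}\sin(\theta)P(\phi)(w_1-v_1)$. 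This is exactly the first line of \eqref{l:EQMOT1P}, and it simultaneously re-exhibits the momentum balance $M(v_2-v_1) = -m(w_2-w_1)$ since the two corrections differ by the factor $-\gamma_m^2 = -m/M$.

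For the second line of \eqref{l:EQMOT1P} I would substitute $P(\phi) = I - (I-P(\phi))$ into the correction term: $\sin(\theta)P(\phi)(w_1-v_1) = \sin(\theta)(w_1-v_1) - \sin(\theta)(I-P(\phi))(w_1-v_1) = \sin(\theta)(w_1-v_1) - \Phi$, where $\Phi$ is the orthogonal vector of Theorem \eqref{l:THEOREM3}. Inserting this into both rows produces the claimed form $v_2 = v_1 + \gamma_m(\sin(\theta)(w_1-v_1)-\Phi)$ and $w_2 = w_1 - \tfrac{1}{\gamma_m}(\sin(\theta)(w_1-v_1)-\Phi)$. One extra consistency check I would run is that this agrees with the \emph{first} matrix in \eqref{l:EQMOT9} (the one written with $\cos(\theta)$): using the half-angle identities $\cos(\theta) = (1-\gamma_m^2)/(1+\gamma_m^2)$ and $\sin(\theta) = 2\gamma_m/(1+\gamma_m^2)$ one has $1-\gamma_m\sin(\theta) = \cos(\theta)$ and $1+\cos(\theta) = \sin(\theta)/\gamma_m$, which is precisely what turns the $(I-\gamma_m\sin(\theta)P(\phi))$ form into the $\cos(\theta)$ form, so all three presentations coincide.

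Since Theorem \eqref{l:THEOREM3} is granted, the argument is pure linear algebra and there is no serious obstacle; the only place care is required is the bookkeeping of signs in $\Phi$. The statement of Theorem \eqref{l:THEOREM3} records $\Phi = \sin(\theta)(1-P(\phi))(v_1-w_1)$, whereas the grouping above naturally produces $\sin(\theta)(1-P(\phi))(w_1-v_1)$; I would fix one convention at the outset and carry the sign consistently through both rows, and likewise keep the two half-angle identities in hand so the passage between the $\cos(\theta)$ form and the $P(\phi)$ form of \eqref{l:EQMOT9} is transparent. As an alternative, entirely self-contained route one could instead re-solve the one-dimensional $2\times2$ system for $\phi^Tv_2,\phi^Tw_2$ obtained from $\mathcal{P}_1=\mathcal{P}_2$, $\h_1=\h_2$, then restore the components orthogonal to $\phi$ via $(I-P(\phi))v_2 = (I-P(\phi))v_1$ and $(I-P(\phi))w_2=(I-P(\phi))w_1$; but since that is exactly the content of Theorem \eqref{l:THEOREM3}, the short deduction above is preferable.
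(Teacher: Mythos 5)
Your proof is correct and follows essentially the same route as the paper's Appendix \ref{l:AppendixC}: the paper rebuilds $v_2,w_2$ from the projection decomposition $v_2=P(\phi)v_2+(I-P(\phi))v_1$ together with the one-dimensional collision coefficients and the identities $1-\cos(\theta)=\gamma_m\sin(\theta)$, $1+\cos(\theta)=\sin(\theta)/\gamma_m$, which is just the other face of your regrouping of the factored matrix in \eqref{l:EQMOT9}. Your side remark about the sign of $\Phi$ is a genuine catch rather than a quibble: with the convention $\Phi=\sin(\theta)(1-P(\phi))(v_1-w_1)$ stated in Theorem \eqref{l:THEOREM3}, the substitution $P(\phi)=I-(I-P(\phi))$ yields $v_2-v_1=\gamma_m\bigl(\sin(\theta)(w_1-v_1)+\Phi\bigr)$, so the second line of \eqref{l:EQMOT1P} as printed carries the opposite sign and is only consistent under the alternative convention $\Phi=\sin(\theta)(1-P(\phi))(w_1-v_1)$ that the paper uses elsewhere.
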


The results are suggestive as it indicates the motion of the particle. Looking at equation \eqref{l:EQMOT1P} it is possible to take an expectation in order to derive behavior. For notation assume that $E[.\vert x]$ is the conditional expectation over the backward and forward fluctuations ($d^+ z(t), d^- z(t)$) and the motion of the projection $P(\phi)$ leaving the position of the particle in $x(t) = x$. Presumably the vector $\phi$ is not related to the forward and backward fluctuations.

Take the conditional expectation $E[. \vert x]=E[. \vert x(t)=x]$ over equation \eqref{l:EQMOT1P} then
\begin{align}
\label{l:EQMOT3}
\begin{split}
\begin{pmatrix}
E[v_2 \vert x] \\
E[w_2\vert x]
\end{pmatrix}
&=
\begin{pmatrix}
E[v_1\vert x] \\
E[w_1\vert x]
\end{pmatrix}
+
\begin{pmatrix}
\gamma_m \sin\left(\theta\right) E\left[P(\phi)\right](E[w_1\vert x]-E[v_1\vert x]) \\
- \frac{1}{\gamma_m}\sin\left(\theta\right)E\left[P(\phi)\right](E[w_1\vert x]-E[v_1\vert x])
\end{pmatrix}
\\
&\approx
\begin{pmatrix}
E[v_1\vert x] \\
E[w_1\vert x]
\end{pmatrix}
+
\begin{pmatrix}
\gamma_m \sin\left(\theta\right) (E[w_1\vert x]-E[v_1\vert x]) \\
- \frac{1}{\gamma_m}\sin\left(\theta\right)(E[w_1\vert x]-E[v_1\vert x])
\end{pmatrix}
\approx
\begin{pmatrix}
E[v_1\vert x] \\
E[w_1\vert x]
\end{pmatrix}
\end{split}
\end{align}
if and only if $E[v_1\vert x]=E[w_1\vert x]$ and $E[P(\phi)]=I$. If the mean $E[v_1\vert x]$ is larger than $E[w_1\vert x]$ equation $\eqref{l:EQMOT3}$ will slow it down. If the average motion is not as large as $E[v_1\vert x]$, the particle $M$ is speeded up. Though the motion of $w_1$ is random after a large number of collisions the mean of the main particle will adjust itself to the mean of the incident particle. The behavior of $w_1,w_2$ convergence depends on the standard deviation as provided by $P(\phi)$ and $\gamma_m^2 = m/M$.

The simple equation \eqref{l:EQMOT9} can be solved easily once $P(\phi)$ is established but it is
possible to find an alternative representation.

\begin{thm}
\label{l:THEOREM2}
The requirements on \eqref{l:ENERG1a},\eqref{l:ENERG1b},\eqref{l:ENERG1c} and \eqref{l:ENERG1d} in the form $\h_1(x)=\h_2(x)$ and $\mathcal{P}_1(x)=\mathcal{P}_2(x)$ creates equation \eqref{l:EQMOT9} which can be solved implicitly as
\begin{align}
\label{l:EQMOT4}
\begin{split}
v_1 & =a-\gamma_m^2b,
\\
w_1 & =a+b,
\\
v_2 & =a+\gamma_m^2b^\bot,
\\
w_2 & =a-b^\bot,
\end{split}
\end{align}
where $b^\bot=b + \frac{1}{\gamma_m}\Phi$ and $\abs{b}^2 = \abs{b^\bot}^2$. Here the vector $a$  becomes the average system velocity while $b$ and $b^\bot$ constitute the interactions between the main and incident particles. The vectors $b, b^\bot$ and $a$ are n-dimensional. The only restriction here is that the size of $b$ and $b^\bot$ vectors is conserved in the elastic collision.

\begin{proof}
The details of this Theorem can be found in Appendix \eqref{l:AppendixB}. Equation  $\eqref{l:EQMOT9}$ is linear and can be solved using eigenvalues and eigenvectors of the collision matrix. The energy conservation condition then specifies that $b$ and $b^\bot$ have identical sizes.
\end{proof}
\end{thm}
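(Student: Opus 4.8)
The plan is to diagonalise the $2\times2$ collision matrix in \eqref{l:EQMOT9} and to identify $a$, $b$, $b^\bot$ with the components of $(v_1,w_1)$ and $(v_2,w_2)$ in its eigenbasis. Write $A=\begin{pmatrix}\cos\theta & \gamma_m\sin\theta\\ \frac{\sin\theta}{\gamma_m} & -\cos\theta\end{pmatrix}$. First I would record that $\operatorname{tr}A=0$ and $\det A=-\cos^2\theta-\sin^2\theta=-1$, so the eigenvalues are $\pm1$. Using $\cos\theta-1=-2\gamma_m^2/(1+\gamma_m^2)$, $\cos\theta+1=2/(1+\gamma_m^2)$ and $\gamma_m\sin\theta=2\gamma_m^2/(1+\gamma_m^2)$ one checks at once that $(1,1)^T$ spans the $(+1)$-eigenspace and $(-\gamma_m^2,1)^T$ spans the $(-1)$-eigenspace. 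Since $A$ acts identically on each of the $n$ spatial coordinates, every input admits a unique expansion $\begin{pmatrix}v_1\\ w_1\end{pmatrix}=a\begin{pmatrix}1\\ 1\end{pmatrix}+b\begin{pmatrix}-\gamma_m^2\\ 1\end{pmatrix}$ with $a,b\in\Real^n$, namely $a=(v_1+\gamma_m^2 w_1)/(1+\gamma_m^2)$ and $b=(w_1-v_1)/(1+\gamma_m^2)$; this is the first two lines of \eqref{l:EQMOT4}.

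Next I would push this expansion through \eqref{l:EQMOT9}. The matrix part fixes the $(+1)$-component and flips the $(-1)$-component, producing provisional outputs $(a+\gamma_m^2 b,\ a-b)^T$; adding the inhomogeneous vector $(\gamma_m\Phi,\ -\gamma_m^{-1}\Phi)^T$ and grouping $\gamma_m^2 b+\gamma_m\Phi=\gamma_m^2(b+\gamma_m^{-1}\Phi)$, $-b-\gamma_m^{-1}\Phi=-(b+\gamma_m^{-1}\Phi)$ gives $v_2=a+\gamma_m^2 b^\bot$ and $w_2=a-b^\bot$ with $b^\bot:=b+\gamma_m^{-1}\Phi$ — the last two lines of \eqref{l:EQMOT4}. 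Momentum then comes for free: with $M\gamma_m^2=m$, $\mathcal{P}_1=Mv_1+mw_1=(M+m)a+(m-M\gamma_m^2)b=(M+m)a$ and likewise $\mathcal{P}_2=(M+m)a+(M\gamma_m^2-m)b^\bot=(M+m)a$, independently of $b,b^\bot$.

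For the energy I would expand the squares and use $M\gamma_m^2=m$ once more: this kills the cross terms $\langle a,b\rangle$, $\langle a,b^\bot\rangle$ and leaves the common coefficient $M\gamma_m^4+m=m(1+\gamma_m^2)$, so that
\begin{align*}
2\h_2 &= (M+m)\abs{a}^2+m(1+\gamma_m^2)\abs{b}^2, & 2\h_1 &= (M+m)\abs{a}^2+m(1+\gamma_m^2)\abs{b^\bot}^2 .
\end{align*}
Hence $\h_1=\h_2$ is equivalent to $\abs{b}^2=\abs{b^\bot}^2$, the stated restriction. To confirm this is automatically met by the $b^\bot$ of Theorem \eqref{l:THEOREM3} (and to see that, conversely, allowing $\phi$ to vary realises every such $b^\bot$), substitute $v_1-w_1=-(1+\gamma_m^2)b$ into $\Phi=\sin(\theta)(1-P(\phi))(v_1-w_1)$ and use $\sin(\theta)(1+\gamma_m^2)=2\gamma_m$ to get $\gamma_m^{-1}\Phi=-2(1-P(\phi))b$, whence $b^\bot=(2P(\phi)-1)b$. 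The operator $2P(\phi)-1$ is a symmetric involution, hence orthogonal, so $\abs{b^\bot}=\abs{b}$; moreover, as $\phi$ ranges over the unit sphere $(2P(\phi)-1)b$ sweeps the whole sphere $\{\,x:\abs{x}=\abs{b}\,\}$, so \eqref{l:EQMOT4} with $a,b\in\Real^n$ arbitrary and $b^\bot$ any vector of the same length as $b$ is exactly the full solution set of the conservation equations.

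All of this is routine linear algebra. The only point needing care — and the one I would double-check hardest — is the coordinated use of the three identities $\gamma_m^2=m/M$, $\sin\theta=2\gamma_m/(1+\gamma_m^2)$, $\cos\theta=(1-\gamma_m^2)/(1+\gamma_m^2)$: it is precisely this that makes the $\langle a,b\rangle$ terms drop out of the energy and makes the $\Phi$-correction collapse into a single reflected vector $b^\bot$. Getting that bookkeeping right, rather than any conceptual difficulty, is the main obstacle.
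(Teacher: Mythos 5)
Your proposal is correct and follows essentially the same route as the paper's Appendix~\ref{l:AppendixB}: diagonalise the collision matrix (eigenvalues $\pm1$, eigenvectors $(1,1)^T$ and $(-\gamma_m^2,1)^T$), push the expansion through \eqref{l:EQMOT9} to obtain $b^\bot=b+\gamma_m^{-1}\Phi$, and read off $\abs{b}^2=\abs{b^\bot}^2$ from energy conservation. Your additional observations --- that momentum conservation is automatic in this basis, and that $b^\bot=(2P(\phi)-I)b$ is an orthogonal reflection of $b$ so the length constraint holds identically and the parametrisation sweeps the full sphere $\set{x:\abs{x}=\abs{b}}$ --- go slightly beyond what the paper records but do not alter the argument.
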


Notice from equations \eqref{l:EQMOT4} it follows that
\begin{align}
\label{l:EQMOT5}
\begin{split}
v_1-w_1 &= -(1+\gamma_m^2)b,
\\
v_1+\gamma_m^2w_1 &= (1+\gamma_m^2)a,
\end{split}
\end{align}
and similarly that
\begin{align}
\label{l:EQMOT6}
\begin{split}
v_2-w_2 &= -(1+\gamma_m^2)b^\bot,
\\
v_2+\gamma_m^2w_2 &= (1+\gamma_m^2)a,
\end{split}
\end{align}
and by Appendix \eqref{l:AppendixB} it is clear that $\abs{b}^2 = \abs{b^\bot}^2$. As a result
\begin{align*}
\begin{split}
\begin{split}
\abs{v_1-w_1}^2 &= \abs{v_2-w_2}^2,
\\
v_2+\gamma_m^2w_2 &= v_1+\gamma_m^2w_1,
\end{split}
\end{split}
\end{align*}
or
\begin{align*}
\begin{split}
\begin{split}
\abs{v_1-w_1}^2 &= \abs{v_2-w_2}^2,
\\
a = \frac{Mv_2+mw_2}{M+m} &= \frac{Mv_1+mw_1}{M+m}.
\end{split}
\end{split}
\end{align*}
The second equation shows the conservation of the mean momentum $a$ through the collision while the first equation shows that the noise terms $b, b^\bot$ are conserved through the term
$(1+\gamma_m^2)^2\abs{b}^2=\abs{w_1-v_1}^2=\abs{w_2-v_2}^2=(1+\gamma_m^2)^2\abs{b^\bot}^2$.

This following Theorem is the main result of this paper describing how the energy and momentum conservation for the motion for the heatbath (incident) particle $m, w_1$ (post-collision $m,w_2$) and the main particle $M, v_1$ (post-collision $M,v_2$) can be expressed exclusively in terms of $v_2$ and $v_1$ not using any non-linear terms depending on $P(\phi)$.

\begin{thm}
\label{l:THEOREM5}
Let the pre- and post collision momentum for the two particles equal $\mathcal{P}_1(x)=Mv_1+mw_1$ and $\mathcal{P}_2(x)=Mv_2+mw_2$ and let the pre- and post collision combined energies be $\h_1(x)=\frac{M}{2}\abs{v_1}^2+\frac{m}{2}\abs{w_1}^2$,
$\h_2(x)=\frac{M}{2}\abs{v_2}^2+\frac{m}{2}\abs{w_2}^2$. During the elastic collision the pre-collision momentum equals the post-collision momentum $\mathcal{P}_2(x)=\mathcal{P}_1(x)$ and the pre-collision energy equals the post-collision energy $\h_1(x)=\h_2(x)$. It then follows that the conserved energy of both the main particle and the incident particle combined is the sum of both particles energy expressed in the Nelson measure so that
\begin{align}
\label{l:DRFT4}
\begin{split}
&\frac{2\h_2(x)}{M}=\frac{2\h_1(x)}{M}=
\abs{\frac{v_2+v_1}{2}}^2+\frac{1}{\gamma_m^2}
\abs{\frac{v_2-v_1}{2}}^2 + \gamma_m^2\abs{\frac{w_2+w_1}{2}}^2+
\gamma_m^4\abs{\frac{w_2-w_1}{2}}^2,
\end{split}
\end{align}
with $\gamma_m^2 = m/M$.

\begin{proof}
See Appendix \eqref{l:AppendixE} for a detailed proof. The derivation uses the properties of Theorem
\eqref{l:THEOREM3} using the properties of the vector $\Phi$ as specified in the same Theorem.
\end{proof}
\end{thm}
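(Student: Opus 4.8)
The plan is to reduce the claim to the eigenvector parametrization of Theorem~\eqref{l:THEOREM2}, where all the cancellations become transparent. First note that, since $\h_1(x)=\h_2(x)$ is part of the hypothesis and $2\h_1(x)/M=\abs{v_1}^2+\gamma_m^2\abs{w_1}^2$ while $2\h_2(x)/M=\abs{v_2}^2+\gamma_m^2\abs{w_2}^2$, it suffices to establish the single identity
\begin{align*}
\abs{v_1}^2+\gamma_m^2\abs{w_1}^2
&=\abs{\frac{v_2+v_1}{2}}^2+\frac{1}{\gamma_m^2}\abs{\frac{v_2-v_1}{2}}^2\\
&\quad+\gamma_m^2\abs{\frac{w_2+w_1}{2}}^2+\gamma_m^4\abs{\frac{w_2-w_1}{2}}^2;
\end{align*}
energy conservation then transports it to $2\h_2(x)/M$ verbatim.

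Next I would substitute $v_1=a-\gamma_m^2b$, $w_1=a+b$, $v_2=a+\gamma_m^2b^\bot$, $w_2=a-b^\bot$ from \eqref{l:EQMOT4}, keeping in mind the single constraint $\abs{b}^2=\abs{b^\bot}^2$ furnished by Theorem~\eqref{l:THEOREM2}. The four averaged velocities then read
\begin{align*}
\frac{v_2+v_1}{2}&=a+\frac{\gamma_m^2}{2}(b^\bot-b), & \frac{v_2-v_1}{2}&=\frac{\gamma_m^2}{2}(b^\bot+b),\\
\frac{w_2+w_1}{2}&=a+\frac{1}{2}(b-b^\bot), & \frac{w_2-w_1}{2}&=-\frac{1}{2}(b^\bot+b).
\end{align*}
Inserting these into the right-hand side, the cross terms $\gamma_m^2\,a\cdot(b^\bot-b)$ contributed by the first and third squared norms cancel, and after applying the parallelogram identity $\abs{b^\bot-b}^2+\abs{b^\bot+b}^2=2\abs{b}^2+2\abs{b^\bot}^2$ together with $\abs{b}^2=\abs{b^\bot}^2$, the right-hand side collapses to $(1+\gamma_m^2)\bigl(\abs{a}^2+\gamma_m^2\abs{b}^2\bigr)$. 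Expanding the left-hand side $\abs{a-\gamma_m^2b}^2+\gamma_m^2\abs{a+b}^2$ directly produces the same value $(1+\gamma_m^2)\bigl(\abs{a}^2+\gamma_m^2\abs{b}^2\bigr)$, which closes the argument.

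An equivalent route — the one used in Appendix~\eqref{l:AppendixE} — stays with Theorem~\eqref{l:THEOREM3}: write $v_2\pm v_1$ and $w_2\pm w_1$ in terms of $\sin\theta$, $\cos\theta$, the combination $v_1-w_1$, and the vector $\Phi$, expand the four squared norms, use $\sin\theta=2\gamma_m/(1+\gamma_m^2)$ and $\cos\theta=(1-\gamma_m^2)/(1+\gamma_m^2)$ to reduce every scalar coefficient to a rational function of $\gamma_m^2$, and invoke the orthogonality $\Phi^T(v_2-v_1)=0$ (equivalently $\Phi\perp(b^\bot+b)$) to eliminate the $\Phi$ cross terms. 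I expect the only genuine difficulty in either version to be pure bookkeeping: one must track the four distinct weights $1$, $\gamma_m^{-2}$, $\gamma_m^2$, $\gamma_m^4$ and verify that the $a$-cross-terms genuinely cancel and that no spurious constant survives. The eigenvector substitution makes these cancellations essentially automatic, so that is the route I would take first, falling back on the trigonometric version only to double-check the coefficient algebra.
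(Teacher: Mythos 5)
Your proposal is correct, and the algebra checks out: with $v_1=a-\gamma_m^2b$, $w_1=a+b$, $v_2=a+\gamma_m^2b^\bot$, $w_2=a-b^\bot$ the two $a$-cross-terms $\gamma_m^2a^T(b^\bot-b)$ and $\gamma_m^2a^T(b-b^\bot)$ do cancel, the parallelogram identity with $\abs{b}^2=\abs{b^\bot}^2$ collapses the right-hand side to $(1+\gamma_m^2)\bigl(\abs{a}^2+\gamma_m^2\abs{b}^2\bigr)$, and this matches the direct expansion of $\abs{v_1}^2+\gamma_m^2\abs{w_1}^2$. However, this is a genuinely different route from the paper's Appendix \eqref{l:AppendixE}, which never passes through the eigenvector parametrization: it works directly from equation \eqref{l:REL17}, introduces the half-angle quantities $\sin(\theta/2)=\gamma_m/\sqrt{1+\gamma_m^2}$, $\cos(\theta/2)=1/\sqrt{1+\gamma_m^2}$, writes $(v_2\pm v_1)/2$ and $(w_2\pm w_1)/2$ in terms of auxiliary vectors $A_p,B_p,C_p,D_p$ (with $C_p=-A_p$, $D_p=B_p$), and uses $\Phi^T(v_2-v_1)=0$ and $\Phi^T(w_2-w_1)=0$ to eliminate the $\Phi$ cross terms before observing that the leftover $\Phi^TB_p$ and $\abs{\Phi}^2$ contributions in $E_1$ and $\gamma_m^2E_2$ cancel pairwise — exactly the "trigonometric fallback" you describe. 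Your version is shorter and makes the cancellation structurally transparent, at the cost of importing Theorem \eqref{l:THEOREM2}; note that its proof in Appendix \eqref{l:AppendixB} derives $\abs{b}^2=\abs{b^\bot}^2$ \emph{from} the hypothesis $\h_1=\h_2$ via the same expansion $\tfrac{2}{M}\h_1=(1+\gamma_m^2)(\abs{a}^2+\gamma_m^2\abs{b}^2)$ you use on the left-hand side, so there is no circularity, but it would be cleaner to observe directly that $b^\bot=\tfrac{1}{1+\gamma_m^2}(2P(\phi)-I)(w_1-v_1)$ is a reflection of $b$, whence $\abs{b^\bot}=\abs{b}$ unconditionally. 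The paper's route buys self-containedness relative to Theorem \eqref{l:THEOREM3} alone; yours buys brevity and makes clear that the four weights $1,\gamma_m^{-2},\gamma_m^2,\gamma_m^4$ are precisely the ones forced by the cancellation of the $a$-terms.
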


This expression has many interesting advantages as it expresses the classical energy of the two elastic colliding particles in terms of the Nelson measure applied to the main particle and the incident particle. The result is derived for the Hamiltonian so any results are both classical and quantum mechanical. But it is not easy to see that \eqref{l:DRFT4} can be extended to multiple immediate interactions or can be put in the form of a field if required. Presumably if there are an infinite number of particle additions the overall energy becomes a field and the energy contributions become ever smaller by design to guarantee a finite limit. If this design succeeds the re-normalization problem in quantum mechanics may be successfully addressed. In 1966 Nelson~\cite{ENELSON3} used the first two (unweighed) terms to construct a similar Hamiltonian for the underlying main particle and derived the Schr\"{o}dinger's wave equation employing stochastic processes. He noticed that the one particle Hamiltonian plus potential was satisfied but he did not mention the energy conservation of the main and incident particle.

In much of our analysis Gaussian processes will be employed but the actual form of the velocity distribution is irrelevant both for the main and independent particle so there is no objection using other velocity distributions or employing different probability transfer models. The energy terms in \eqref{l:DRFT4} conveniently separates the first term representing the ongoing motion of the main and incident particle from the weighted second term showing the dependence on the interaction. The average velocity terms relate to the main and incident particle emphasizing the average motion while the difference terms are the osmotic velocity with different weighting. Equation \eqref{l:DRFT4} describes both the averaged velocities $(v_1+v_2)/2$, $(w_1+w_2)/2$ and the osmotic diffusion terms $(v_1-v_2)/2$, $(w_1-w_2)/2$ at the same time. Notice that the weighting of the terms allow different emphasis to both terms. The main term has a large weighting on the dispersal term $v_2-v_1$ suggesting that the energy is concentrated in the motion of the main particle and experiences very little diffusion if $m << M$. Conversely for the incident particle the main weighting is on the motion $w_1+w_2$ of the particle assuming a small amount of translational energy weighted by $\gamma_m^2$ showing that its contribution is bigger. The osmotic energy of the incident particle by $\gamma_m^4$ shows that its contribution is quite large for smaller $m<<M$.

The main particle $M$ collides at time $t$ with the incident particle and then moves on the next collision at time $t+\tau_2$. So between the collision between $M$ and the incident particle $m$ at time $t$ and $t+\tau_2$ the probability density $\rho=\rho(x,t)$ applies and when the next collision occurs this probability density is the initial condition for the new wave function for the main particle. However, the new incident particle may have a different set of initial conditions for a heatbath or its initial condition could have been provided by a stochastic process. In the case of photons moving in an empty container it may be true that the initial conditions prevail over many meters or in deep space they may prevail over many parsecs. But inside a heatbath the main - incident particle collisions are very frequent and the initial conditions for the incident particle will be changed continuously.

The equation can be applied to classical mechanics, electro-dynamics, quantum mechanics and gravity. A convenient form of the energy of the main particle is to imagine that the third and fourth terms in the energy for the incident particle can be represented as a potential. This conveniently creates a two term main particle energy and a conventional potential distribution. Applying this to Newton in the application show changes to the Newtonian law within the solar system due to the diffusion terms. In another example below shows the dispersion of mass assuming that the incident particle shows the Newtonian diffusion drift inside galactic clusters. This case shows how the effective distribution of the star density round our galaxy can be influenced by the random behaviour of the gravity field. The diffusion used here relates to the interaction of the mass diffusing with its environment and does not relate to gravitational waves.

\section{Diffusion and Energy}
This Section represents the main behaviour of the main particle in terms of velocities, momenta, energy modelling employing stochastic processes while the incident particles contributions are represented in the form of a potential. As the main particle $M$ at $x(t)$ driven by Gaussian processes moves through the medium there are collisions between the main and incident particle $m$ at collision times $t\in\set{t_j,j=0,1,...}$ and at points for the main particle $x(t)\in\set{x(t_j),j=0,1,...}$. With this assumption the inter-collision times are defined and a distribution for the inter-collision times is provided.

Assume that the inter-collision times $\tau_1,\tau_2$ are modeled as independent random variables satisfying a second order Gamma distribution then the mean time to the next distribution equals $E[\tau_2]=E[\tau_1]=\overline{\tau}=2/\beta$ with variance $var(t_{j+1}-t_j)=2/\beta^2$. This justifies the pre - and post velocities of the main and incident particles. Let $b^+=b^+(x,t), b^-=b^-(x,t)$ as the forward and backward step for the position of $M$ employing the n-dimensional stochastic process $x(t)\in\Re^n$ and assumes that at collision time $t$ the process is approximated as
\begin{subequations}
\begin{align}
\label{l:DRFTDEF0a}
&d^+x(t)=x(t+\tau_2)-x(t)=
b^+\tau_2+\sigma d^+z(t) = b^+(x(t),t)\tau_2+\sigma d^+z(t),
\\
\label{l:DRFTDEF0b}
&d^-x(t)=x(t)-x(t-\tau_1)=
b^-\tau_1+\sigma d^+z(t)=b^-(x(t),t)\tau_1+\sigma d^-z(t),
\end{align}
\end{subequations} where $d^+z(t),d^-z(t)$ are the independent n-dimensional Gaussian process $z(t)$ so that $d^+z(t) = z(t+\tau_2) -z(t)$ and $d^-z(t) = z(t) - z(t-\tau_1)$. From this process the forward and backward velocities $v_2(t)\in\Re^n$ and $v_1(t)\in\Re^n$ can be approximated as
\begin{gather}
\label{l:DRFTDEF1}
\begin{split}
v_2\left(x(t),t\right)=
\frac{x(t+\tau_2)-x(t)}{\tau_2}=
b^++\frac{1}{\tau_2}\sigma d^+z(t)=b^+(x(t),t)+\frac{1}{\tau_2}\sigma d^+z(t),
\\
v_1\left(x(t),t\right)=
\frac{x(t)-x(t-\tau_1)}{\tau_1}=
b^-+\frac{1}{\tau_1}\sigma d^-z(t)=b^-(x(t),t)+\frac{1}{\tau_1}\sigma d^-z(t),
\end{split}
\end{gather}
which is now properly defined. Usually $n$ is chosen as equal to three most of the time but an n-dimensional representation provides a mild extension. Imagine that the one dimensional time dimension to the next collision has the Gamma density $f_\beta (t)=\beta^2 te^{-\beta t},t \geq 0$. Then the distribution for the particle speed is defined as $d^+z(t)/\tau_2$, $d^-z(t)/\tau_1$ which are proper random variables with a distribution
$f_{\frac{d^+z(t)}{\tau_2}}(v), f_{\frac{d^-z(t)}{\tau_2}}(v)\sim \left(\beta+\frac{v^2}{2}\right)^{-5/2}$.

As it turns out the stochastic differential equations (infinitesimal time steps) \eqref{l:DRFTDEF0a} have many strong and weak solutions but Nelson~\cite{ENELSON1} and Carlen~\cite{ECARL1} showed that the backward increment \eqref{l:DRFTDEF0b} for a continuous stochastic process almost always provides a similar solution to the solution of \eqref{l:DRFTDEF0a}. The stochastic equation is a time-reversed Markovian process with a derived drift and Gaussian increment. Specifically the following applies.

\begin{thm}
\label{l:TIMEERG2}
Let $\rho=\rho(x,t)$ be the n-dimensional distribution for finding the main particle at position $x$ at time $t$ then for a diffusion process $x(t)\in\Re^n$ the forward motion of the particle $x(t+\tau_2)-x(t)$ and the backward step $x(t)- x(t-\tau_1)$ conditional on $x(t)=x$ are given by
\begin{subequations}
\begin{align}
\label{l:DRFT2aa}
d^+ x = x(t+\tau_2)-x(t)&=b^+\tau_2+\sigma d^+z(t),
\\
\label{l:DRFT2ab}
\begin{split}
d^- x = x(t)-x(t-\tau_1) & = b^-\tau_1+\sigma d^-z(t),
\\
b^-&=b^+- \sigma^2\nabla
\log{\rho},
\end{split}
\end{align}
The shocks $d^+z(t)$ and $d^-z$(t) are independent Gaussian increments. The probability density is defined as the solution of
\begin{align}
\label{l:DRFT3aa}
\frac{\partial \rho}{\partial t}&= - \nabla^T\left(b^+\rho\right)
+\frac{\sigma^2}{2}\Delta\rho,
\\
\label{l:DRFT3ab}
\frac{\partial \rho}{\partial t}&= -\nabla^T\left(b^-\rho\right)
- \frac{\sigma^2}{2}\Delta\rho,
\end{align}
so that
\begin{align}
\label{l:DRFT3c}
\frac{\partial \rho}{\partial t}= -
\nabla^T\left(\left(\frac{b^++b^-}{2}\right)\rho\right),
\end{align}
which is the continuity equation. Here $\nabla^T=\left(\frac{\partial}{\partial
x_1},...,\frac{\partial}{\partial x_n}\right)$ and $\Delta=\sum_{i} \frac{\partial^2}{\partial x_i^2}$.
\end{subequations}
\end{thm}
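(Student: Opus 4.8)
The plan is to take \eqref{l:DRFT2aa} as the defining forward increment, derive the forward Fokker--Planck equation \eqref{l:DRFT3aa} from it by It\^o's rule, then obtain the backward drift relation in \eqref{l:DRFT2ab} by a diffusion time-reversal argument, deduce the backward equation \eqref{l:DRFT3ab} as an algebraic consequence, and finally average the two to get the continuity equation \eqref{l:DRFT3c}.

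First I would establish \eqref{l:DRFT3aa}. Working to leading order in the small increment (the second order Gamma inter-collision case follows by conditioning on $\tau_2$ and using $E[\tau_2]=\overline{\tau}$, the higher moments being subleading), apply It\^o's formula to $f(x(t))$ for an arbitrary smooth, compactly supported test function $f$ using \eqref{l:DRFT2aa}: since $d^+z(t)$ is an independent Gaussian increment with $E[d^+z(t)\,(d^+z(t))^T]=\tau_2 I$, one obtains $\frac{d}{dt}E[f(x(t))]=E\big[\nabla^T f\,b^+ + \frac{\sigma^2}{2}\Delta f\big]$. Writing both sides as integrals against $\rho$ and integrating by parts (boundary terms vanish since $f$ has compact support) yields $\int f\,\partial_t\rho = \int f\big(-\nabla^T(b^+\rho)+\frac{\sigma^2}{2}\Delta\rho\big)$ for every such $f$, which is \eqref{l:DRFT3aa}.

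Next I would handle the backward part. Define $b^-:=b^+-\sigma^2\nabla\log\rho$. Using the elementary identity $\rho\,\nabla\log\rho=\nabla\rho$ one has $\nabla^T(b^-\rho)=\nabla^T(b^+\rho)-\sigma^2\Delta\rho$, hence
\begin{align*}
-\nabla^T(b^-\rho)-\frac{\sigma^2}{2}\Delta\rho
&= -\nabla^T(b^+\rho)+\sigma^2\Delta\rho-\frac{\sigma^2}{2}\Delta\rho
\\
&= -\nabla^T(b^+\rho)+\frac{\sigma^2}{2}\Delta\rho
= \frac{\partial\rho}{\partial t},
\end{align*}
so \eqref{l:DRFT3ab} is automatic once the drift is defined this way. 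What still needs justification is that this same $b^-$ is genuinely the infinitesimal backward velocity, i.e.\ that $x(t)-x(t-\tau_1)=b^-\tau_1+\sigma d^-z(t)$ with $d^-z(t)$ an independent Gaussian increment. Here I would invoke the standard time-reversal of a diffusion: form the reverse transition kernel $\widehat p(t,x;s,y)=\rho(s,y)\,p(s,y;t,x)/\rho(t,x)$ from the forward transition density $p$, and compute $\lim_{\tau\to 0^+}\frac{1}{\tau}E[x(t)-x(t-\tau)\mid x(t)=x]$ together with the corresponding second moment; using the forward Kolmogorov equation for $p$ and a local Gaussian expansion, the mean converges to $b^+(x,t)-\sigma^2\nabla\log\rho(x,t)$ and the covariance to $\sigma^2 I$, so the reversed process is again a diffusion with the stated coefficients, as in Nelson~\cite{ENELSON1} and Carlen~\cite{ECARL1}. \emph{This time-reversal step is the real obstacle} — it requires the regularity hypotheses ($\rho>0$, sufficient smoothness and integrability of $\rho$ and $b^+$) under which the reversal is legitimate and $d^-z(t)$ is Gaussian; the rest is bookkeeping.

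Finally, adding \eqref{l:DRFT3aa} and \eqref{l:DRFT3ab} and dividing by two, the $\pm\frac{\sigma^2}{2}\Delta\rho$ terms cancel and one is left with $\partial_t\rho=-\nabla^T\big(\frac{b^++b^-}{2}\rho\big)$, which is \eqref{l:DRFT3c}; this is the ordinary continuity equation with current velocity $\frac{b^++b^-}{2}=b^+-\frac{\sigma^2}{2}\nabla\log\rho$, which has no second-order part, as expected.
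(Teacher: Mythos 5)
Your proposal is sound in outline, but it is worth knowing that the paper does not actually prove this theorem at all: its entire ``proof'' is the single sentence citing Carlen~\cite{ECARL1} for existence of solutions when the forward drift comes from a suitable potential. So you have supplied an argument where the paper supplies only a pointer. Your route is the standard one and your bookkeeping is correct: the forward Fokker--Planck equation \eqref{l:DRFT3aa} follows from It\^o's formula tested against compactly supported $f$ plus integration by parts; the identity $\nabla^T(b^-\rho)=\nabla^T(b^+\rho)-\sigma^2\Delta\rho$ makes \eqref{l:DRFT3ab} an algebraic consequence of \eqref{l:DRFT3aa} once $b^-:=b^+-\sigma^2\nabla\log\rho$ is adopted as a definition; and averaging gives \eqref{l:DRFT3c}. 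You also correctly isolate the one genuinely nontrivial step --- showing that this $b^-$ is the drift of the time-reversed process and that $d^-z(t)$ is again Gaussian --- which requires the reverse-kernel construction $\widehat p(t,x;s,y)=\rho(s,y)p(s,y;t,x)/\rho(t,x)$ together with positivity and regularity of $\rho$; that is precisely the content the paper is silently outsourcing to Nelson and Carlen. Two caveats to flag if you write this up fully: first, the theorem as stated is for finite, Gamma-distributed inter-collision times, whereas the time-reversal limit you describe is infinitesimal, so your remark that the finite-$\tau$ case ``follows by conditioning'' deserves an explicit error estimate (the paper never addresses this mismatch either); second, the independence of $d^+z(t)$ and $d^-z(t)$ asserted in the statement holds only in the appropriate conditional sense (forward increments independent of the past, backward increments independent of the future), which your reversal argument delivers but which you should state precisely rather than as unconditional independence.
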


\begin{proof}
Carlen ~\cite{ECARL1} shows a large number of solutions for the case where forward drift can be associated with specific types of potentials.
\end{proof}

Using backward and forward representation of the motion to prove the approximations \eqref{l:DRFTDEF1} of the main particle using the distribution to the next collision the following is now straightforward.
\begin{lem}
\label{l:ENERG8}
Let the n-dimensional position process $x(t)$ and the inter-collision times be second order Gamma
distributed. Then at the collision time $t$ the velocities can be measured as follows
\begin{align}
\label{l:DRFT6}
\begin{split}
E[v_2(t)\vert x(t)] & = b^+, E[v_1(t)\vert x(t)]=b^-
\\
\frac{M}{2}E[\abs{v_2(t)}^2\vert x(t)] & = \frac{M}{2}\abs{b^+}^2+ \frac{nM\sigma^2}{\overline{\tau}}
\\
\frac{M}{2}E[\abs{v_1(t)}^2\vert x(t)] & = \frac{M}{2}\abs{b^-}^2+\frac{nM\sigma^2}{\overline{\tau}}
\end{split}
\end{align}
where $\overline{\tau}=2/\beta$ is the mean
inter-particle collision time.

\begin{proof}
The expectations are straightforward as the n-dimensional terms on
the righthand side of the second and third equation in
\eqref{l:DRFT2ab} are finite curtesy of the fact that
\begin{gather*}
E\left[\abs{\frac{\Delta^+z}{\tau_2}}^2\right]=
nE\left[\frac{1}{\tau_2}\right]=\frac{2n}{\overline{\tau}},
\\
E\left[\abs{\frac{\Delta^-z}{\tau_1}}^2\right]=
nE\left[\frac{1}{\tau_1}\right]=\frac{2n}{\overline{\tau}}.
\end{gather*}
The equality in the first equation in (\ref{l:DRFT6})
is due to the fact that
\begin{gather*}
\int_{-\infty}^{\infty}(b^+-b^-)\rho dx=
\int_{-\infty}^{\infty}\left(\frac{\rho_x}{\rho}\right)\rho dx=0,
\end{gather*}
so that the average velocity of the main particle is the same
whether a forward or backward view is developed.
\end{proof}
\end{lem}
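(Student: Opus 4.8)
The plan is to reduce everything to the increment representations already established and to two elementary facts: the second-order Gamma law has a finite reciprocal moment, and the Gaussian shocks are mean-zero and independent of the collision clock. Concretely, I would start from \eqref{l:DRFTDEF1} (equivalently, from Theorem \eqref{l:TIMEERG2}): conditional on $x(t)=x$,
\begin{align*}
v_2(t)=b^{+}+\frac{\sigma}{\tau_2}\,d^{+}z(t),\qquad v_1(t)=b^{-}+\frac{\sigma}{\tau_1}\,d^{-}z(t),
\end{align*}
where, given $\tau_2$, the increment $d^{+}z(t)$ is $N(0,\tau_2 I)$ (and likewise $d^{-}z(t)$ given $\tau_1$), and where $\tau_1,\tau_2$ are independent of the shocks. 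The one nontrivial ingredient is that Theorem \eqref{l:TIMEERG2} licenses treating the \emph{backward} increment $x(t)-x(t-\tau_1)$ as a genuine Gaussian increment with drift $b^{-}$ after conditioning on the present position.

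Next I would record the reciprocal moment of the inter-collision time. For $f_\beta(t)=\beta^2 t e^{-\beta t}$,
\begin{align*}
E\left[\frac{1}{\tau}\right]=\int_0^{\infty}\frac{1}{t}\,\beta^2 t\,e^{-\beta t}\,dt=\beta^2\int_0^{\infty}e^{-\beta t}\,dt=\beta=\frac{2}{\overline{\tau}},
\end{align*}
which is finite precisely because the Gamma has order two; for an exponential density this integral diverges, so the second-order hypothesis is exactly what keeps the kinetic energies finite. This is also the source of the $2n/\overline{\tau}$ factors in the displayed estimates of $\abs{d^{\pm}z(t)/\tau}^2$.

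Then I would push $E[\,\cdot\mid x(t)=x]$ through the two representations. The shock terms contribute $\sigma E[\tau_2^{-1}E[d^{+}z(t)\mid\tau_2]]=0$ and likewise for the backward one, giving $E[v_2(t)\mid x(t)]=b^{+}$ and $E[v_1(t)\mid x(t)]=b^{-}$. For the second moments, expanding $\abs{v_2(t)}^2=\abs{b^{+}}^2+\tfrac{2\sigma}{\tau_2}(b^{+})^{T}d^{+}z(t)+\tfrac{\sigma^2}{\tau_2^2}\abs{d^{+}z(t)}^2$, the cross term has zero conditional mean, while
\begin{align*}
E\left[\frac{\sigma^2}{\tau_2^2}\abs{d^{+}z(t)}^2\right]=\sigma^2 E\left[\frac{1}{\tau_2^2}\,E\big[\abs{d^{+}z(t)}^2\mid\tau_2\big]\right]=n\sigma^2 E\left[\frac{1}{\tau_2}\right]=\frac{2n\sigma^2}{\overline{\tau}}.
\end{align*}
Multiplying by $M/2$ gives $\frac{M}{2}E[\abs{v_2(t)}^2\mid x(t)]=\frac{M}{2}\abs{b^{+}}^2+\frac{nM\sigma^2}{\overline{\tau}}$, and the identical computation with $(\tau_1,d^{-}z,b^{-})$ produces the $v_1$ line. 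Finally $\int(b^{+}-b^{-})\rho\,dx=\int\nabla\log\rho\,\,\rho\,dx=\int\nabla\rho\,dx=0$ by \eqref{l:DRFT2ab} and vanishing of boundary terms, which records that the spatially averaged forward and backward drifts agree.

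The algebra is routine; the main obstacle is the probabilistic bookkeeping behind the backward line, namely verifying that conditioning on $x(t)=x$ leaves $d^{-}z(t)$ Gaussian and independent of $\tau_1$ and that $b^{-}$ is a bona fide function of the present state. This is exactly the Nelson--Carlen time-reversal content packaged as Theorem \eqref{l:TIMEERG2}, so here it is invoked rather than reproved; without it the backward second-moment step would require a separate justification.
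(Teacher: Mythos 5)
Your proposal is correct and follows essentially the same route as the paper's proof: condition on $x(t)=x$, use the mean-zero Gaussian shocks for the first moments, and use $E[1/\tau]=\beta=2/\overline{\tau}$ for the second-order Gamma law to evaluate the quadratic shock terms, yielding the $nM\sigma^2/\overline{\tau}$ contribution. You merely fill in the details the paper leaves implicit (the explicit reciprocal-moment integral and the expansion of the square), and your observation that the second-order hypothesis is exactly what keeps $E[1/\tau]$ finite is a worthwhile addition.
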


This theorem involving the backward equation \eqref{l:DRFT3ab} and its relationship to the forward eqaution can be extended to properly differentiable functions $g=g(x(t),t)$ as well so
\begin{thm}
Assume that $g=g(x(t),t)$ is a sufficiently differential function (n-dimensional) then using Theorem
\eqref{l:TIMEERG2} and Ito's theorem the function $g=g(x(t),t)$ with $x(t)\in\Re^n$ equals
\begin{subequations}
\begin{align}
\label{l:DRFT4aa}
\begin{split}
d^+ g & = g(x(t+\tau_2),t+\tau_2)-g(x(t),t) = \left(g_t + (\nabla g)^Tb^++\frac{\sigma^2}{2}\Delta g\right)\tau_2 + \sigma(\nabla g)^T d^+ z(t)
\\
& = \left(g_t + Dg\right)\tau_2 + \sigma(\nabla g)^T d^+ z(t)
\end{split}
\\
\begin{split}
d^- g & = g(x(t),t)-g(x(t-\tau_1),t-\tau_1) = \left(g_t + (\nabla g)^Tb^--\frac{\sigma^2}{2}\Delta g\right)\tau_1 + \sigma(\nabla g)^T d^- z(t)
\\
& = \left(g_t + D^*g\right)\tau_1 + \sigma(\nabla g)^T d^- z(t),
\end{split}
\end{align}
where $\rho(x,t)$ is the probability density for finding the main particle at position $x$ at time $t$ and $Dg=(\nabla g)^Tb^++\frac{\sigma^2}{2}\Delta g$, $D^*g=(\nabla g)^Tb^--\frac{\sigma^2}{2}\Delta g$. Again the shocks $d^+z(t)$ and $d^-z(t)$ are independent Gaussian increments and
$\nabla=\left(\frac{\partial}{\partial x_1},...,\frac{\partial}{\partial x_n}\right)$ and
$\Delta=\sum_{i}\frac{\partial^2}{\partial x_i^2}$.
\end{subequations}
\end{thm}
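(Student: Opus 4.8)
The plan is to apply Itô's formula twice --- once forward in time to the stochastic differential \eqref{l:DRFT2aa} and once backward in time to \eqref{l:DRFT2ab} --- and then linearize the resulting integrals over the short (random) inter-collision intervals $\tau_2$ and $\tau_1$, exactly in the spirit of how $\tau_1,\tau_2$ have been treated throughout Section 2.

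First I would handle the forward step. On the interval $[t,t+\tau_2]$ the main particle obeys $dx(s)=b^+(x(s),s)\,ds+\sigma\,dz(s)$ with $z$ the $n$-dimensional Brownian motion, so the classical Itô formula applied to $g(x(s),s)$ gives
\begin{align*}
g(x(t+\tau_2),t+\tau_2)-g(x(t),t)=\int_t^{t+\tau_2}\left(g_s+(\nabla g)^Tb^++\frac{\sigma^2}{2}\Delta g\right)ds+\sigma\int_t^{t+\tau_2}(\nabla g)^T\,dz(s).
\end{align*}
Conditioning on $x(t)=x$ and treating $g_t$, $\nabla g$, $\Delta g$ and $b^+$ as essentially constant over the short step collapses the drift integral to $\left(g_t+(\nabla g)^Tb^++\frac{\sigma^2}{2}\Delta g\right)\tau_2=(g_t+Dg)\tau_2$ and the stochastic integral to $\sigma(\nabla g)^T\,d^+z(t)$, which is \eqref{l:DRFT4aa}.

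Second I would handle the backward step, and here the key input is Theorem \eqref{l:TIMEERG2}: run from $x(t)=x$ in the reverse time direction, the process is again a diffusion, $d^-x=b^-\tau_1+\sigma\,d^-z(t)$ with $b^-=b^+-\sigma^2\nabla\log\rho$ and $d^-z(t)$ a Gaussian increment adapted to the reversed filtration, but the second-order (quadratic variation) term now enters the Itô expansion with the \emph{opposite} sign --- this is Nelson's backward Itô formula, the same sign reversal already visible between \eqref{l:DRFT3aa} and \eqref{l:DRFT3ab}. Applying it to $g$ and linearizing over $\tau_1$ exactly as above produces
\begin{align*}
g(x(t),t)-g(x(t-\tau_1),t-\tau_1)=\left(g_t+(\nabla g)^Tb^--\frac{\sigma^2}{2}\Delta g\right)\tau_1+\sigma(\nabla g)^T\,d^-z(t)=(g_t+D^*g)\tau_1+\sigma(\nabla g)^T\,d^-z(t),
\end{align*}
which is the second displayed identity.

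The main obstacle is entirely on the backward side: one must be sure that $d^-z(t)$ is genuinely a martingale increment with respect to the future-generated (reversed) filtration and that the sign flip on $\frac{\sigma^2}{2}\Delta g$ is the correct consequence of time-reversing a diffusion with density $\rho$. This is precisely the structure established by Nelson and Carlen and already imported in Theorem \eqref{l:TIMEERG2}, so the cleanest route is to cite that backward representation and then observe that the remaining work --- substituting into the (forward or backward) Itô formula and linearizing in the small random time step --- is routine and parallels the forward computation verbatim.
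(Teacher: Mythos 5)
Your proposal is correct and follows essentially the same route as the paper, whose own proof is little more than a citation of Nelson and Carlen plus the remark that $Dg$ and $D^*g$ come from applying It\^{o}'s lemma forward and backward; you simply supply the details (the explicit forward It\^{o} expansion, the linearization over the short random steps $\tau_2,\tau_1$, and the sign reversal on $\tfrac{\sigma^2}{2}\Delta g$ coming from the time-reversed diffusion of Theorem \eqref{l:TIMEERG2}) that the paper leaves implicit. No gap.
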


\begin{proof}
The results from Nelson ~\cite{ENELSON1} and Carlen ~\cite{ECARL1} are for where $\tau_1, \tau_2$ approaches zero so that \eqref{l:DRFT2aa} becomes a stochastic differential equation. If that is true equation \eqref{l:DRFT2ab} is also a stochastic differential equation and has a solution. Notice that the expressions for $Dg$ and $D^*g$ are the result of applying Ito's lemma.
\end{proof}

Notice that the discrete collision approximation \eqref{l:DRFT2aa} allows for a pre - and post - collision momentum and energy \eqref{l:DRFT6}. If the stochastic process is applied to molecular applications at room temperatures then $\abs{b^-}^2 >> n\sigma^2/\overline{\tau}$ while $\abs{b^+}^2 \approx n\sigma^2/\overline{\tau}$ only at lower temperatures. However in other applications like astrophysics or economics it remains to be seen what the drift energy is in proportion to the diffusion energy. In the next Section the momentum solution for the collision is provided and the minimum energy solution for both particle is found.

\section{The Schr\"{o}dinger equation}

This Section substitutes the assumptions of the stochastic equations introduced in the previous Section into the energy expression for the main and incident particle. This Section derives the Schr\"{o}dinger equation for the main particle to simplify the terms in \eqref{l:DRFT4} assuming that the movement $w_1, w_2$ of the incident particle can be cast into a potential.

Then define $\gamma_m^2f_{approx} \approx \gamma_m^2\abs{w_2+w_1}^2 /4 \approx \gamma_m^2 \abs{\frac{dw}{dt}}^2$ which is an independent non-random energy term (divided by $M$) possibly depending on time and position $x$. First define the approximate potential
\begin{align}
\label{l:DRFT18}
f_{approx}=f_{approx}(x,t)= E\bigg[\abs{\frac{w_2+w_1}{2}}^2+\gamma_m^2\abs{\frac{w_2-w_1}{2}}^2
\bigg|x(t)=x\bigg],
\end{align}
where the conditional expectation $E[.\vert x]$ averages over the $w_2,w_1$ fluctuations (changes in the state change) but not the state $x$ itself. The expression in Theorem \eqref{l:THEOREM5} can now be approximated as follows.

\begin{cor}
\label{CORR1}
Using an expected value for \eqref{l:DRFT4} the total energy $\h_1$ or $\h_2$ can now be written as
\begin{align}
\label{l:DRFT13}
\h_1 = \h_2 = \frac{M}{2} \left( \abs{\frac{v_2+v_1}{2}}^2+\frac{1}{\gamma_m^2}
\abs{\frac{v_2-v_1}{2}}^2\right) + V,
\end{align}
where $V=V(x,t)$ is a potential defined as
\begin{align}
\label{l:DRFT16}
V=V(x,t)=\frac{M\gamma_m^2}{2} f_{approx}.
\end{align}
The potential $V$ ignores the actual behaviour and osmotic behaviour of the incident particle $w_1,w_2$ by substituting a straightforward function.
\end{cor}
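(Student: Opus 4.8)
The plan is to read Corollary~\ref{CORR1} as a bookkeeping consequence of Theorem~\ref{l:THEOREM5}: equation~\eqref{l:DRFT4} already exhibits the conserved energy as a sum of four Nelson-type terms, two built from the main-particle velocities $v_1,v_2$ and two built from the incident-particle velocities $w_1,w_2$, so all that remains is to collapse the latter pair into a single function of $(x,t)$. First I would take \eqref{l:DRFT4}, multiply through by $M/2$ (using $\h_1=\h_2$ from the hypothesis), and split the right-hand side into the main-particle bracket
\[
\frac{M}{2}\left(\abs{\frac{v_2+v_1}{2}}^2+\frac{1}{\gamma_m^2}\abs{\frac{v_2-v_1}{2}}^2\right)
\]
and the incident-particle bracket $\frac{M}{2}\bigl(\gamma_m^2\abs{(w_2+w_1)/2}^2+\gamma_m^4\abs{(w_2-w_1)/2}^2\bigr)$.

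Next I would apply the conditional expectation $E[\,\cdot\mid x(t)=x]$ to the incident-particle bracket only; this is the actual content of the corollary's approximation, namely that the fluctuating incident energy is replaced by its conditional mean over the $w_1,w_2$ shocks while the current position $x$ is held fixed. By the definition \eqref{l:DRFT18} of $f_{approx}$ one has $E[\gamma_m^2\abs{(w_2+w_1)/2}^2+\gamma_m^4\abs{(w_2-w_1)/2}^2\mid x]=\gamma_m^2 f_{approx}(x,t)$, so that bracket becomes $\tfrac{M\gamma_m^2}{2}f_{approx}$, which is exactly the potential $V$ of \eqref{l:DRFT16}; substituting back yields \eqref{l:DRFT13}. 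Well-definedness of $f_{approx}$ is not an obstacle: by Theorem~\ref{l:THEOREM3} the velocities $w_1,w_2$ are the image of $(v_1,w_1)$ under the bounded linear map \eqref{l:EQMOT9}, and the inputs have finite second moments exactly as used in Lemma~\ref{l:ENERG8}, so the conditional expectation in \eqref{l:DRFT18} exists.

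The one genuinely delicate point -- the main obstacle, such as it is -- is interpretational rather than computational: \eqref{l:DRFT13} is not a pathwise identity but holds only after the incident-particle energy has been replaced by its conditional average, so the argument must state explicitly that the random residual $\gamma_m^2\abs{(w_2+w_1)/2}^2+\gamma_m^4\abs{(w_2-w_1)/2}^2-\gamma_m^2 f_{approx}$ is being discarded. I would justify this by noting that only the conditional-mean (drift-level) energy balance is used downstream in deriving Schr\"{o}dinger's equation, and that the replacement is in fact exact whenever the $w$-shocks decorrelate from the event $x(t)=x$; I would also remark that the explicit $t$-dependence that $f_{approx}(x,t)$, and hence $V(x,t)$, is allowed to carry is precisely what later produces the extra term in the Schr\"{o}dinger equation flagged in the introduction for time-dependent potentials.
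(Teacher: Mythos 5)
Your proposal is correct and follows exactly the route the paper intends: the paper offers no explicit proof of Corollary~\ref{CORR1}, treating it as immediate from multiplying \eqref{l:DRFT4} by $M/2$ and replacing the incident-particle bracket by the conditional expectation \eqref{l:DRFT18}, which is precisely your computation. Your added remarks that the identity holds only after the incident energy is replaced by its conditional mean (so \eqref{l:DRFT13} is an approximation rather than a pathwise identity) make explicit a caveat the paper leaves implicit in the phrase ``using an expected value.''
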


This is the form of the Nelson measure adopted by Edward Nelson in 1966 in ~\cite{ENELSON1} except for the mass weighting on the second term in the main particle. Notice that the potential $V$ does not depend on the mass $M$ explicitly as $M\gamma_m^2=m$ while only the random terms depend on $w_2$, $w_1$ and $\gamma_m$. The agenda for further developments is now to employ the collision representation of the Markovian process describing the path of the main particle in equation \eqref{l:DRFTDEF1} and substitute the respective forward and backward velocities into \eqref{l:DRFT13} using the energy term $V$. The total energy as a function of time and position using the forward and backward random processes can be investigated.

Using Theorem \eqref{l:THEOREM5}, definition \eqref{l:DRFTDEF1} and equation \eqref{l:DRFT13} it is
possible to provide more detail on the precise form of the total energy $\h_1=\h_2$.
\begin{prop}
\label{l:HAMILT12}
Let the three-dimensional forward drift $b^+=b^+(x,t)$ and backward drift $b^-=b^-(x,t)$ of the (three-dimensional) stochastic process $x(t)$ with density $\rho=\rho(x,t)$ satisfying Theorem \eqref{l:TIMEERG2} and assume that $\gamma_m^2$ is small and that the variance constant $\sigma$ is a constant. Then using the potential $V=V(x,t)$ the expectation of the pre-collision energy $E[\h_1]$ equals the post-collision average energy $E[\h_2]$ so that in three dimensions
\begin{align}
\label{l:HAMILT1}
\begin{split}
E\left[\h_1\right] & = E\left[\h_2\right] =
\\
&\frac{M}{2}\left(E\left[\abs{\frac{b^++b^-}{2}}^2\right]
+\frac{\eta^2}{4M^2} E\left[\abs{\frac{1}{\rho}\nabla
\rho}^2\right]\right)
+\int \rho Vdx +
\frac{3\eta}{\sin{(\theta)}\overline{\tau}},
\end{split}
\end{align}
where
\begin{align*}
\eta & = M\sigma^2/\gamma_m,
\\
E[V] & = \int \rho Vdx.
\end{align*}
Here $E[.]$ shows the density of $\rho$ for state space x taking into account the forward $b^+$ and backward $b^-$ mean drifts.
\begin{proof}
The terms that requires an explanation are the expectation of the kinetic energy term, the osmotic term and the resident constant variance term at the end of equation \eqref{l:HAMILT1}. Using the expressions for the main particle backward and forward drift processes defined as the expectation of \eqref{l:DRFT13} becomes
\begin{align*}
E\left[\abs{\frac{v_2+v_1}{2}}^2+ \frac{1}{\gamma_m^2}\abs{\frac{v_2-v_1}{2}}^2\right]
= &E\left[\abs{\frac{b^++b^-}{2}+ \frac{1}{2}\sigma\frac{\Delta^+z}{\tau_2}+\frac{1}{2}
\sigma\frac{\Delta^-z}{\tau_1} }^2\right]
\\
& +\frac{1}{\gamma_m^2} E\left[\abs{\frac{b^+-b^-}{2}
+\frac{1}{2}\sigma\frac{\Delta^+z}{\tau_2}-\frac{1}{2}
\sigma\frac{\Delta^-z}{\tau_1} }^2\right],
\end{align*}
which can be simplified to
\begin{align}
\label{l:HAMILT3}
\begin{split}
E&\left[\abs{\frac{b^++b^-}{2}}^2\right]
+\frac{1}{\gamma_m^2}
E\left[\abs{\frac{b^+-b^-}{2}}^2\right]
+\frac{3\sigma^2}{2}\left(1+\frac{1}{\gamma_m^2}
\right)E\left[\frac{1}{\tau}\right]
\\
=&E\left[\abs{\frac{b^++b^-}{2}}^2\right]
+\frac{\sigma^4}{4\gamma_m^2} E\left[\abs{\frac{1}{\rho}\nabla
\rho}^2\right]
+\frac{6\sigma^2}{\gamma_m\sin(\theta)\overline{\tau}},
\end{split}
\end{align}
assuming that $E[1/\tau] = 2 / \overline{\tau}$ and using the definition for $\sin(\theta)=2\gamma_m/(1+\gamma_m^2)$ provided above. Again here $\rho(x,t)$ is the probability density for $x(t)$ used in Theorem \eqref{l:TIMEERG2} together with the (three-dimensional) forward and backward drifts $b^+=b^+\left(x,t\right)$ and $b^-=b^-\left(x,t\right)$.

Clearly now
\begin{align*}
E[\h_1] & = E[\h_2] =
\frac{M}{2}E\left[\abs{\frac{v_2+v_1}{2}}^2+ \frac{1}{\gamma_m^2}
\abs{\frac{v_2-v_1}{2}}^2\right]
\\
&=\frac{M}{2}\left(E\left[\abs{\frac{b^++b^-}{2}}^2\right]
+\frac{\sigma^4}{4\gamma_m^2} E\left[\abs{\frac{1}{\rho}\nabla
\rho}^2\right]\right)
+\frac{3M\sigma^2}{\gamma_m\sin(\theta)\overline{\tau}}
\\
&=\frac{M}{2}\left(E\left[\abs{\frac{b^++b^-}{2}}^2\right]
+\frac{\eta^2}{4M^2} E\left[\abs{\frac{1}{\rho}\nabla
\rho}^2\right]\right)
+\frac{3\eta}{\sin(\theta)\overline{\tau}},
\end{align*}
which reconciles the terms and proves the Theorem.
\end{proof}
\end{prop}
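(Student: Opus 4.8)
The plan is to start from the Nelson--measure form \eqref{l:DRFT13} of the conserved energy $\h_1=\h_2$ given by Corollary \eqref{CORR1}, insert into it the forward and backward velocity representations \eqref{l:DRFTDEF1} of the stochastic main particle, and then compute the conditional expectation $E[\,\cdot\mid x(t)=x]$ over the Gaussian shocks $d^+z(t),d^-z(t)$ and the random inter-collision times $\tau_1,\tau_2$, treating the drifts $b^+=b^+(x,t)$ and $b^-=b^-(x,t)$ as deterministic under this conditioning. The term $\int\rho V\,dx$ then comes directly from averaging the potential $V$ in \eqref{l:DRFT13}, so the real work is confined to the kinetic and osmotic part $\frac{M}{2}\bigl(\abs{\frac{v_2+v_1}{2}}^2+\frac{1}{\gamma_m^2}\abs{\frac{v_2-v_1}{2}}^2\bigr)$.

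Concretely, writing $\frac{v_2+v_1}{2}=\frac{b^++b^-}{2}+\frac{\sigma}{2}\frac{d^+z}{\tau_2}+\frac{\sigma}{2}\frac{d^-z}{\tau_1}$ and $\frac{v_2-v_1}{2}=\frac{b^+-b^-}{2}+\frac{\sigma}{2}\frac{d^+z}{\tau_2}-\frac{\sigma}{2}\frac{d^-z}{\tau_1}$, I would square both, sum with the weights $1$ and $1/\gamma_m^2$, and take the expectation. All cross terms drop because $d^+z$ and $d^-z$ are independent of one another, of the drifts and of $\phi$, and have zero conditional mean, which leaves $E\bigl[\abs{\frac{b^++b^-}{2}}^2\bigr]+\frac{1}{\gamma_m^2}E\bigl[\abs{\frac{b^+-b^-}{2}}^2\bigr]$ plus a variance remainder. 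Since the increment $d^\pm z$ has conditional covariance $\tau_j I$ in $n=3$ dimensions and $E[1/\tau_j]=2/\overline{\tau}$ for the second-order Gamma law (exactly the bookkeeping of Lemma \eqref{l:ENERG8}), each of $\abs{\frac{v_2+v_1}{2}}^2$ and $\abs{\frac{v_2-v_1}{2}}^2$ contributes a constant $3\sigma^2/\overline{\tau}$, so the weighted variance remainder is $\frac{3\sigma^2}{\overline{\tau}}\bigl(1+\frac{1}{\gamma_m^2}\bigr)$. Then I would use the osmotic identity $b^+-b^-=\sigma^2\nabla\log\rho=\sigma^2\frac{1}{\rho}\nabla\rho$ from \eqref{l:DRFT2ab} in Theorem \eqref{l:TIMEERG2} to turn $\abs{\frac{b^+-b^-}{2}}^2$ into $\frac{\sigma^4}{4}\abs{\frac{1}{\rho}\nabla\rho}^2$, and the elementary identity $1+\frac{1}{\gamma_m^2}=\frac{1+\gamma_m^2}{\gamma_m^2}=\frac{2}{\gamma_m\sin(\theta)}$ (from $\sin(\theta)=2\gamma_m/(1+\gamma_m^2)$) to rewrite the constant as $\frac{6\sigma^2}{\gamma_m\sin(\theta)\overline{\tau}}$; at this stage one has exactly the intermediate identity \eqref{l:HAMILT3}. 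Multiplying through by $M/2$, adding $\int\rho V\,dx$, and substituting $\eta=M\sigma^2/\gamma_m$, so that $\frac{\sigma^4}{4\gamma_m^2}=\frac{\eta^2}{4M^2}$ and $\frac{3M\sigma^2}{\gamma_m\sin(\theta)\overline{\tau}}=\frac{3\eta}{\sin(\theta)\overline{\tau}}$, produces \eqref{l:HAMILT1}.

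I expect the main obstacle to be not the algebra but making the conditioning rigorous: one has to know that, given $x(t)=x$, the increments $d^+z(t)$ and $d^-z(t)$ remain independent zero-mean Gaussians with the stated conditional covariances and are jointly independent of $\tau_1,\tau_2$ and of the projection $\phi$, so that every cross term genuinely vanishes and the $\tau_j$-averaging factors through the moments $E[1/\tau_j]$. The hypotheses that $\gamma_m^2$ be small and that $\sigma$ be constant enter only upstream, in the passage through Corollary \eqref{CORR1} that folds the incident-particle kinetic and osmotic energy into the single potential $V$; they play no role in the computation above, and allowing a time- or velocity-dependent $V$ merely reinterprets $\int\rho V\,dx$ without altering the form of \eqref{l:HAMILT1}.
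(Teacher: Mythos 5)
Your proposal is correct and follows essentially the same route as the paper's own proof: substitute the forward/backward velocity representations into \eqref{l:DRFT13}, kill the cross terms by conditional independence and zero mean of the shocks, collect the variance remainder via $E[1/\tau]=2/\overline{\tau}$ and $1+1/\gamma_m^2=2/(\gamma_m\sin(\theta))$, and convert the osmotic term with $b^+-b^-=\sigma^2\nabla\log\rho$ and $\eta=M\sigma^2/\gamma_m$, recovering \eqref{l:HAMILT3} and then \eqref{l:HAMILT1}. Your closing remark about the rigor of conditioning the backward increment on $x(t)=x$ is a fair caveat, but it is one the paper itself leaves implicit.
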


Equation \eqref{l:HAMILT1} show the energy embedded in the main and incident particle but there are no other effects on the two particles so \eqref{l:HAMILT1} must be constant in time. This specifies exactly how the density for the space variable specified by $\rho(x,t)$ is defined. In fact, following E. Nelson here the following becomes clear.
\begin{thm}
\label{l:THEOREM6}
Assume that $\eta =M\sigma^2/\gamma_m$ for a constant variance $\sigma$, $\Delta=\frac{\partial^2}{\partial x^2_1}+...+\frac{\partial^2}{\partial x^2_3}$ and let $R=R(x,t)$, $S=S(x,t)$ and $\Psi=\Psi(x,t)$ be appropriate functions. Then in the absence of angular momenta or spin the Hamiltonian in \eqref{l:HAMILT1} (or averaged \eqref{l:DRFT13}) is time independent if the main particle has probability density $\rho=\rho(x,t)$ for the main particle position process $x(t)$ derived from the wave function $\psi=\psi(x,t)$ where
\begin{align}
\label{l:SCHROD1}
\begin{split}
\psi &=\psi=
exp\left[\frac{ MR+iMS}{\eta}\right]=exp\left[\frac{\gamma_m(R+iS)}{\sigma^2}\right],
\end{split}
\end{align}
satisfying Schr\"{o}dinger's equation
\begin{align}
\label{l:SCHROD5}
\begin{split}
i\eta\frac{\partial\psi}{\partial t}&=-\frac{\eta^2}{2M}\Delta\psi + (V+\Psi)\psi,
\\
\rho&=\,|\psi\,|^2,
\end{split}
\end{align}
for the function $\Psi$ satisfying
\begin{align*}
\frac{\partial V}{\partial t} = \big(\nabla S\big)^T \nabla\Psi=S_{x_j} \Psi_{x_j}.
\end{align*}
The forward and backward drift can be written as
\begin{align}
\begin{split}
\label{l:DRFT15}
b^\pm&=\left(\nabla S \pm \gamma_m\nabla R\right),
\\
& = \frac{\eta}{M}\left(\IM \pm \gamma_m\RE \right)\frac{\nabla\psi}{\psi},
\end{split}
\end{align}
and the total (average) energy equals
\begin{align}
\label{l:HAMILT4}
\begin{split}
E\left[\h_1\right]=&E\left[\h_2\right]=\frac{\sigma^4\delta^2M}{2}E\left[ \frac{\abs{\nabla
\psi}^2}{\abs{\psi}^2}\right]+ E[(V+\Psi)] + \frac{3\eta}{\sin(\theta)\overline{\tau}}
\\=&
\frac{\eta^2}{2M}\int \abs{\nabla
\psi}^2 dx + \int \rho \big(V+\Psi\big)dx + \frac{3\eta}{\sin(\theta)\overline{\tau}}.
\end{split}
\end{align}
\begin{proof}
The proof is presented in Appendix \eqref{l:APP5}. The main - and incident particle energy was  represented by \eqref{l:HAMILT1} and then it was shown that the energy became invariant to time as long as the probability density satisfied \eqref{l:SCHROD5}. It is dissimilar from the original proof in Nelson~\cite{ENELSON1}, ~\cite{ENELSON2} as he did not consider the $\gamma_m$ weighting term and employed operators in the proof. Carlen~\cite{ECARL1} demonstrated that the stochastic differential equation \eqref{l:DRFT2aa} and therefore \eqref{l:DRFT2ab} admits many strong and weak solutions.  In general he showed that a weak stochastic solution to \eqref{l:DRFT2aa} exists if the potential $V$ belongs to a class Kato-Rellich potential. Notice that the proof shows the average energy given the stochastic process defined by \eqref{l:DRFT3aa} and \eqref{l:DRFT3ab} but \eqref{l:DRFT13} suggests that Theorem \eqref{l:THEOREM6} holds without using averaging. Notice that the fixed contribution here could be ignored as long as the collision rate $1/\overline{\tau}$ does not change in time.
\end{proof}
\end{thm}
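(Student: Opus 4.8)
The plan is to run the classical Madelung--Nelson polar decomposition while keeping the mass--ratio factor $\gamma_m$ and the effective constant $\eta=M\sigma^2/\gamma_m$ explicit, and then to establish time--independence of the energy \eqref{l:HAMILT1} by differentiating it directly in $t$ rather than by an operator argument. First I would write $\psi$ in the polar form \eqref{l:SCHROD1}, so that $\rho=\abs{\psi}^2=\exp(2MR/\eta)$ and hence $\nabla R=\tfrac{\eta}{2M}\rho^{-1}\nabla\rho$, and substitute this ansatz into Schr\"{o}dinger's equation \eqref{l:SCHROD5}. Separating the real and imaginary parts yields the two real Madelung equations
\[
\partial_t\rho+\nabla^T(\rho\nabla S)=0,\qquad
\partial_t S=-\tfrac12\abs{\nabla S}^2+\tfrac{\eta}{2M}\Delta R+\tfrac12\abs{\nabla R}^2-\tfrac1M(V+\Psi),
\]
the first of which is exactly the continuity equation \eqref{l:DRFT3c} once one identifies the current velocity $\tfrac{b^++b^-}{2}=\nabla S$, and the second a quantum Hamilton--Jacobi equation. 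Comparing $b^+-b^-=\sigma^2\nabla\log\rho$ from Theorem \eqref{l:TIMEERG2} with $\nabla R=\tfrac{\eta}{2M}\nabla\log\rho$ and using $\eta=M\sigma^2/\gamma_m$ shows that the osmotic velocity is $\tfrac{b^+-b^-}{2}=\gamma_m\nabla R$; hence $b^{\pm}=\nabla S\pm\gamma_m\nabla R$, and since $\RE(\nabla\psi/\psi)=\tfrac{M}{\eta}\nabla R$ and $\IM(\nabla\psi/\psi)=\tfrac{M}{\eta}\nabla S$ this is precisely \eqref{l:DRFT15}. The algebraic identity $\tfrac{M}{2}\bigl(\abs{\nabla S}^2+\abs{\nabla R}^2\bigr)\rho=\tfrac{\eta^2}{2M}\abs{\nabla\psi}^2$ then converts \eqref{l:HAMILT1} into \eqref{l:HAMILT4}.

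The core step is to show $\tfrac{d}{dt}E[\h_1]=0$. Writing $e:=\tfrac{M}{2}\abs{\nabla S}^2+\tfrac{M}{2}\abs{\nabla R}^2+V$, so that $E[\h_1]=\int e\rho\,dx+\tfrac{3\eta}{\sin(\theta)\overline{\tau}}$, I would split $\tfrac{d}{dt}\int e\rho\,dx=\int\rho\,\partial_t e\,dx+\int e\,\partial_t\rho\,dx$. In the second integral the continuity equation plus one integration by parts gives $\int e\,\partial_t\rho\,dx=\int\rho\,\nabla S\cdot\nabla e\,dx$. In the first integral I would insert $\partial_t S$ from the quantum Hamilton--Jacobi equation and $\partial_t R=-\nabla R\cdot\nabla S-\tfrac{\eta}{2M}\Delta S$ from the continuity equation. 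The cubic term $M\,S_{x_i}S_{x_j}S_{x_ix_j}$ occurs with opposite signs in the two pieces and cancels at once; the remaining mixed $\nabla S$--$\nabla R$ contributions together with the ``quantum potential'' pieces $\tfrac{\eta}{2}\nabla S\cdot\nabla\Delta R$ and $-\tfrac{\eta}{2}\nabla R\cdot\nabla\Delta S$ collapse to zero after invoking the key substitution $M\rho\,\nabla R=\tfrac{\eta}{2}\nabla\rho$ and one further integration by parts on each. What survives is exactly $\int\rho\bigl(\partial_t V-(\nabla S)^T\nabla\Psi\bigr)dx$, which vanishes by the hypothesis $\partial_t V=(\nabla S)^T\nabla\Psi=S_{x_j}\Psi_{x_j}$; and the additive term $\tfrac{3\eta}{\sin(\theta)\overline{\tau}}$ is constant in $t$ precisely when the collision rate $1/\overline{\tau}$ is. Thus $E[\h_1]=E[\h_2]$ is time--independent. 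Running this computation in reverse also gives the converse: if one demands time--independence of \eqref{l:HAMILT1} among densities transported by \eqref{l:DRFT3c}, then $S$ is forced to satisfy the quantum Hamilton--Jacobi equation, i.e. $\psi=\exp[(MR+iMS)/\eta]$ solves \eqref{l:SCHROD5} with $\rho=\abs{\psi}^2$, which is the stated ``only if''.

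I expect the main obstacle to be the bookkeeping in the cancellation of the osmotic and quantum--potential terms. This is where the $\gamma_m$--weighting must be tracked with care: it is absorbed into $\eta$, but the weight $1/\gamma_m^2$ carried by the osmotic term in \eqref{l:DRFT4} (hence in \eqref{l:DRFT13}) is exactly what makes the Nelson--type cancellation close. The step also requires justifying the repeated integration by parts --- discarding boundary terms and differentiating $\rho$ up to third order --- which needs enough regularity and decay of $\psi$; following Carlen I would impose that $V$ (and $\Psi$) lie in a Kato--Rellich class so that \eqref{l:SCHROD5} has a solution with $\psi(\cdot,t)$ locally $H^2$ and the Nelson diffusion \eqref{l:DRFT2aa}--\eqref{l:DRFT2ab} admits a weak solution with the stated $\rho$. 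A secondary subtlety is the status of $\Psi$: one must check that the Hamilton--Jacobi equation can actually be solved with the extra term $\Psi$ present while the compatibility condition $\partial_t V=(\nabla S)^T\nabla\Psi$ holds --- trivial for time--independent $V$ (take $\Psi\equiv 0$), but in general a genuine constraint coupling $\Psi$ to the phase $S$ of the very solution one is constructing.
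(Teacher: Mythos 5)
Your proposal is correct and follows essentially the same route as the paper's Appendix E: the polar/Madelung decomposition with the $\gamma_m$-weighted osmotic velocity $b^{\pm}=\nabla S\pm\gamma_m\nabla R$ absorbed into $\eta=M\sigma^2/\gamma_m$, direct time-differentiation of the energy functional, cancellation of the cubic and quantum-potential terms via the continuity equation and repeated integration by parts, and the compatibility condition $\partial_t V=(\nabla S)^T\nabla\Psi$ handling the time-dependent potential. The only (cosmetic) difference is direction: the paper derives the vanishing of the integrand as a sufficient condition and then identifies it with Schr\"{o}dinger's equation in Proposition \eqref{l:prp22}, whereas you assume the equation and verify $\tfrac{d}{dt}E[\h_1]=0$; your closing remark about the ``only if'' should be tempered, since the computation only exhibits a sufficient condition.
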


The Theorem above refers to any energy and momentum conserving set of particles in physics so it can be applied to all Hamiltonian collision problems, diffusion problems, quantum mechanics, electrodynamic problems and gravity. There is no reference here to a stochastic interpretation of quantum mechanics though it is tempting to formulate one. Obviously equation \eqref{l:SCHROD1} refers to the Schr\"{o}dinger wave function if the variance equals $\eta = \hbar$ but one difference is that the $\gamma_m R(t)$ refers to the distribution width in the distribution rather than $R(t)$. The Theorem above derives that $\eta=M\sigma^2/\gamma_m$ so that means for the variance of the process that
\begin{align}
\label{APP29}
\frac{\sigma^2}{2} = \frac{\eta\gamma_m}{2M} = \frac{\hbar}{2M}\gamma_m,
\end{align}
for small $\gamma_m=\sqrt{m/M}$. This suggests that the diffusion term is dependent on the size and mass ratio $m/M$ of the incident particles generating the diffusion. This is another difference as Nelson originally suggested that $\sigma^2 = \hbar / M$ suggesting a usually larger variant for the stochastic differential equations.

Equation \eqref{l:HAMILT1} refers to the constant energy term in the energy function $3\eta/\sin(\theta)\overline{\tau}$ which is due to the stochastic process limit. As the average inter-collision time $\overline{\tau}$ decreases the path of the particle approaches the equations \eqref{l:DRFT2aa} and \eqref{l:DRFT2ab} however the constant energy contribution explodes. Conveniently this contribution is ignored which is reasonable until relativistic conditions return or the inter-collision times change in time for other reasons.

\section{The Diffusion Process}

This Section addresses the many consequences of Theorem \eqref{l:THEOREM3} and Theorem \eqref{l:THEOREM6}. The first examples shows how Newton's information can be altered for the solar system due to diffusion and the second example addresses quantum mechanics. Another example shows how the changed Newton formula may attribute to the radial velocity of stars around the galactic centre in an effort to find Dark Matter. The last example shows how Theorem \eqref{l:THEOREM2} shows
a relationship with relativity specifically with Minkowski's equation.

\medskip

\textbf{Newtonian Mechanics Adjusted.}
First apply \eqref{l:DRFT4} to our solar system using the Earth's mass $M$ and compare the result to Newton's calculations of the Earth's orbit around the sun. If there is no diffusion at all then equations \eqref{l:DRFT4} and \eqref{l:DRFT13} can be simplified by assuming that the movement $w_1$, $w_2$ of the incident particle are non-stochastic and valued $w(t)$ so then
\begin{align*}
w_2 - w_1 \approx 0,
\\
v_2 - v_1 \approx 0.
\end{align*}
Imagine that the potential field generating the energy of the incoming particle $w(t)$ is generated by a central potential of the sun and that the main particle is the Earth with mass $M$ residing a distance $r=\abs{x(t)-x_\odot}$ away from the sun. The Earth experiences a potential field around the sun holding mass $M_\odot$ where $M << M_\odot$ and assume that $\gamma_m^2=1$. Now the approximate potential $f_{approx}$ equals
\begin{align*}
f_{approx} = & \abs{\frac{w_2+w_1}{2}}^2+
\abs{\frac{w_2-w_1}{2}}^2 \approx \abs{\frac{w_2+w_1}{2}}^2,
\end{align*}
and
\begin{align*}
\frac{M\gamma_m^2}{2}f_{approx}=\frac{M}{2}f_{approx}=\frac{M}{2}\abs{\frac{w_2+w_1}{2}}^2 = \frac{GMM_\odot}{r},
\end{align*}
where $G$ is the gravitational constant. Applying Proposition \eqref{CORR1} then \eqref{l:DRFT4} becomes
\begin{align}
\label{l:HAMILT42}
\begin{split}
\h_2=\h_1=&
\frac{M}{2}\bigg(\abs{\frac{v_2+v_1}{2}}^2+\frac{1}{\gamma_m^2}
\abs{\frac{v_2-v_1}{2}}^2\bigg) + V
\\
\approx &\frac{M}{2}\abs{\frac{v_2+v_1}{2}}^2 + \frac{M\gamma_m^2}{2}f_{approx}
\\
= & \frac{M}{2}\abs{\frac{dx(t)}{dt}}^2 + \frac{GMM_\odot}{\abs{x-x_\odot}}.
\end{split}
\end{align}
This is the classical Newtonian Hamiltonian for mass $M$ orbiting the sun with mass $M_\odot$ placed in position $x_\odot$.

But now assume that there is a diffusion perturbation so that $\abs{v_2-v_1}^2/\gamma_m^2$ has become noticeable and has to be taken into account. In that case following equation \eqref{l:SCHROD5} the approximation of the perturbed orbit of particle $M$ would be provided by Schr\"{o}dinger's equation
\begin{align*}
i\eta\frac{\partial\psi}{\partial t}&=-\frac{\eta^2}{2M}\Delta\psi + \frac{GMM_\odot}{\abs{x(t)-x_\odot}}\psi,
\\
\rho&=\,|\psi\,|^2,
\end{align*}
or using polar coordinates for two-dimensions this reduces to
\begin{align*}
i\eta\frac{\partial\psi}{\partial t}&=-\frac{\eta^2}{2M}
\left(\frac{\partial^2 \psi}{\partial r^2} + \frac{1}{r}\frac{\partial \psi}{\partial r}\right)+\frac{GMM_\odot}{r}\psi,
\\
\rho&=\,|\psi\,|^2,
\end{align*}
using polar coordinates where $r$ is the distance to the sun $M_\odot$ and where $\eta=M\sigma^2/\gamma_m=M\sigma^2$. In this case the second dimension is reasonable as almost all planets except Pluto tend to lead a very similar flat orbit. This equation is an approximation as the polar coordinate solution parts for the angle $\theta$ have not been included. Also notice that the time-dependency $\Psi$ is not included because Newton's attraction to the sun is not time-dependent.

The perturbation $\eta$ generated by $\abs{v_2-v_1}^2/\gamma_m^2$ is not necessarily small but is due to the small fluctuations in planet density, the random presence of gas in the solar system and the fluctuations of the sun's gravity as the sun produces small mass distribution changes. Notice that because this is the  Schr\"{o}dinger's equation the solution to this equation produces a fixed set of solutions of planets bound by the sun's gravity. This means a finite number of planets in the solar system and their equilibrium distance to the sun is well defined. Any masses with excess energy will escape from the sun's orbit and disappear into space. Given the number of planets and the size of their radial distance to the sun provides the possibility of estimating the diffusion constant $\sigma^2$. This approach can not be used for the solar system in its infancy as the original gas cloud that constituted our solar system before the sun started its nuclear process some 5 billion years ago. In that case it would be wiser to use a gas approximation inside the potential that experienced its own gravity, see Penrose~\cite{PENROSE1}

\medskip

\textbf{Constructing Quantum Mechanics.}
The Schr\"{o}dinger equation \eqref{l:SCHROD1} using the diffusion model of \eqref{l:DRFTDEF0a} and \eqref{l:DRFTDEF0b} was simplified by assuming that the main particle is modelled with the potential $V$ representing the effect of the incident particle $m$. Hence the main particle has equations of motion while the incident particles are being provided randomly. The following example shows the iterative process in the case that there is one constant potential between $t$ and $t+\tau_2$.

In this case Theorem \eqref{l:THEOREM6} implies that the wave function solves
\begin{align*}
i\eta\frac{\partial \psi}{\partial t}&=-\frac{\eta^2}{2M}\Delta\psi + \frac{mc_w^2}{2}\psi = -\frac{\eta^2}{2M}\Delta\psi + \overline{e}\psi,
\end{align*}
where $c_w$ is the speed of the incident particles and the energy of the incident particle equals $2\overline{e}=mc_w^2$. So then
\begin{align}
\label{l:ENERG16}
\psi&=Ae^{i\big(\frac{p^Tx - Et}{\eta}\big) + i\frac{\overline{e}}{2\eta}t},
\end{align}
where the energy $2E = M\abs{p}^2-mc_w^2=M\abs{p}^2-2\overline{e}$ for all times $t\leq t \leq t+\tau_2$.

Imagine that $\overline{e}=mc^2_w/2$ equals the minimum basis energy of the electron or photon and choose the main particle energy $M\abs{p_k}^2/2=k\overline{e}=E_k$ then the energy term $E_k$ becomes $E_k=(k-1)\overline{e}$ which means that \eqref{l:ENERG16} simplifies to
\begin{align*}
\psi_k=\psi_k(x,t)=Ae^{i\big(\frac{p^Tx - E_kt}{\eta}\big)}=Ae^{i\big(\frac{p^Tx - E_kt}{\eta}\big)},
\end{align*}
with the main part energy $E_k \geq -mc_w^2/2=-\overline{e}$ coupling the movement of $M$ and the energy of $m$ in one expression. Notice that it is possible that the main part energy $E_k,k \geq 0$ is negative in which case $E_0=-\overline{e}$ or it can be assumed that $k>0$ to avoid this problem. Choosing a fraction of $\overline{e}$ as the electron or photon energy creates the possibilities that the energy for the main particle can become substantially negative in which case tunnelling needs to be taken into account.

The main particle $M$ collides at time $t$ and then moves on the next collision with a new incident particle at time $t+\tau_2$. So between the collision between $M$ and the incident particle $m$ at time $t$ and $t+\tau_2$ the probability density $\rho$ applies and when the next collision occurs this probability density is the initial condition for the new wave function of the main particle. The new incident particle may have a completely different set of initial conditions or it may interact with the main particle. If the collisions are random the main particle has proper equations of motion while the incident particles are being provided randomly. However, if there are arrangements between the main and incident particles the relationship needs to be evaluated using \eqref{l:HAMILT42}. The following example shows a free particle colliding with a set of free incident particles.

\medskip

\textbf{Example of stationary behavior.}
In a heatbath where the main particle is represented as provided by Proposition \eqref{l:EQMOT1P} it can be shown that the main particle has a correlation to the incident particle depending on the speed it is moving. Let $v_1,v_2$ and $w_1,w_2$ be three-dimensional then Proposition \eqref{l:EQMOT1P} shows for the main particle the after-collision speed becomes
\begin{align}
\label{APP15}
v_2 =\left(I-\gamma_m\sin\left(\theta\right)P(\phi)\right)v_1+\gamma_m\sin\left(\theta\right)P(\phi)w_1,
\end{align}
so both the momentum and energy of the main particle will be affected by the incident particle. First the case of no correlation between $v_1$ and $w_1$ is considered and then the correlation is calculated assuming that $w_1$ and $w_2$ carry exactly the same energy.

First specify the correlation $\rho$ between $v_1$ and $w_1$ with respect to their relative energies as follows. Since
\begin{align*}
E[v_1^Tw_1]^2 \leq E\abs{v_1}^2E\abs{w_1}^2,
\end{align*}
the correlation can be defined as
\begin{align*}
\rho=\frac{E[v_1^Tw_1]}{\sqrt{E\abs{v_1}^2E\abs{w_1}^2}},
\end{align*}
since in that case $-1 \leq \rho \leq 1$.

For stationary processes the correlation is proportional to $E[v_1^Tw_1]$, so using $E[v_1^Tw_1]=0$ and $E[P(\phi)]=I$ (where $\phi \in \Re^3$, I being the unit matrix) the expectation of equation \eqref{APP15} becomes
\begin{align}
\begin{split}
\label{l:HAMILT7}
E[v_2] =
&
E[v_1] +
\gamma_m\sin{\theta}\left(E[w_1] - E[v_1]\right),
\\
=
&
E[v_1] + \frac{2\gamma_m^2}{(1+\gamma_m^2)}\left(E[w_1] -
E[v_1]\right)
\\
\approx
&
\left(1 - 2\gamma_m^2\right)E[v_1] + 2\gamma_m^2E[w_1],
\end{split}
\end{align}
which means the main particle $M$ experiences acceleration or deceleration between $v_!$ and $v_2$ depending on the average movement of the mean of $w_1, w_2$. So with $E[v_1^Tw_1]=0$ the energy calculation for $v_2$ becomes
\begin{align}
\label{APP19}
\begin{split}
E\abs{v_2}^2
= &
E\abs{v_1}^2 - 2\gamma_m\sin(\theta)E\abs{v_1}^2 +\gamma_m^2\sin^2(\theta) E\abs{v_1}^2
+\gamma_m^2\sin^2(\theta)E\abs{w_1}^2
\\
& +
2\gamma_m\sin(\theta)(1-\gamma_m\sin(\theta))E[v_1^Tw_1]
\\
= &
E\abs{v_1}^2 +\gamma_m\sin(\theta)\left(- 2 +\gamma_m\sin(\theta)\right) E\abs{v_1}^2
+\gamma_m^2\sin^2(\theta)E\abs{w_1}^2,
\end{split}
\end{align}
and then the denominator becomes
\begin{align*}
E\abs{v_2}^2 - E\abs{v_1}^2
& \approx
-4\gamma_m^2\left((1-\gamma_m^2)E\abs{v_1}^2
-\gamma_m^2E\abs{w_1}^2\right)
\\ & \approx
-4\gamma_m^2\left(E\abs{v_1}^2
-\gamma_m^2E\abs{w_1}^2\right).
\end{align*}
Clearly the momentum and energy are in equilibrium if approximately
\begin{align*}
E\left[v_1\right]
& \approx
E\left[w_1\right],
\\
ME\abs{v_1}^2
& \approx
M\gamma_m^2c_w^2
=
mc_w^2,
\end{align*}
where again $c_w^2 = E\abs{w_1}^2$. This means that for stationary processes the main particle assumes the incident particle speed and assumes the incident particle energy.

So for the case with correlation assume that the energy is conserved in the incident particles during the particle collision. From the same equation \eqref{l:EQMOT1P} it is clear that
\begin{align}
\label{APP26}
w_2 =\frac{\sin\left(\theta\right)}{\gamma_m}P(\phi)v_1  + \left(I - \frac{\sin\left(\theta\right)}{\gamma_m}\right)P(\phi)w_1,
\end{align}
hence if the incident vector always maintains the same energy there will be an effect on the main particle. In this case for $v_1,v_2$ and $w_1,w_2$ then
\begin{align}
\label{APP20}
\begin{split}
\abs{w_2}^2
= &
\abs{\frac{\sin\left(\theta\right)}{\gamma_m}P(\phi)v_1+
\left(I-\frac{\sin\left(\theta\right)}{\gamma_m}P(\phi)\right)w_1}^2
\\
= &
\abs{w_1}^2 -
2\frac{\sin(\theta)}{\gamma_m}w_1^TP(\phi)w_1 +\frac{\sin^2(\theta)}{\gamma_m^2} w_1^TP^T(\phi)P(\phi)w_1
+\frac{\sin^2(\theta)}{\gamma_m^2}v_1^TP^T(\phi)P(\phi)v_1
\\
& +2\frac{\sin(\theta)}{\gamma_m}v_1^TP^T(\phi)\left(1-\frac{\sin(\theta)}{\gamma_m}P(\phi)\right)w_1.
\end{split}
\end{align}
With the usual interaction averaged so that $E[P(\phi)]=I(\text{unit matrix})$ and $P^T(\phi)P(\phi)=P(\phi)P(\phi)=P(\phi)$ then
\begin{align}
\label{APP27}
\begin{split}
E\abs{w_2}^2
= &
E\abs{w_1}^2 - 2\frac{\sin(\theta)}{\gamma_m}E\abs{w_1}^2 +\frac{\sin^2(\theta)}{\gamma_m^2} E\abs{w_1}^2
+\frac{\sin^2(\theta)}{\gamma_m^2}E\abs{v_1}^2
\\
& +
2\frac{\sin(\theta)}{\gamma_m}\left(1-\frac{\sin(\theta)}{\gamma_m}\right)E[v_1^Tw_1],
\end{split}
\end{align}
so then the average energy of the incident velocities remains unchanged if
\begin{align}
\begin{split}
\label{l:DRFT21}
E[v_1^Tw_1]= &
\frac{
\left(2 -\frac{\sin(\theta)}{\gamma_m}\right) E\abs{w_1}^2 - \frac{\sin(\theta)}{\gamma_m}E\abs{v_1}^2
}
{
2\left(1-\frac{\sin(\theta)}{\gamma_m}\right)
}
=
\frac{E\abs{v_1}^2-\gamma^2_m E\abs{w_1}^2}{1-\gamma_m^2},
\end{split}
\end{align}
and the correlation becomes
\begin{align*}
\rho = \frac{\sqrt{E\abs{v_1}^2}}{\left(1-\gamma_m^2\right)\sqrt{E\abs{w_1}^2}}
-\frac{\gamma^2_m \sqrt{E\abs{w_1}^2}}{\left(1-\gamma_m^2\right)\sqrt{E\abs{v_1}^2}}.
\end{align*}
which for small $\gamma_m << 1$ reduces to
\begin{align*}
\rho \approx \frac{\sqrt{E\abs{v_1}^2}}{\sqrt{E\abs{w_1}^2}} \approx
\frac{E\abs{v_1}}{E\abs{w_1}},
\end{align*}
assuming that $m << M$, $E\abs{v_1}^2 \approx E\abs{v_1}E\abs{v_1}$ and $E\abs{w_1}^2 \approx E\abs{w_1}E\abs{v_1}$. With the correlation $\rho$ between the main and incident particle velocities assures that the main particle average velocity $E\abs{v_1}$ will obtain a fraction $\rho$ of the velocity of the incident particle $E\abs{w_1}$. The mass in this case does not affect the result.

\medskip

\textbf{Is there Black Matter?} A current important matter in cosmology and in fact for physics is the apparent lack of existence of mass in the universe as the extrapolated mass from the observed starlight seems to insufficiently describe the radial motion of stars around the galactic centres. Observations on the radial star velocity distributions around the center of the universe should show a significant downward slope as the distance to the center increases but the observed pattern shows an almost flat horizontal slope. This suggests either (much) more dark mass in our galaxy than their light indicates or the amount of mass extrapolated from starlight and radiation is severely under-estimated.

Since 1986 a non-relativistic potential theory for gravity is considered which differs from the Newtonian theory, see Milgrom \cite{MILGROM}. The MOND theory is built on the basic assumption of the modified dynamics shown earlier to reproduce dynamical properties of galaxies and galaxy aggregates without having to assume the existence of hidden mass. Equation \eqref{l:DRFT4} shows that the MOND approach may have more validity as long as the diffusion implied by the incident star systems is noticeable.

To be specific take the motion of a radially symmetric galactic configuration and determine that the incident particles consist of circling stars and galactic dust. Equation \eqref{l:THEOREM5} is generally applicable to all Hamiltonian forces and therefore applies to Newtonian gravity as well. The $w_1, w_2$ part represents the force part of the equation generating the potential $V$ and then Proposition \eqref{l:HAMILT12} and Theorem \eqref{l:THEOREM6} suggest that the distribution of a solar system moving around the galaxy employing satisfies a Schr\"{o}dinger equation with an appropriate variance. The motion of the main sun needs to be determined as a function of the distance to the galactic centre $r$ so then equation \eqref{l:DRFT18} needs to be adjusted to account for gravity and the osmotic gravity term resulting from the diffusion of the dynamic star motion.

Consider for a galaxy that their mass is concentrated at the centre of the galaxy which will be assumed equal to $M_R$. Then assume that the mass is concentrated in the centre the Newtonian energy of the system equals
\begin{align*}
\frac{w_2+w_1}{2} = \left(\frac{GM_R}{r}\right)^{\frac{1}{2}},
\end{align*}
so that
\begin{align*}
\abs{\frac{w_2+w_1}{2}}^2 = \frac{GM_R}{r}.
\end{align*}
In addition, the galactic acceleration of the main particle in $x(t)$ as a function of the radial centre $r$ demands that
\begin{align*}
\frac{\partial x(t)}{\partial r} = \frac{w_2-w_1}{2} = -\frac{GM_R}{r^2},
\end{align*}
so that
\begin{align*}
\abs{\frac{w_2-w_1}{2}}^2 = \frac{\chi(r)^2G^2M_R^2}{r^4},
\end{align*}
which explains the osmotic velocity term in equation \eqref{l:APP21} except for term $\chi(r)$. This term is the result of the lengthening paths of galactic mass being randomized while rotating the axis. The function $\chi(r)$ can ameliorate the terms of the acceleration making the profile less profound for large $r$ and can make the $r$ dependence closer to $1/r$ for smaller $r$. The profile can be made equal to the typical Newtonian by choosing $\chi(r)$ equal to zero. With these choices the potential subject to the mass of galactic stars equals
\begin{align}
\label{l:APP21}
\begin{split}
f_{approx}(x,t)
= & E\left[\abs{\frac{w_2+w_1}{2}}^2+\gamma_m^2\abs{\frac{w_2-w_1}{2}}^2\bigg|x(t)=x\right]
\\
= & E\left[\frac{GM_R}{r}+\gamma_m^2\frac{\chi^2(r)G^2M_R^2}{r^4}\bigg| x(t)=x\right].
\end{split}
\end{align}
This assumes that all mass is concentrated in the galactic centre which is not true for spiral galaxies but may be more accurate for globular clusters.

A solution has been provided by Theorem \eqref{l:THEOREM6} once the correct constants have been provided. First $\gamma_m^2=M/M_R$ where $M_R$ is the galactic mass of the galaxy under consideration and $\eta=M_R\sigma^2/\gamma_m$. The diffusion in this equation is not related to quantum mechanics but to the diffusive motion the collective star systems and gas impart to a mass $M$. Notice also that the diffusion attribution $\chi(r)=\chi$ has been chosen a constant throughout the galaxy. Now the Schr\"{o}dinger equation \eqref{l:SCHROD5} substituting the value of the potential \eqref{l:APP21} so that
\begin{align*}
i\eta\frac{\partial\psi}{\partial t}
& = -\frac{\eta^2}{2M}\left(\frac{\partial^2 \psi}{\partial r^2} + \frac{1}{r}\frac{\partial \psi}{\partial r}\right)\psi + MM_R\left(\frac{G}{r}+\gamma_m^2\frac{\chi^2G^2M_R}{r^4}\right)\psi
\\
& = -\frac{\eta^2}{2M}\left(\frac{\partial^2 \psi}{\partial r^2} + \frac{1}{r}\frac{\partial \psi}{\partial r}\right)\psi + MM_R\left(\frac{G}{r}+\frac{\chi^2G^2M}{r^4}\right)\psi,
\\
\rho&=\,|\psi\,|^2.
\end{align*}
Clearly this potential serves one of Migrom's requirements that the attractive force decreases as $r$ increases but it does not satisfy the requirement for short $r$ until the function $\chi(r)$ can be adjusted accordingly. Obviously the radial speed distribution will be adjusted when $\eta$ is adjusted so the diffusion $\eta$ can be adjusted to the radial velocity pattern observed.

The Schr\"{o}dinger equation has been studied in different contexts but a solution for the equation above does not exist \cite{ROBERTSHAW1}, \cite{HARRISON1} though the Schr\"{o}dinger equation with Newton potential is studied with a potential satisfying a Poisson equation. This set of equations were introduced by Penrose \cite{PENROSE1} to the system of equations consisting of the Schr\"{o}dinger equation for a wave-function moving in a potential $\phi$, where $\phi$ is solved from the Poisson equation. This Schr\"{o}dinger-Newton equation describes a particle that moves under its own gravitational field describing quantum properties while gravity remains classical even at the fundamental level. The system has been studied analytically by Tod \cite{TOD1}.

\medskip

\textbf{Example of Minkowski transaction.}
This example shows that some form of the Minkowski surface condition by equations \eqref{l:EQMOT9} using Theorem \eqref{l:THEOREM2}. It is very interesting to see that the Theorem shows a stochastic error in the Minkowski equation.
\begin{thm}
\label{l:THEOREM9}
The requirements on Theorem \eqref{l:THEOREM2} specifies the mean motion of the mean collision mean $a$ and the interactions $b=w_1-v_1$, $b^\bot=b+\frac{1}{\gamma_m}\Phi$ by manipulating equation \eqref{l:EQMOT4} to derive
\begin{align}
\label{APP23}
\begin{split}
b & = \frac{w_1-v_1}{(1+\gamma_m^2)},
\\
b^\bot & = \frac{v_2-w_2}{(1+\gamma_m^2)},
\\
a & = \frac{Mv_2+mw_2}{M+m} = \frac{Mv_1+mw_1}{M+m},
\end{split}
\end{align}
where $b^\bot=b + \frac{1}{\gamma_m}\Phi$ and $\abs{b}^2 = \abs{b^\bot}^2$. The function $\Phi$ is defined in Theorem \eqref{l:THEOREM2}. Then
\begin{align}
\label{APP21}
\abs{w_2-a}^2 - \abs{v_2 - a}^2 = \abs{w_1-a}^2 - \abs{v_1 - a}^2,
\end{align}
so that the motion of all particles is constrained vis-a-vis the average velocity $a$. If the main particle on average does not loose or gain energy or if the correlation between $a$ and $b+b^\bot$ is zero then
\begin{align}
\label{APP22}
E\abs{w_2}^2 - E\abs{v_2}^2 = E\abs{w_1}^2 - E\abs{v_1}^2.
\end{align}
For small $m$ where $m<<M$ it is clear that equation \eqref{APP21} is very close to \eqref{APP22} and for massless photons as incident particles these equations are identical. In these cases the average motion of the difference between the two particles randomizes against the behaviour of $b+b^\bot$.

\begin{proof}
Using the collision representation \eqref{l:EQMOT4} the following can be obtained quite easily
\begin{align*}
\abs{w_2-a}^2 - \abs{v_2-a}^2
= \abs{-b^\bot}^2 - \abs{\gamma_m^2b^\bot}^2
= (1-\gamma_m^4)\abs{b^\bot}^2,
\end{align*}
and
\begin{align*}
\abs{w_1-a}^2 - \abs{v_1-a}^2
= \abs{b}^2-\abs{-\gamma_m^2b}^2
= (1-\gamma_m^4)\abs{b}^2.
\end{align*}
Then
\begin{align*}
\abs{w_2-a}^2 - \abs{v_2-a}^2 -\left(
\abs{w_1-a}^2 - \abs{v_1-a}^2\right)
=
(1-\gamma_m^4)(\abs{b^\bot}^2-\abs{b}^2)
=0
\end{align*}
because $\abs{b}^2=\abs{b^\bot}^2$. Notice that the expressions above show that the mean average energy is always positive as long as $\gamma_m^4 < 1$ so the $v_2-a,v_1-a$ is always less than the $w_2-a,w_1-a$ energies.

To remove the dependence on the vector $a$ remove the vector from \eqref{APP22} then it is clear that
\begin{align*}
& \abs{w_2-a}^2 - \abs{v_2 - a}^2 - \left( \abs{w_1-a}^2 - \abs{v_1 - a}^2\right)
\\
& = \abs{w_2}^2 - \abs{v_2}^2 - \left(\abs{w_1}^2 - \abs{v_1}^2\right) + 2a^T\left(v_2-w_2 -(v_1-w_1)\right)
\\
& = \abs{w_2}^2 - \abs{v_2}^2 - \left(\abs{w_1}^2 - \abs{v_1}^2\right) +2(1+\gamma_m^2)a^T\left(b+b^\bot\right),
\end{align*}
so that
\begin{align*}
0 & = E\abs{w_2-a}^2 - E\abs{v_2 - a}^2 - \left( E\abs{w_1-a}^2 - E\abs{v_1 - a}^2\right)
\\
& = E\abs{w_2}^2 - E\abs{v_2}^2 - \left(E\abs{w_1}^2 - E\abs{v_1}^2\right) +2(1+\gamma_m^2)E\left[a^T\left(b+b^\bot\right)\right],
\end{align*}
since $b^\bot$ depends on the position of the main particle, the incident particle and the impact $b$.

To show this relationship is valid when the energy in the main particle does not change write from equation \eqref{l:EQMOT4}
\begin{align*}
\abs{v_1}^2 & = \abs{a}^2-\gamma_m^2a^Tb+\abs{b}^2,
\\
\abs{v_2}^2 & = \abs{a}^2+\gamma_m^2a^Tb^\bot+\abs{b^\bot}^2,
\end{align*}
then the amount of energy gained or lost equals
\begin{align*}
2\Delta E = M\left(\abs{v_2}^2-\abs{v_1}^2\right) = \gamma_m^2a^Tb^{\bot}+\abs{b}^2+\gamma_m^2a^Tb-\abs{b^\bot}^2=\gamma_m^2a^T\left(b^{\bot}+b\right).
\end{align*}
This means that $\gamma_m^2a^T\left(b^{\bot}+b\right)$ shows the amount of energy leaving or arriving in the system so for a static system with $\Delta E=0$ this becomes
\begin{align*}
E\abs{w_2}^2 - E\abs{v_2}^2 = E\abs{w_1}^2 - E\abs{v_1}^2.
\end{align*}
It is interesting to see that for small $m$ where $m<<M$ it is clear that $\Delta E$ is very close to zero and for particles that do not have any mass like photons that $\gamma_m=0$ so that indeed $\Delta E=0$.

In addition, the correlation between $a$ and $b+b^\bot$ is equal to
\begin{align*}
\rho=\frac{E\left[a^T(b+b^\bot)\right]}{\sqrt{E\abs{a}^2
E\abs{b+b^\bot}^2}},
\end{align*}
so that
\begin{align*}
E\abs{w_2}^2 - E\abs{v_2}^2 - \left(E\abs{w_1}^2 - E\abs{v_1}^2\right) =-4(1+\gamma_m^2)\rho\sqrt{E\abs{a}^2
E\abs{b+b^\bot}^2},
\end{align*}
hence
\begin{align*}
E\abs{w_2}^2 - E\abs{v_2}^2 = E\abs{w_1}^2 - E\abs{v_1}^2,
\end{align*}
if $\rho$ equals zero. Assuming this equation being not stochastic it is clear that $E\abs{w_1}^2 \approx \overline{w_1}^2, E\abs{w_1}^2 \approx \overline{w_1}^2$ and $E\abs{w_1}^2 \approx \overline{w_1}, E\abs{w_1}^2 \approx \overline{w_1}^2$ which applies to $v_2,w_2$ as well so finally
\begin{align*}
\abs{E[w_2]}^2 - \abs{E[v_2]}^2 = \abs{E[w_1]}^2 - \abs{E[v_1]}^2.
\end{align*}
Notice that for large mass where $M>>m$ it is clear that $a \approx v_1 \approx v_2$ and $b+b^\bot \approx w_1-w_2$ so then $E\left[a^T(b+b^\bot)\right] \approx E[v(w_2-w_1)] \approx vE[(w_2-w_1)]$. Inertial masses moving through the universe at a constant speed not loosing or gaining energy would have $E\left[a^T(b+b^\bot)\right]=0$ and equation \eqref{APP22} holds true.
\end{proof}
\end{thm}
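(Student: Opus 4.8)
The plan is to derive everything by direct substitution from the eigenvector parametrisation of Theorem~\eqref{l:THEOREM2}, in which $v_1=a-\gamma_m^2b$, $w_1=a+b$, $v_2=a+\gamma_m^2b^\bot$, $w_2=a-b^\bot$ with the sole constraint $\abs{b}^2=\abs{b^\bot}^2$. First I would read off the three formulas in \eqref{APP23}: subtracting the expressions for $v_1$ and $w_1$ gives $w_1-v_1=(1+\gamma_m^2)b$, subtracting those for $v_2$ and $w_2$ gives $v_2-w_2=(1+\gamma_m^2)b^\bot$, and forming $Mv_1+mw_1$ (respectively $Mv_2+mw_2$) annihilates the $b$ (respectively $b^\bot$) term because $M\gamma_m^2=m$, leaving $(M+m)a$. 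Dividing through produces the stated expressions for $b$, $b^\bot$ and $a$; these are essentially restatements of \eqref{l:EQMOT5}--\eqref{l:EQMOT6}.

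For the Minkowski-type identity \eqref{APP21} I would substitute the parametrisation directly. Since $w_1-a=b$ and $v_1-a=-\gamma_m^2b$, one gets $\abs{w_1-a}^2-\abs{v_1-a}^2=(1-\gamma_m^4)\abs{b}^2$; likewise $w_2-a=-b^\bot$ and $v_2-a=\gamma_m^2b^\bot$ give $\abs{w_2-a}^2-\abs{v_2-a}^2=(1-\gamma_m^4)\abs{b^\bot}^2$. The energy-conservation clause $\abs{b}^2=\abs{b^\bot}^2$ of Theorem~\eqref{l:THEOREM2} makes the two equal, which is exactly \eqref{APP21}. As a byproduct, $0\le\gamma_m^4<1$ under $m<M$ shows the common value is nonnegative, giving the remark that the $v$-deviations from $a$ never exceed the $w$-deviations.

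Next I would pass from \eqref{APP21} to \eqref{APP22} by expanding the squares about $a$. For $i=1,2$ one has $\abs{w_i-a}^2-\abs{v_i-a}^2=\abs{w_i}^2-\abs{v_i}^2+2a^T(v_i-w_i)$, and subtracting the $i=1$ line from the $i=2$ line the cross terms combine into $2a^T\bigl((v_2-w_2)-(v_1-w_1)\bigr)=2(1+\gamma_m^2)\,a^T(b+b^\bot)$ by \eqref{APP23}. Hence \eqref{APP21} is equivalent to the pointwise identity $\abs{w_2}^2-\abs{v_2}^2-\bigl(\abs{w_1}^2-\abs{v_1}^2\bigr)=-2(1+\gamma_m^2)\,a^T(b+b^\bot)$, and taking expectations shows \eqref{APP22} holds precisely when $E\bigl[a^T(b+b^\bot)\bigr]=0$, i.e.\ when the correlation of $a$ with $b+b^\bot$ vanishes. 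To link this with the ``no net energy change'' hypothesis I would compute $M\bigl(\abs{v_2}^2-\abs{v_1}^2\bigr)$ straight from $v_1=a-\gamma_m^2b$, $v_2=a+\gamma_m^2b^\bot$: the $\abs{a}^2$ terms cancel, the $\gamma_m^4\abs{b}^2$ and $\gamma_m^4\abs{b^\bot}^2$ terms cancel by $\abs{b}^2=\abs{b^\bot}^2$, and the remainder is $2M\gamma_m^2\,a^T(b+b^\bot)=2m\,a^T(b+b^\bot)$. Thus the per-collision energy exchange is $\Delta E=m\,a^T(b+b^\bot)$, and the static-system condition $\Delta E=0$ forces $a^T(b+b^\bot)=0$ pointwise whenever $m>0$, which more than suffices for \eqref{APP22}.

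The limiting assertions are the soft part and I would present them accordingly. The difference between the two sides of \eqref{APP22} equals $-2(1+\gamma_m^2)\,E[a^T(b+b^\bot)]$, while the energy exchange is $\Delta E=m\,a^T(b+b^\bot)$; so in an energetically static heatbath with $m>0$ the gap is exactly zero, and for massless incident particles ($m=0$, whence $a=v_1=v_2$) relation \eqref{APP21} degenerates to $\abs{w_2-a}^2=\abs{w_1-a}^2$. I expect the genuine obstacle to lie exactly at this massless endpoint: with $\gamma_m=0$ the identity $\Delta E=m\,a^T(b+b^\bot)$ becomes vacuous, so recovering \eqref{APP22} there needs the independent input that $b+b^\bot$ randomises against $a$, namely $E[a^T(b+b^\bot)]=0$, which energy conservation no longer supplies by itself. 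Everything else is bookkeeping --- chiefly keeping the asymmetric roles of the pre-collision vector $b$ and post-collision vector $b^\bot$ straight, tracking the signs $w_1-a=b$, $v_1-a=-\gamma_m^2b$, $w_2-a=-b^\bot$, $v_2-a=\gamma_m^2b^\bot$, and confirming that the single scalar constraint $\abs{b}^2=\abs{b^\bot}^2$ is all that is used beyond the parametrisation itself.
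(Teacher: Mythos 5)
Your proposal is correct and follows essentially the same route as the paper: substitute the eigenvector parametrisation of Theorem \eqref{l:THEOREM2}, use $\abs{b}^2=\abs{b^\bot}^2$ to obtain \eqref{APP21}, expand the squares about $a$ to isolate the cross term $2(1+\gamma_m^2)a^T(b+b^\bot)$, and identify the per-collision energy exchange $\Delta E$ with a multiple of $a^T(b+b^\bot)$. Your algebra is in fact cleaner than the paper's (whose displayed expansion of $\abs{v_1}^2,\abs{v_2}^2$ drops factors of $2$ and $\gamma_m^4$ and whose correlation formula carries a stray factor of $4$), and your observation that the massless endpoint needs the decorrelation $E[a^T(b+b^\bot)]=0$ as an independent input --- since $\Delta E=m\,a^T(b+b^\bot)$ becomes vacuous at $m=0$ --- is an accurate reading of a point the paper's own proof glosses over.
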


The conservation law \eqref{APP21} always applies no matter what values the masses $m$ or $M$ are but \eqref{APP22} is the relativistic Minkowski space time as long as $a$ and $b+b^\bot$ are uncorrelated or if the main particle does not accept or loose energy. Energy conservation means that $E[a^T(b+b^\bot)]=0$ so that the momentum of the main particle does not loose or gain momentum / energy upon collision on average. In gravity the incident particle consists out of a heatbath of light so then clearly equation \eqref{APP22} applies.

Interestingly the joint energy of the particles equals
\begin{align*}
\frac{2}{M}\h_1= \frac{2}{M}\h_1 = \abs{v_1}^2+\gamma_m^2\abs{w_1}^2 = \abs{v_2}^2+\gamma_m^2\abs{w_2}^2=(1+\gamma_m^2)\abs{a}^2+\gamma_m^2(1+\gamma_m^2)\abs{b}^2,
\end{align*}
and Einstein solved the relativistic case in gravity in equation \eqref{APP22} by varying the time to compensate for equation \eqref{APP22} as the incident particle speed (photons) always has the same speed setting $\abs{w_1}^2=c_w^2\tau_1^2, \abs{w_2}^2=c_w^2\tau_2^2$. Using the pre - and post collision vector of the macroscopic object speeds $v_1$ and $v_2$ then with $c_w^2$ as the constant speed of light equation \eqref{APP22} becomes
\begin{align*}
c_w^2\tau_2^2 - \abs{v_2}^2\tau_2^2 = c_w^2\tau_1^2- \abs{v_1}^2\tau_1^2.
\end{align*}
This requirement determines the behaviour of the times $\tau_1$ and $\tau_2$ by setting
\begin{align}
\label{APP24}
\frac{\tau_2}{\tau_1} = \frac{\sqrt{1-\frac{\abs{v_1}^2}{c_w^2}}}{\sqrt{1-\frac{\abs{v_2}^2}{c_w^2}}},
\end{align}
and then equation \eqref{APP22} can be solved as follows
\begin{align*}
c_w\tau_2 & = c_w\tau_1\cosh{(\xi)} - v_1\tau_1\sinh{(\xi)},
\\
v_2\tau_2 & = -c_w\tau_1\sinh{(\xi)} + v_1\tau_1\cosh{(\xi)},
\end{align*}
which fits \eqref{APP22} because $\cosh{(\xi)}^2-\sinh{(\xi)}^2 = 1$. If however $\gamma_m>0$ and the correlation is not zero then equation \eqref{APP21} holds and there should be a small correction to the time estimates in equation \eqref{APP24}.

\section{Double Quantum Mechanics}
The two particle Hamiltonian describing the elastic collision in Theorem \eqref{l:THEOREM5} has four terms where the first two represent the main particle energy and the third and fourth terms represent the energy of the incident particle. In the previous examples the incident particles in the third and fourth term were approximated by a static potential (Proposition \eqref{l:HAMILT12}) and the minimum calculated using Schr\"{o}dinger equation (Theorem \eqref{l:THEOREM6}). However, this dynamic approximation does not allow the $w_1,w_2$ velocity dependence and does not represent the actual incident particle motion. The easiest way to incorporate the incoming particle behaviour into \eqref{l:DRFT4} is to apply the differential equations \eqref{l:DRFT2aa} and \eqref{l:DRFT2ab} in Theorem \eqref{l:TIMEERG2} for the incident particle as was done for the main particle.

Hence using Theorem \eqref{l:TIMEERG2} for the incident particle assume that the incident particle satisfies \eqref{l:DRFTDEF0a} and \eqref{l:DRFTDEF0a} with drifts $b^+=b^+(x(t),y(t),t)$, $b^-=b^-(x(t),y(t),t)$ with a drift of $\sigma$ for the stochastic differential equation. So it is assumed that
\begin{subequations}
\begin{align}
\label{l:DRFTDEF1a}
d^+
\begin{pmatrix}
x(t)
\\
y(t)
\end{pmatrix}
= &
\begin{pmatrix}
x(t+\tau_2)-x(t)
\\
y(t+\tau_2)-y(t)
\end{pmatrix}
= b^+dt + \sigma dz^+(t)
\\
\label{l:DRFTDEF1b}
d^-
\begin{pmatrix}
x(t)
\\
y(t)
\end{pmatrix}
= &
\begin{pmatrix}
x(t)-x(t-\tau_1)
\\
y(t)-y(t-\tau_1)
\end{pmatrix}
= b^-dt + \sigma d^-z(t)
\\
\label{l:DRFTDEF1c}
\begin{pmatrix}
b^-
\\
b^-
\end{pmatrix}
= &
\begin{pmatrix}
b^+
\\
b^+
\end{pmatrix}
-\sigma \nabla \log(\rho),
\end{align}
where $d^+z(t),d^-z(t)$ is the 6 dimensional Gaussian process where $d^+z(t) = z(t+\tau_2) -z(t)$ and $d^-z(t) = z(t) - z(t-\tau_1)$ and $\nabla^T=\left(\frac{\partial}{\partial
x_1},...,\frac{\partial}{\partial x_3},\frac{\partial}{\partial y_1},...,\frac{\partial}{\partial y_3}\right)$. Here $\rho=\rho(x,y,t)$ is the associated probability density where $x$ refers to the main particle variables and $y$ refers to the incident particle variables.

From this it follows that
\begin{align}
\label{l:DRFT7aa}
\frac{\partial \rho}{\partial t}&= - \nabla^T\left(b^+\rho\right)+\frac{\sigma^2}{2} \Delta\rho,
\\
\label{l:DRFT7ab}
\frac{\partial \rho}{\partial t}&= -\nabla^T\left(b^-\rho\right)
- \frac{\sigma^2}{2} \Delta\rho,
\end{align}
so that
\begin{align}
\label{l:DRFT7ac}
\begin{split}
\frac{\partial \rho}{\partial t} & = -
\nabla^T\left(\left(\frac{b^++b^-}{2}\right)\rho\right),
\end{split}
\end{align}
\end{subequations}
where $\Delta=\sum_{i} \frac{\partial^2}{\partial x_i^2} +\sum_{i} \frac{\partial^2}{\partial y_i^2}$.

By this assumption the sequence of incoming incident particles in time is modelled in the form of the time-dependent continuous stochastic process also experienced by the main particle. It is possible to add correlations between the main particle and use a different variant for the incident particles but for the moment keep the calculations simple. The density $\rho=\rho(x,y,t)$ is assumed to deal with the joint distribution of main particles $v_1, v_2$ and incident particles $w_1, w_2$. Using \eqref{l:DRFT7aa}, \eqref{l:DRFT7ab} and \eqref{l:DRFT7ac} change Proposition \eqref{l:SCHROD1} as follows
\begin{thm}
\label{l:THEOREM7}
Let $b^+_j=b^+_j(x,y,t)$, $b^-_j=b^-_j(x,y,t)$ for $1\leq j \leq 6$ be the forward and backward drift for the main particle $x$ and incident particle $y$ then the expectation of the total energy in $E\left[\h_1\right]=E\left[\h_2\right]$ can be written as
\begin{align}
\label{l:HAMILT39}
\begin{split}
E\left[\h_1\right]=E\left[\h_2\right]
=&\frac{M}{2}\sum_{k=1}^{k=3}\left(E\left[\abs{\frac{b_j^++b_j^-}{2}}^2\right]
+\frac{\eta^2}{4M^2}E\left[\abs{\frac{\nabla_j \rho}{\rho}}^2\right]\right)
\\
&+\frac{m}{2}\sum_{k=4}^{k=6}\left(E\left[\abs{\frac{b_j^++b_j^-}{2}}^2\right]
+\frac{\eta_w^2}{4m^2} E\left[\abs{\frac{\nabla_j \rho}{\rho}}^2\right]\right)
\\
&+\frac{3(\eta+\eta_w)}{\sin{(\theta)}\overline{\tau}}
\end{split}
\end{align}
where
\begin{align}
\label{l:APP15}
\eta = M\sigma^2/\gamma_m,
\eta_w = M\gamma_m^3\sigma^2=m\gamma_m\sigma^2
\end{align}
and $\nabla_j=\frac{\partial}{\partial x_j}$ with $1 \leq j\leq 6$. The expectation $E[.]$ calculates an average over the forward, backward fluctuations $\rho=\rho(x,y,t)$ of $v_1,v_2$ and $w_1,w_2$ and over the position $x$.
\end{thm}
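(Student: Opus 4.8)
The plan is to obtain \eqref{l:HAMILT39} as the two-particle analogue of Proposition \eqref{l:HAMILT12}: instead of collapsing the third and fourth terms of \eqref{l:DRFT4} into a potential, I keep them and give them exactly the same stochastic treatment as the first two. First I would rewrite the identity of Theorem \eqref{l:THEOREM5}, using $M\gamma_m^2=m$ and $M\gamma_m^4=m\gamma_m^2$, in the split form
\begin{align*}
\h_1=\h_2=\frac{M}{2}\left(\abs{\frac{v_2+v_1}{2}}^2+\frac{1}{\gamma_m^2}\abs{\frac{v_2-v_1}{2}}^2\right)+\frac{m}{2}\left(\abs{\frac{w_2+w_1}{2}}^2+\gamma_m^2\abs{\frac{w_2-w_1}{2}}^2\right),
\end{align*}
so that the energy is a ``main'' block built from coordinates $1,\dots,3$ of the joint six-dimensional diffusion \eqref{l:DRFTDEF1a}--\eqref{l:DRFTDEF1c} and an ``incident'' block built from coordinates $4,\dots,6$, each of the same structure as the single-particle energy in Proposition \eqref{l:HAMILT12} but with its own mass prefactor and osmotic weight.

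Next I would substitute the forward/backward velocity representations $v_2=b^++\sigma\Delta^+z/\tau_2$, $v_1=b^-+\sigma\Delta^-z/\tau_1$ on the first three components and the analogous $w_2=b^++\sigma\Delta^+z/\tau_2$, $w_1=b^-+\sigma\Delta^-z/\tau_1$ on the last three, then take $E[\cdot\mid x,y]$. Exactly as in the proof of Proposition \eqref{l:HAMILT12}, every cross term drift$\times$increment vanishes (the Gaussian shocks have mean zero and are independent of the drifts), the forward shock is independent of the backward shock, and no $v^{T}w$ cross terms ever appear because \eqref{l:DRFT4} is a pure sum of squared norms, so only drift squares and shock variances survive. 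For the shock variances I would reuse the bookkeeping of Lemma \eqref{l:ENERG8}: a three-dimensional increment contributes $E[\abs{\Delta^{\pm}z/\tau}^2]=3E[1/\tau]=6/\overline{\tau}$, so the symmetric-plus-antisymmetric combination of the main block produces $\frac{3\sigma^2}{\overline{\tau}}\bigl(1+\frac{1}{\gamma_m^2}\bigr)$ and that of the incident block produces $\frac{3\sigma^2}{\overline{\tau}}\bigl(1+\gamma_m^2\bigr)$. Applying the exact identities $1+\frac{1}{\gamma_m^2}=\frac{2}{\gamma_m\sin\theta}$ and $1+\gamma_m^2=\frac{2\gamma_m}{\sin\theta}$, multiplying by the prefactors $M/2$ and $m/2$, and using $\eta=M\sigma^2/\gamma_m$ together with $\eta_w=m\gamma_m\sigma^2=M\gamma_m^3\sigma^2$, the two constant contributions collapse to $\frac{3\eta}{\sin\theta\,\overline{\tau}}$ and $\frac{3\eta_w}{\sin\theta\,\overline{\tau}}$, whose sum is the last line of \eqref{l:HAMILT39}.

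For the osmotic (antisymmetric) drift pieces I would insert the backward-drift relation from \eqref{l:DRFTDEF1c}, $b^-_j-b^+_j=-\sigma^2\,\partial_j\log\rho=-\sigma^2\,\partial_j\rho/\rho$ for $1\le j\le 6$, so that componentwise $\abs{(b^+_j-b^-_j)/2}^2=\frac{\sigma^4}{4}\abs{\partial_j\rho/\rho}^2$. Multiplying the main block's antisymmetric drift term by its weight $\frac{M}{2\gamma_m^2}$ gives the coefficient $\frac{M\sigma^4}{8\gamma_m^2}=\frac{\eta^2}{8M}=\frac{M}{2}\cdot\frac{\eta^2}{4M^2}$, while multiplying the incident block's antisymmetric drift term by its weight $\frac{m\gamma_m^2}{2}$ gives $\frac{m\gamma_m^2\sigma^4}{8}=\frac{\eta_w^2}{8m}=\frac{m}{2}\cdot\frac{\eta_w^2}{4m^2}$; together with the symmetric drift squares $\frac{M}{2}E[\abs{(b^+_j+b^-_j)/2}^2]$ summed over $j=1,2,3$ and $\frac{m}{2}E[\abs{(b^+_j+b^-_j)/2}^2]$ summed over $j=4,5,6$, this reproduces \eqref{l:HAMILT39} exactly. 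The only point needing real care is checking that the $\gamma_m^2$ weight on the incident osmotic term combines with the mass factor $m$ to yield precisely $\eta_w^2/(4m^2)$ with $\eta_w=\gamma_m^2\eta=m\gamma_m\sigma^2$ — that is, that the effective diffusion constant of the incident particle sits at exactly this power of $\gamma_m$ and not another; once that normalization is pinned down, the rest is just the Gaussian-variance accounting already carried out once in Section 3, now done for both blocks.
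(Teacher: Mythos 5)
Your proposal is correct and follows essentially the same route as the paper's own proof: substitute the componentwise forward/backward velocity representations into the four-term energy of Theorem \eqref{l:THEOREM5}, kill the cross terms, account for the shock variances via $E[1/\tau]=2/\overline{\tau}$ and the identity $\sin\theta=2\gamma_m/(1+\gamma_m^2)$, convert the antisymmetric drift differences to $\sigma^4\abs{\nabla_j\rho/\rho}^2/4$, and identify the constants $\eta=M\sigma^2/\gamma_m$, $\eta_w=m\gamma_m\sigma^2$. The only cosmetic difference is that you distribute the mass prefactors $M/2$ and $m/2$ onto the two blocks at the outset, whereas the paper carries a single overall $M/2$ with internal $\gamma_m^2$ weights until the final step; the arithmetic is identical.
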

\begin{proof}
The first two terms in \eqref{l:HAMILT39} in a similar fashion in Proposition \eqref{l:HAMILT12} and the terms that require an explanation are the expectation of the new kinetic energy term for the incident particle and the new resident variance $\eta_w$. Using the expressions for backward and forward drift processes as defined in \eqref{l:HAMILT39} expectation over \eqref{l:DRFT4} reduces to
\begin{align*}
&E\left[\abs{\frac{v_2+v_1}{2}}^2 + \frac{1}{\gamma_m^2}\abs{\frac{v_2-v_1}{2}}^2\right]+
\gamma_m^2E_w\left[\abs{\frac{w_2+w_1}{2}}^2 + \gamma_m^2
\abs{\frac{w_2-w_1}{2}}^2\right]
\\
& \quad =\sum_{j=1}^{j=3}E\left[\abs{\frac{b_j^++b_j^-}{2}+ \frac{1}{2}\sigma\frac{\Delta^+z_j}{\tau_2}+\frac{1}{2}
\sigma\frac{\Delta^-z_j}{\tau_1} }^2\right]
+
\\
& \qquad +\frac{1}{\gamma_m^2} \sum_{j=1}^{j=3}E\left[\abs{\frac{b_j^+-b_j^-}{2}
+\frac{1}{2}\sigma\frac{\Delta^+z_j}{\tau_2}-\frac{1}{2}
\sigma\frac{\Delta^-z_j}{\tau_1} }^2\right]
\\
& \qquad +\gamma_m^2\sum_{j=4}^{j=6}E\left[\abs{\frac{b_j^++b_j^-}{2}+
\frac{1}{2}\sigma\frac{\Delta_j^+z}{\tau_2}+\frac{1}{2}
\sigma\frac{\Delta_j^-z}{\tau_1} }^2\right]
\\
& \qquad +\gamma_m^4 \sum_{j=4}^{j=6}E\left[\abs{\frac{b_j^+-b_j^-}{2}
+\frac{1}{2}\sigma\frac{\Delta_j^+z}{\tau_2}-\frac{1}{2}
\sigma\frac{\Delta_j^-z}{\tau_1} }^2\right],
\end{align*}
which can be simplified to
\begin{align}
\label{l:HAMILT5}
\begin{split}
&\sum_{j=1}^{j=3}E\left[\abs{\frac{b_j^++b_j^-}{2}}^2
+\frac{1}{\gamma_m^2}\abs{\frac{b_j^+-b_j^-}{2}}^2\right]
+\frac{3\sigma^2}{2}\left(1+\frac{1}{\gamma_m^2}
\right)E\left[\frac{1}{\tau}\right]
\\
&\quad+\gamma_m^2\sum_{j=4}^{j=6}E\left[\abs{\frac{b_j^++b_j^-}{2}}^2
+\gamma_m^2\abs{\frac{b_j^+-b_j^-}{2}}^2\right]
+\frac{3\sigma^2\gamma_m^2(1+\gamma_m^2)}{2}E\left[\frac{1}{\tau}\right]
\\
&=\sum_{j=1}^{j=3}E\left[\abs{\frac{b_j^++b_j^-}{2}}^2
+\frac{\sigma^4}{4\gamma_m^2}\abs{\frac{\nabla_j \rho}{\rho}}^2\right]
+\frac{6\sigma^2}{\gamma_m\sin(\theta)\overline{\tau}}
\\
&\quad+\gamma_m^2\sum_{j=4}^{j=6}E\left[\abs{\frac{b_j^++b_j^-}{2}}^2
+\frac{\gamma_m^2\sigma^4}{4}\abs{\frac{\nabla_j \rho}
{\rho}}^2\right]
+\frac{6\gamma_m^3\sigma^2}{\sin(\theta)\overline{\tau}},
\end{split}
\end{align}
assuming that $E[1/\tau] = 2 / \overline{\tau}$, $\nabla_j=\frac{\partial}{\partial x_j}$ and using the definition for $\sin(\theta)=2\gamma_m/(1+\gamma_m^2)$ provided above.

These constants can be estimated as follows using the definition \eqref{l:APP15} so that
\begin{align}
\label{l:APP19}
\begin{split}
\frac{\sigma^4}{4\gamma_m^2} & =\frac{\eta^2}{4M^2}
\\
\frac{\gamma_m^2\sigma^4}{4} & = \frac{\eta_w^2}{4\gamma_m^4M^2}=\frac{\eta_w^2}{4m^2}
\end{split}
\end{align}
so that $\sigma^2/\gamma_m = \eta/M$ and $\sigma^2\gamma_m = \eta_w/m$ so clearly then
\begin{align*}
E[\h_1] = E[\h_2]
=&\frac{M}{2}\sum_{j=1}^{j=3}E\left[\abs{\frac{b_j^++b_j^-}{2}}^2
+\frac{\eta^2}{4M^2}\abs{\frac{\nabla_j \rho}{\rho}}^2\right]
+\frac{3\eta}{\sin(\theta)\overline{\tau}}
\\
&+\frac{m}{2}\sum_{j=4}^{j=6}E\left[\abs{\frac{b_j^++b_j^-}{2}}^2
+\frac{\eta_w^2}{4m^2}\abs{\frac{\nabla_j \rho}
{\rho}}^2\right]
+\frac{3\eta_w}{\sin(\theta)\overline{\tau}},
\end{align*}
which settles the proof.
\end{proof}

Equation \eqref{l:HAMILT39} show the energy embedded in the main and incident particle both modeled using stochastic processes but since there are no other effects on the two particles \eqref{l:HAMILT39} must be constant in time. In the previous example a wave function was employed to find the minimum energy but the Theorem below shows a new wave function emerges. This specifies exactly how the density for the space variable specified by $\rho=\rho(x,y,t)$ is defined.
\begin{thm}
\label{l:THEOREM8}
Assume that $\eta = M\sigma^2/\gamma_m$, $\eta_w =m\gamma_m\sigma^2$, $\Delta_x=\left(\frac{\partial^2}{\partial
x^2_1}+...+\frac{\partial^2}{\partial x^2_3}\right)$ and $\Delta_y=\left(\frac{\partial^2}{\partial
y^2_{1}}+...+\frac{\partial^2}{\partial y_{3}^2}\right)$. Also define $R_1=R_1(x,t), S_1=S_1(x,t)$ and let $R_2=R_2(y,t), S_2=S_2(y,t)$ as suitable functions with $R=R_1(x,t)+R_2(y,t), S=S_1(x,t)+S_2(y,t)$. Then in the absence of angular momenta or spin the Hamiltonian in \eqref{l:HAMILT39} (or averaged \eqref{l:DRFT13}) is
time independent if the main particle has probability density $\rho=\rho(x,y,t)=\rho_1(x,t)\rho_2(y,t)=\rho_1\rho_2$ for the main particle position process $x(t)$ derived from the wave functions $\psi_1=\psi_1(x,t)$, $\psi_2=\psi_2(y,t)$ where
\begin{align}
\label{l:SCHROD7}
\begin{split}
\psi_1 &=\psi_1(x,t)=
\exp\left[\frac{M\left(R_1+i
S_1\right)}{\eta}\right],
\end{split}
\end{align}
and
\begin{align}
\label{l:SCHROD8}
\begin{split}
\psi_2 &=\psi_2(y,t)=
\exp\left[\frac{m\left(R_2+i
S_2\right)}{\eta_w}\right].
\end{split}
\end{align}
The Schr\"{o}dinger's equations satisfy
\begin{align}
\label{l:SCHROD10}
\begin{split}
i\eta\frac{\partial \psi_1(x,t)}{\partial t} & = -\frac{\eta^2}{2M}\Delta_x\psi_1(x,t)+\overline{V_1(x,t)}\psi_1(x,t),
\\
i\eta_w\frac{\partial \psi_2(y,t)}{\partial t}&=-\frac{\eta_w^2}{2m}\Delta_y\psi_2(y,t)-\overline{V_2(y,t)}\psi_2(y,t),
\\
\rho(x,y,t)&=\rho_1(x,t)\rho_2(y,t)=\,|\psi_1(x,t)\psi_2(y,t)\,|^2,
\end{split}
\end{align}
for any six-dimensional potential $V=V(x,y,t)$ where
\begin{align*}
\overline{V_1(x,t)} & = \int \rho_2(y,t)\left(V_1(x,y,t)+\Psi_1(x,y,t)\right)dy,
\\
\overline{V_2(y,t)} & = \int \rho_1(x,t)\left(V_2(x,y,t)+\Psi_2(x,y,t)\right)dx,
\end{align*}
and
\begin{align*}
\frac{\partial V_1(x,y,t)}{\partial t} & = \big(\nabla S\big)^T \nabla\Psi_1(x,y,t),
\\
\frac{\partial V_2(x,y,t)}{\partial t} & = \big(\nabla S\big)^T \nabla\Psi_2(x,y,t),
\end{align*}
with
\begin{align*}
\sum_{j=1}^{6}\bigg(\gamma_m^2\int \rho_1 \nabla_jS_1 \nabla_j \big(\overline{V_1(x,t)}\big)dx
- \int\rho_2 \nabla_jS_2 \nabla_j \big(\overline{V_2(y,t)}\big)dy\bigg)=0.
\end{align*}
The total (average) energy of the main and incident particle now equals
\begin{align}
\label{l:HAMILT6}
\begin{split}
E\left[\h_1\right] & = E\left[\h_2\right]=
\frac{\eta^2}{2M}E\left[ \frac{\abs{\nabla\psi_1}^2}{\abs{\psi_1}^2}\right]+
\frac{\eta_w^2}{2m}E\left[ \frac{\abs{\nabla\psi_2}^2}{\abs{\psi_2}^2}\right]+ \frac{3\left(\eta+\eta_w\right)}{\sin(\theta)\overline{\tau}}
\\
& = \frac{\eta^2}{2M}\int \rho\abs{\nabla\psi_1}^2 dx +
\frac{\eta_w^2}{2m}\int \rho\abs{\nabla\psi_2}^2 dx + 
\frac{3\left(\eta+\eta_w\right)}{\sin(\theta)\overline{\tau}},
\end{split}
\end{align}
\begin{proof}
The proof is presented in Appendix \eqref{l:APP14}. This proof uses Appendix \eqref{l:APP5} to apply the result to both the main and incidently particle simultaneously. Instead of entering an arbitrary force the interaction uses both drifts as forces to stop the energy criterium from gaining or loosing energy.
\end{proof}
\end{thm}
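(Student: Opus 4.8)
The plan is to replay the Nelson-Madelung argument that proves Theorem~\ref{l:THEOREM6}, now simultaneously for the main and incident blocks of coordinates, the genuinely new ingredients being that the six-dimensional density is constrained to factorise and that the coupling potential $V(x,y,t)$ gets averaged against the ``spectator'' marginal. I would begin from the conserved energy of the coupled system: the expression $E[\h_1]=E[\h_2]$ assembled in Theorem~\ref{l:THEOREM7} together with the contribution of the interaction $V$,
\begin{align*}
\mathcal{E}(t) &= \frac{M}{2}\sum_{j=1}^{3}\left(E\left[\abs{\tfrac{b_j^{+}+b_j^{-}}{2}}^{2}\right]+\frac{\eta^{2}}{4M^{2}}E\left[\abs{\tfrac{\nabla_j\rho}{\rho}}^{2}\right]\right)+\frac{m}{2}\sum_{j=4}^{6}\left(E\left[\abs{\tfrac{b_j^{+}+b_j^{-}}{2}}^{2}\right]+\frac{\eta_w^{2}}{4m^{2}}E\left[\abs{\tfrac{\nabla_j\rho}{\rho}}^{2}\right]\right)\\
&\quad+\int\rho\,V\,dx\,dy+\frac{3(\eta+\eta_w)}{\sin(\theta)\overline{\tau}},
\end{align*}
and impose $d\mathcal{E}/dt=0$, with $\rho=\rho(x,y,t)$ transported by the continuity equation~\eqref{l:DRFT7ac}. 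Inserting the polar ansatz \eqref{l:SCHROD7}--\eqref{l:SCHROD8}, i.e.\ $R=R_1(x,t)+R_2(y,t)$, $S=S_1(x,t)+S_2(y,t)$, $\rho=\rho_1\rho_2=\abs{\psi_1}^{2}\abs{\psi_2}^{2}$, the drifts split, by the analogue of \eqref{l:DRFT15}, into a current part ($\nabla_xS_1$ for $j\le 3$, $\nabla_yS_2$ for $j\ge 4$) and an osmotic part proportional to $\nabla\rho_1/\rho_1$ resp.\ $\nabla\rho_2/\rho_2$; since $\nabla_x$ annihilates $\rho_2$ and $\nabla_y$ annihilates $\rho_1$, the first sum involves only $(R_1,S_1)$ and the second only $(R_2,S_2)$. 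Using $\eta=M\sigma^{2}/\gamma_m$, $\eta_w=m\gamma_m\sigma^{2}$ from \eqref{l:APP15} together with the arithmetic \eqref{l:APP19}, the two kinetic-plus-osmotic sums collapse by the Madelung identity to $\tfrac{\eta^{2}}{2M}\int\abs{\nabla\psi_1}^{2}\,dx$ and $\tfrac{\eta_w^{2}}{2m}\int\abs{\nabla\psi_2}^{2}\,dy$, which is already the claimed energy \eqref{l:HAMILT6} once the evolution has been pinned down.

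Next I would check that the product ansatz is compatible with the kinematics: with $\nabla_xS=\nabla_xS_1$ and $\nabla_yS=\nabla_yS_2$ the continuity equation~\eqref{l:DRFT7ac} separates as
\begin{align*}
\rho_2\bigl(\partial_t\rho_1+\nabla_x^{T}(\rho_1\nabla_xS_1)\bigr)+\rho_1\bigl(\partial_t\rho_2+\nabla_y^{T}(\rho_2\nabla_yS_2)\bigr)=0,
\end{align*}
which is solved by making each bracket vanish; these are exactly the imaginary parts of the two lines of \eqref{l:SCHROD10}. Differentiating $\mathcal{E}(t)$, integrating by parts in $x$ and in $y$ separately and using these continuity relations, the obstruction to $d\mathcal{E}/dt=0$ reduces to an $x$-integral linear in $\partial_tS_1,\partial_tR_1$ plus a $y$-integral linear in $\partial_tS_2,\partial_tR_2$. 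Forcing the coefficient of $\partial_tS_1$ to vanish yields the Hamilton-Jacobi equation that is the real part of the first line of \eqref{l:SCHROD10}, and likewise for $\partial_tS_2$; the six-dimensional $V$ enters each only through its average against the spectator marginal, which is why the effective potentials are $\overline{V_1}=\int\rho_2(V_1+\Psi_1)\,dy$ in the $x$-equation and $\overline{V_2}=\int\rho_1(V_2+\Psi_2)\,dx$ in the $y$-equation, the auxiliary fields $\Psi_1,\Psi_2$ with $\partial_tV_k=(\nabla S)^{T}\nabla\Psi_k$ absorbing the explicit time dependence of $V$ exactly as $\Psi$ does in Theorem~\ref{l:THEOREM6}. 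The sign reversal in front of $\overline{V_2}$ and the precise split $V\mapsto(V_1,V_2)$ are then fixed by the remaining demand that the two time-stationarity conditions be mutually consistent --- the leftover cross terms coming from the time dependence of $\rho_2$ inside $\overline{V_1}$ and of $\rho_1$ inside $\overline{V_2}$ must cancel, which is precisely the displayed identity $\sum_{j}\bigl(\gamma_m^{2}\int\rho_1\nabla_jS_1\nabla_j\overline{V_1}\,dx-\int\rho_2\nabla_jS_2\nabla_j\overline{V_2}\,dy\bigr)=0$. Recombining each particle's real and imaginary parts through the Madelung substitution returns \eqref{l:SCHROD10}, and evaluating $\mathcal{E}$ on the solution returns \eqref{l:HAMILT6}.

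The step I expect to be the genuine obstacle is the decoupling itself: one must verify that the single scalar constraint $d\mathcal{E}/dt=0$ on the six-dimensional problem is actually satisfiable by a separated product density, while correctly bookkeeping the two distinct diffusion scales $\eta$ and $\eta_w$ (with $\eta_w=\gamma_m^{4}\eta=(m/M)^{2}\eta$), the mean-field reduction of $V$, the sign carried by $\overline{V_2}$, and the interplay of the time-dependent-potential corrections $\Psi_1,\Psi_2$ with those averages. The cleanest way to keep this under control is to treat $d\mathcal{E}/dt=0$ as a stationarity condition: regard $\mathcal{E}$ as a functional of $R_1,S_1,R_2,S_2$, take independent variations, and read off four Euler-Lagrange equations --- the pair in $(R_1,S_1)$ recombining into the first line of \eqref{l:SCHROD10} and the pair in $(R_2,S_2)$ into the second --- the transversality of the variation forcing the displayed compatibility identity. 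As in Theorem~\ref{l:THEOREM6}, one should flag that the factorised wave function is a sufficient, not a necessary, solution, and that the constant $3(\eta+\eta_w)/(\sin(\theta)\overline{\tau})$ is immaterial so long as the collision rate $1/\overline{\tau}$ does not vary in time; existence of the underlying weak diffusions for $V$ in a Kato-Rellich class follows from the work of Carlen cited for Theorem~\ref{l:TIMEERG2}.
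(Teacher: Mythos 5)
Your proposal follows essentially the same route as the paper's Appendix~\ref{l:APP14}: a separated Madelung ansatz $R=R_1+R_2$, $S=S_1+S_2$ with product density $\rho=\rho_1\rho_2$, time-differentiation of the conserved functional, repeated partial integration to reduce $d\mathcal{E}/dt$ to integrals of $\rho\,\nabla_jS\,\nabla_j(\cdot)$ against two Hamilton--Jacobi residuals, identification of those residuals with Schr\"{o}dinger equations at the two scales $\eta$ and $\eta_w$, and the displayed scalar identity as the leftover condition. Two points of divergence are worth recording. First, the paper's functional \eqref{l:HAMILT39} contains \emph{no} potential term: the potentials $V_1,V_2$ (with the sign flip on $V_2$) are introduced only at the very end as a parametrization of the Hamilton--Jacobi residual of $\text{(a)}+\text{(b)}$, and the displayed identity is literally the statement $\text{(a)}+\text{(b)}=0$ after that substitution; you instead front-load $\int\rho V\,dx\,dy$ into $\mathcal{E}$ and attribute the identity to cross-terms from the spectator averages. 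Your version is the more standard physical bookkeeping and yields the same equations \eqref{l:SCHROD10}, but the compatibility identity then arises for a different reason than in the paper, and you should verify that the two residual conditions actually coincide rather than assert it. Second, the paper's computation hinges on the block-dependent weights $\alpha'(j),\beta'(j)$ of \eqref{l:APP20}, which are the device that turns the mass-ratio-weighted four-term Nelson measure into the unweighted sum $\sum_j\int\rho\left(\abs{\nabla_jS}^2+\abs{\nabla_jR}^2\right)$ and produces the two distinct coefficients $\eta/M$ and $\eta_w/m$ in the two Hamilton--Jacobi equations; in particular the mean drift of the incident block is $\alpha'(j)\nabla_jS_2$ with $\alpha'(j)=1/\gamma_m$, not $\nabla_jS_2$ as in your separated continuity equation. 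You cite the correct scaling relations \eqref{l:APP19}, so this is a bookkeeping omission rather than a conceptual one, but the ``collapse by the Madelung identity'' step does not go through as literally written without those weights.
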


\section{Conclusions}

The paper considers elastic collisions between classical two-particle ignoring angular momentum, spin and additional particle properties. Theorem \eqref{l:THEOREM5} shows that the conserved energy for an elastic collision between a main and incident particle allows to be converted into a sum of four terms where two terms relate to the non-conserved main particle energy and two terms relate to the incident particle energy. The first two terms for the main particle satisfy a weighted form of Nelson's measure while the remaining terms also use Nelson's form but have a different weighting due to the mass differences. The weighting on the four forms here are unique but were absent from the original paper presented by Nelson~\cite{ENELSON1}. Two of the terms are average energies while the second difference terms in the main and incident particles are referred to as the main and incident osmotic terms and relate to the diffusion process. The proof of this conserved energy is relatively straightforward from the definition of the input $(v_1,w_1)$ / output $(v_2,w_2)$ particle velocities. Theorem \eqref{l:THEOREM5} has been derived assuming a one-dimensional particle interaction at the time of the collision and using projections to complete the elastic collision.

This unique classical collision energy distribution is classical and can be applied to equilibrium problems in quantum mechanics, statistical mechanics, cosmology, Dark Matter, Dark Energy or mathematical Finance. Presumably the method can be extended to multiple collisions and there seems the possibility of extending the approach to a field form (by employing a continuous number of dimensions). First to start modelling the elastic energy equation the incident particle energy is condensed to a potential and the motion of the main particle is provided by a stochastic differential equation concentrating the two-particle Hamiltonian into a mean-average motion for the main particle. So the incident particle motion is now encompassed by a non-stochastic potential ignoring the diffusion effects. Proposition \eqref{l:HAMILT12} shows the proper form for the collision energy for the main particle and Theorem \eqref{l:THEOREM6} proves how minimizing the energy expression requires that the position of the main particle satisfies Schr\"{o}dinger's equation. Because of the weighting it is clear that the diffusion is small for small incident particle mass ratios $m << M$ and large for cases where $m >> M$ for the main particle.

The resulting Schr\"{o}dinger equation is a classical result as illustrated in this paper but the analysis here does not relate to quantum mechanics as the variance parameter of the Schr\"{o}dinger equation is defined as $\eta=M\sigma^2/\gamma_m=M^{3/2}\sigma^2/m^{1/2}$. In principle quantum mechanics can be easily derived out of assuming that $\eta=\hbar$ but that does not completely define the quantal process. There are other objections to this approach as Wallstrom~\cite{WALLSTROM1} argued that additional conditions on the discreteness of angular momentum does not follow from the stochastic derivation. Though this is a clear requirement on the theory it is also clear that the original paper from Nelson nor this paper deal with angular momentum requirements in the main particle energy representation. Angular representation can be included relatively easily into Theorem \eqref{l:THEOREM5} as it consists of angular moment terms and quadratic angular terms. By explicit inclusion angular momentum and even spin can be included easily by construction.

Using the stochastic differential equation to derive Schr\"{o}dinger's equation is very useful but the Gaussian stochastic differential equations provide quite un-physical paths. Though the formulation provides proper distributions for main particle positions the associated velocity is infinite at all times an Hamiltonian theory actively using Gaussian processes will obtain unpleasant surprises. A better approach could be to use more direct distribution transition assumptions employing binomial, multi-nomial or equivalent processes that satisfy the proper statistical layout as specified in Proposition \eqref{l:HAMILT12} or Theorem \eqref{l:THEOREM2}. The forward steps are defined as they are here in equation \eqref{l:DRFTDEF0a} while the backward distribution \eqref{l:DRFTDEF0b} is defined with an application of Bayes' Theorem on multi-nomials. In that case a new form for the continuity equation from \eqref{l:DRFT2ab} for the multi-nomial distribution form needs to be found. The advantage of multi-nomial distributions is that they represent interaction times and define velocities appropriately so that for instance QED developments using Feynman's~\cite{FEYNMAN1} ideas could be represented more directly. To avoid the re-normalisation procedures there are many areas that require a better representation .

Notice that generating Schr\"{o}dinger's equation in Theorem \eqref{l:THEOREM6} is slightly different from the standard used in quantum mechanics. Equation \eqref{l:SCHROD5} uses a speed adjustment term $\Psi$ defined in equation \eqref{l:APP18} and is only non-zero when the potential term depends on time explicitly. A time-dependent potential between the main and incident particles needs to account for the behaviour of the time dependence of $\int \rho V dx$ where $V=V(x,t)$ represents the potential. Also notice that the drift terms in equation \eqref{l:DRFT15} has the mass ratio incorporated into the potential width term since $M\gamma_m^2=m$. Finally the energy terms in \eqref{l:HAMILT4} are presented proportional to $\eta^2$ according to regular quantum mechanics. Though the weighting of the osmotic term exists in the drift term the wave-function and energy are exactly in line with quantum mechanics.

The first example in this paper models the behaviour of the main particle assuming that all interacting incident particles have a multiple of energies proportional to $\overline{e}=mc_w^2/2$. Clearly the Schr\"{o}dinger equation now demands that all energy solutions (associated energy Eigenvectors) for the main and incident particles produce multiples of $\overline{e}$. The joint energy for the main particle and incident system becomes $(k mc_w^2/2=k\overline{e}, k\geq 0)$ and it is possible that main system does not have any energy while the incident particle has energy $\overline{e}$. It demonstrates the start of a quantum system of main particles residing in a heatbath of particles with energies proportional to $\overline{e}$. Presumably it can be projected onto quantum mechanics if the energy assumes the appropriate distribution of energy assuming that $\hbar = \overline{e}\overline{\tau}$ where $\overline{\tau}$ is the usual average quantal interaction time. This argument will apply to photons as well though their masses are zero but the momentum $p=mc_w$ (photons $p=\hbar k$) and energy $\overline{e}$ (photon $\abs{k}c$) are well defined.

The two-particle energy in Theorem \eqref{l:THEOREM5} can be solved via Proposition \eqref{l:HAMILT12} and the Schr\"{o}dinger equation if the potential can be determined directly from Newtonian physics as witnessed for instance in the search for Dark Matter. The first term for the incident particles $V(x,t)$ was estimated from Newton while the incident osmotic term used acceleration. For diffusion then the Schr\"{o}dinger equation was employed to deal with two potentials simultaneously. The first term was estimated from Newtonian mechanics as per assumption that all incident particles - in this case randomly moving stars within the galactic cluster - are moving according to Newton. Then for the osmotic term the axial motion of the stellar system $r$ parsecs will be $\abs{\frac{w_2+w_1}{2}}^2 = \frac{GM_R}{r}$ where $G$ is the gravitational constant and $M_R$ is the estimated central mass of stars at the galactic cluster. A similar estimate for the incident particle osmotic term will look at the acceleration and then multiply with the stellar system speed to find that $\abs{\frac{w_2-w_1}{2}}^2 = \frac{G^2M_R^2}{r^4}$. Finally an approximation of the probability distribution is provided via the Schr\"{o}dinger equation in Proposition \eqref{l:HAMILT12}.

This analysis suggests that the Newtonian model for galaxies is not quite right as the osmotic terms in the incident term tends to change the radial attraction potential considerably as a result of diffusion. But Milgrom~\cite{MILGROM} is quite right to assume that the Newtonian model is not sufficient if there is substantial diffusion caused by incident particles. The model here is clearly showing additional attraction at smaller radius to the centre of the galaxy though the assumption that all galaxy mass is centred at the centre of the galaxy is clearly inaccurate. The mass distribution calculated around the galaxy has quite a different profile due to its own gravity and many authors have taken this into account, see Penrose~\cite{PENROSE1}. Notice that it is possible to adjust the theory presented by a weighting factor though the user has to be careful to adjust $\chi(r)$ in a self-consistent manner. The MOND theory is built on the basic assumptions of the modified dynamics, which were shown earlier to reproduce dynamical properties of galaxies and galaxy aggregates without having to assume the existence of hidden mass. Whether additional terms in the potential employing Schr\"{o}dinger's equation solves the Dark Matter absence remains to be seen.

There is another solution to the velocity equation in Theorem \eqref{l:THEOREM3} as presented in Theorem \eqref{l:THEOREM2} using Eigenvectors and Eigenvalues. The solution provides the average velocity field and restricts the size of the $b$ (determines $v_1, w_1$) and $b^\bot$ (determines $v_2, w_2$) solutions. An interesting application is Theorem \eqref{l:THEOREM3} which shows that all fluctuations around the average motion are protected before and after the collision. Once the average velocity $a$ has been subtracted the energy expression for both colliding particle has a term that depends on $\gamma_m^2a^T(b+b^\bot)$. If the average speed $a$ is not correlated with $w_1, w_2$ then the relation shows the Minkowski surface for relativistic particles. What is really quite interesting is that the Minkowksi equation holds immediately if the mass tends to zero so in other words when we are dealing with photons. But at positive mass where small amounts of energy will be exchanged with the main particle fall under quantal effects but the solution is described in terms of the equation involving the average speed $a$.

The preserved energy for the main and incident particle in \eqref{l:DRFT4} represents four weighted terms showing the average motion and the osmotic contributions with the incident particle's contribution so far represented as a fixed potential, However, given the form of the incident particle it is clear that it can be approximated in the same way as the main particle so there should be two Schr\"{o}dinger equations for both particles at the same time. As Theorem \eqref{l:THEOREM8} shows there are two Schr\"{o}dinger equations represent the fluctuations that are being sent and received but the two equations have different diffusion constants. The quantal process providing energy has a much smaller diffusion constant since $\eta_w =\gamma_m^4\eta$ plus this Schr\"{o}dinger has a negative potential term showing the energy the incident particle is providing to the main solution. The main particle Schr\"{o}dinger equation shows a similar potential receiving energy from the incident particle while the incident particle is delivering energy.

Notice that for the Schr\"{o}dinger equations in Theorem \eqref{l:THEOREM8} the variance $\eta=M^{3/2}\sigma^2/m^{1/2}$ describes the main particle with mass $M$ receiving energy from the incident particle with mass $m$ while $\eta_w=m^{3/2}\sigma^2/M$ shows the incident particle mass $m$ providing energy to the main particle $M$. So far in this paper no effort has been made to prove existence of the solutions nor have any examples of this case provided. Ultimately, these two equations show a 6 dimensional equation for both solutions at the same time and Theorem \eqref{l:THEOREM8} shows the form of the equations together with the simple energy condition. This line of research might be able to prove Dirac's equation while pursuing the relativistic ideas of Serva~\cite{SERVA1} or Guerra~\cite{GUERRA1}.

% ------------------------------------------------------------------------

%\subsection*{Acknowledgment}

% ------------------------------------------------------------------------
%Included for Gather Purpose only:
%AVRAHAN1AVRAHAN1ECARL41input "Xbib.bib"
\bibliographystyle{phcpc}
\bibliography{xbib}

\appendix
\section{}
\label{l:AppendixA} \noindent {\bf Proof of Theorem
\eqref{l:THEOREM3}}
\begin{proof}
Figure \ref{COLL3} clearly shows the elastic collision at point $x$ for $v_1, w_1, v_2, w_2$ dividing the particle collision along $\phi$ with the remaining motion orthogonal to $\phi$. This proof assumes a perfect collision along the $\phi$ axis and assumes that the remaining momenta and energies are unaltered.

For the proof separate the motion of n-dimensional velocities $v_1, w_1, v_2, w_2$ along $\phi$ with $P(\phi)v_1$, $P(\phi)w_1$, $P(\phi)v_2$ and $P(\phi)w_2$ and present its perpendicular motion using $(I-P(\phi))v_1$, $(I-P(\phi))w_1$, $(I-P(\phi))v_2$ and $(I-P(\phi))w_2$. Then for the collision clearly
\begin{align}
\label{l:ENERG4}
\begin{split}
v_1 & = (\phi\phi^T)v_1 + (v_1 - (\phi\phi^T)v_1)=P(\phi)v_1+\left(I-P(\phi)\right)v_1,
\\
w_1 & = (\phi\phi^T)w_1 + (w_1 - (\phi\phi^T)w_1)=P(\phi)w_1+\left(I-P(\phi)\right)w_1,
\\
v_2  & = (\phi\phi^T)v_2 + (v_2 - (\phi\phi^T)v_2)=P(\phi)v_2+\left(I-P(\phi)\right)v_2,
\\
w_2  & = (\phi\phi^T)w_2 + (w_2 - (\phi\phi^T)w_2)=P(\phi)w_2+\left(I-P(\phi)\right)w_2,
\end{split}
\end{align}
and the total momentum before and after the collision can be written as
\begin{align}
\label{l:ENERG1}
\begin{split}
\mathcal{M}_1(x) & = Mv_1 + mw_1
\\
& = MP(\phi)v_1+M(1-P(\phi))v_1 +
mP(\phi)w_1+m(1-P(\phi))w_1,
\\
\mathcal{M}_2(x) & = Mv_2 + mw_2
\\
& = MP(\phi)v_2+M(1-P(\phi))v_2 +
mP(\phi)w_2+m(1-P(\phi))w_2.
\end{split}
\end{align}

The pre-collision energy $\h_{1}(x)$ is given by $M\abs{v_1}^2$ (multiplied with factor 2) which decomposed in $P(\phi)$ and $(I-P(\phi))$ becomes
\begin{align}
\label{l:ENERG15}
\begin{split}
2\h_{1}(x) = & M\abs{v_1}^2 + m\abs{w_1}^2
\\
= & M\abs{P(\phi)v_1+(I-P(\phi))v_1}^2 +  m\abs{P(\phi)w_1+(I-P(\phi))w_1}^2
\\
= & Mv_1^TP(\phi)P(\phi)v_1 +Mv_1^T(I-P(\phi))(I-P(\phi))v_1
\\
& + mw_1^TP(\phi)P(\phi)w_1 +mw_1^T(I-P(\phi))(I-P(\phi))w_1,
\\
 = & Mv_1^TP(\phi)v_1 +Mv_1^T(I-P(\phi))v_1 +
 mw_1^TP(\phi)w_1 +mw_1^T(I-P(\phi))w_1,
\end{split}
\end{align}
because $P(\phi)$ and $(I-P(\phi))$ are orthogonal projections so that $P(\phi)P(\phi)=P(\phi)$ and
$(I-P(\phi))(I-P(\phi))=(I-P(\phi))$. Similarly using the same projections the after-collision energy $\h_2(x)$ becomes
\begin{align}
\label{l:ENERG2}
\begin{split}
2\h_{2}(x) & = M\abs{v_2}^2+m\abs{w_2}^2
\\
& = Mv_2^TP(\phi)v_2+Mv_2^T(1-P(\phi))v_2 +
mw_2^TP(\phi)w_2+mw_2^T(1-P(\phi))w_2.
\end{split}
\end{align}

Since the motions $v_1-(\phi\phi^T)v_1 = (1-P(\phi))v_1$, $w_1-(\phi\phi^T)w_1=(1-P(\phi))w_1$
are orthogonal to the collision surface this part of the motion must remain unchanged during the collision. Hence
\begin{align}
\label{l:ENERG17}
\begin{split}
(1-P(\phi))v_2 & = (1-P(\phi))v_1,
\\
(1-P(\phi))w_2 & = (1-P(\phi))w_1,
\end{split}
\end{align}
while $\phi^Tv_1, \phi^Tw_1$ collide with each other along $\phi$ yielding $\phi^Tv_2, \phi^Tw_2$. Putting \eqref{l:ENERG17} back into $\eqref{l:ENERG4}$ results in
\begin{align*}
v_1 & = P(\phi)v_1 + (I - P(\phi))v_1,
\\
w_1 & = P(\phi)w_1 + (I - P(\phi))w_1,
\\
v_2  & = P(\phi)v_2 + (I - P(\phi))v_1,
\\
w_2  & = P(\phi)w_2 + (I - P(\phi))w_1,
\end{align*}
and that means that \eqref{l:ENERG1},  \eqref{l:ENERG15} and \eqref{l:ENERG2} change to
\begin{align*}
\mathcal{M}_1(x) & = Mv_1 + mw_1
\\
& = MP(\phi)v_1+ mP(\phi)w_1 + M(1-P(\phi))v_1
+m(1-P(\phi))w_1,
\\
\mathcal{M}_2(x) & = Mv_2 + mw_2
\\
& = MP(\phi)v_2 +
mP(\phi)w_2 + M(1-P(\phi))v_1 + m(1-P(\phi))w_1,
\\
2\h_{1}(x) & = M\abs{v_1}^2+m\abs{w_1}^2
\\
& = Mv_1^TP(\phi)v_1 + mw_1^TP(\phi)w_1 +
Mv_1^T(1-P(\phi))v_1 + mw_1^T(1-P(\phi))w_1,
\\
2\h_{2}(x) & = M\abs{v_2}^2 + m\abs{w_2}^2
\\
& = Mv_2^TP(\phi)v_2 +
mw_2^TP(\phi)w_2 + Mv_1^T(1-P(\phi))v_1 +mw_1^T(1-P(\phi))w_1,
\end{align*}
and with $\mathcal{M}_1(x) = \mathcal{M}_2(x)$ and $\h_1(x)=\h_2(x)$ this expression becomes
\begin{align*}
\begin{split}
MP(\phi)v_1+ mP(\phi)w_1 & =
MP(\phi)v_2 + mP(\phi)w_2,
\\
Mv_1^TP(\phi)v_1 + mw_1^TP(\phi)w_1
 & = Mv_2^TP(\phi)v_2 + mw_2^TP(\phi)w_2,
\end{split}
\end{align*}
or using $P(\phi)P(\phi)=P(\phi)=\phi^T\phi$ and reducing $P(\phi)$ to the direction $\phi$ becomes
\begin{align}
\label{l:APP3}
\begin{split}
M\phi^Tv_1+ m\phi^Tw_1 & =
M\phi^Tv_2 + m\phi^Tw_2,
\\
M(\phi^Tv_1)^2 + m(\phi^Tw_1)^2
 & = M(\phi^Tv_2)^2 + m(\phi^Tw_2)^2.
\end{split}
\end{align}

This means balancing the initial momenta $M(\phi^Tv_1)$, $m(\phi^Tw_1)$ and initial energies $M(\phi^Tv_1)^2$, $m(\phi^Tw_1)^2$ against the final momenta $M(\phi^Tv_2)$, $m(\phi^Tw_2)$ and final energies $M(\phi^Tv_2)^2$, $m(\phi^Tw_2)^2$ while the orthogonal motion does not change with the collision. This is a one-dimensional collision and the following proposition shows the form for $\phi^Tv_2, \phi^Tv_2$.

\begin{prop}
\label{l:PROP1} The solution to the two equations in \eqref{l:APP3} equals
\begin{align}
\label{l:EQMOT1}
\begin{split}
\begin{pmatrix}
\phi^Tv_2 \\
\phi^Tw_2
\end{pmatrix}
&= \begin{pmatrix}
\cos(\theta) & \gamma_m\sin(\theta) \\
\frac{\sin(\theta)}{\gamma_m} & -\cos(\theta)
\end{pmatrix}
\begin{pmatrix}
\phi^Tv_1\\
\phi^Tw_1
\end{pmatrix}
= \Gamma_\theta
\begin{pmatrix}
\phi^Tv_1\\
\phi^Tw_1
\end{pmatrix},
\end{split}
\end{align}
defining $\Gamma_\theta$ with $\gamma_m^2=m/M, \sin\left(\theta\right) =2\gamma_m/\left(1+\gamma_m^2\right)$, $\cos\left(\theta\right)=\left(1-\gamma_m^2\right)/\left(1+\gamma_m^2\right)$ as defined in Theorem
\eqref{l:THEOREM3} above.

\begin{proof}
From the conservation of momentum using equations \eqref{l:APP3} it is clear that
\begin{align*}
MP(\phi)v_2- MP(\phi)v_1=mP(\phi)w_1- mP(\phi)w_2,
\end{align*}
so that
\begin{align*}
M\phi^Tv_2- M\phi^Tv_1=m\phi^Tw_1- m\phi^Tw_2.
\end{align*}
Dividing this by the mass $M$ yields
\begin{align}
\label{l:REL13}
\phi^T(v_2 - v_1) = - \gamma_m^2\phi^T(w_2 - w_1),
\end{align}
so that
\begin{align*}
\phi^Tv_2 + \gamma_m^2\phi^Tw_2 =\phi^Tv_1 + \gamma_m^2\phi^Tw_1,
\end{align*}
using the definition of the mass ratio $\gamma_m$ in the Theorem.

From the energy distribution \eqref{l:APP3} it is clear that
\begin{align*}
Mv_2^TP(\phi)v_2 + mw_2^TP(\phi)w_2
=Mv_1^TP(\phi)v_1 + mw_1^TP(\phi)w_1,
\end{align*}
so that
\begin{align*}
M(\phi^Tv_2)^2+m(\phi^Tw_2)^2
=M(\phi^Tv_1)^2+m(\phi^Tw_1)^2,
\end{align*}
which is equivalent to
\begin{align*}
M((\phi^Tv_2)^2-(\phi^Tv_1)^2)
=-m((\phi^Tw_2)^2-(\phi^Tw_1)^2),
\end{align*}
or
\begin{align*}
M(\phi^Tv_2-\phi^Tv_1)(\phi^Tv_2+\phi^Tv_1)
=-m(\phi^Tw_2-\phi^Tw_1)(\phi^Tw_2+\phi^Tw_1).
\end{align*}
Dividing by $M$ then provides
\begin{align*}
(\phi^Tv_2-\phi^Tv_1)(\phi^Tv_2+\phi^Tv_1)
=-\gamma_m^2(\phi^Tw_2-\phi^Tw_1)(\phi^Tw_2+\phi^Tw_1).
\end{align*}

However, the term $\gamma_m^2\phi^T(w_2-w_1)$ is given by \eqref{l:REL13} above and can be substituted into the righthand side. Hence
\begin{align*}
(\phi^Tv_2-\phi^Tv_1)(\phi^Tv_2+\phi^Tv_1)
=(\phi^Tv_2-\phi^Tv_1)(\phi^Tw_2+\phi^Tw_1),
\end{align*}
and
\begin{align*}
(\phi^Tv_2-\phi^Tv_1)\left((\phi^Tv_2+\phi^Tv_1) - (\phi^Tw_2+\phi^Tw_1)\right)=0,
\end{align*}
establishing
\begin{align}
\label{l:REL14}
\phi^Tv_2+\phi^Tv_1 = \phi^Tw_2+\phi^Tw_1,
\end{align}
because this is a one-dimensional collision with $\phi^Tv_2 \neq \phi^Tv_1$. If $\phi^Tv_2 = \phi^Tv_1$ then $P(\phi)^Tv_2 = P(\phi)^Tv_1$ which means that $v_2=v_1$ by \eqref{l:ENERG17} and therefore $w_2=w_1$ so there is no interaction.

Combining equations \eqref{l:REL13} and \eqref{l:REL14} it is clear that
\begin{align*}
\phi^Tv_2 + \gamma_m^2\phi^Tw_2 & =\phi^Tv_1 + \gamma_m^2\phi^Tw_1,
\\
\phi^Tv_2-\phi^Tw_2& =-\phi^Tv_1  +\phi^Tw_1,
\end{align*}
or in matrix form
\begin{align*}
 \begin{pmatrix}
1 & \gamma_m^2 \\
1 & -1
\end{pmatrix}
\begin{pmatrix}
\phi^Tv_2
\\
\phi^Tw_2
\end{pmatrix}
=
 \begin{pmatrix}
1 & \gamma_m^2 \\
-1 & 1
\end{pmatrix}
\begin{pmatrix}
\phi^Tv_1
\\
\phi^Tw_1
\end{pmatrix}.
\end{align*}
Inverting the first matrix and multiplying the matrices shows
\begin{align*}
\begin{pmatrix}
\phi^Tv_2
\\
\phi^Tw_2
\end{pmatrix}
=&
 \begin{pmatrix}
\frac{1-\gamma_m^2}{1+\gamma_m^2} & \frac{2\gamma_m^2}{1+\gamma_m^2} \\
\frac{2}{1+\gamma_m^2} & -\frac{1-\gamma_m^2}{1+\gamma_m^2}
\end{pmatrix}
\begin{pmatrix}
\phi^Tv_1
\\
\phi^Tw_1
\end{pmatrix}
=
\begin{pmatrix}
\cos(\theta) & \gamma_m\sin(\theta) \\
\frac{\sin(\theta)}{\gamma_m} & -\cos(\theta)
\end{pmatrix}
\begin{pmatrix}
\phi^Tv_1
\\
\phi^Tw_1
\end{pmatrix},
\end{align*}
using the definitions from the Theorem \eqref{l:THEOREM3}. This proves Proposition \eqref{l:PROP1}.
\end{proof}
\end{prop}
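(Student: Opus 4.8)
The plan is to treat the pair \eqref{l:APP3} as two scalar equations in the two unknown scalars $\phi^Tv_2$ and $\phi^Tw_2$, writing $a_i=\phi^Tv_i$, $b_i=\phi^Tw_i$ for brevity, and to convert the quadratic (energy) constraint into a second \emph{linear} relation so that the result becomes a $2\times 2$ linear solve. First I would divide the momentum equation $Ma_1+mb_1=Ma_2+mb_2$ by $M$ and use $\gamma_m^2=m/M$ to get $a_2-a_1=-\gamma_m^2(b_2-b_1)$, equivalently $a_2+\gamma_m^2 b_2=a_1+\gamma_m^2 b_1$. Next I would factor the energy equation $M(a_2^2-a_1^2)=-m(b_2^2-b_1^2)$ as $M(a_2-a_1)(a_2+a_1)=-m(b_2-b_1)(b_2+b_1)$ and substitute $M(a_2-a_1)=-m(b_2-b_1)$ into the left-hand factor; the common factor $-m(b_2-b_1)$ then cancels, leaving $a_2+a_1=b_2+b_1$.

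The cancellation is legitimate only when $b_2\neq b_1$; the excluded case $b_2=b_1$ forces $a_2=a_1$ by momentum conservation, hence $v_2=v_1$ and $w_2=w_1$ by \eqref{l:ENERG17}, i.e.\ no interaction, which is excluded by hypothesis. So I am left with the linear system $a_2+\gamma_m^2 b_2=a_1+\gamma_m^2 b_1$ and $a_2-b_2=-a_1+b_1$. Subtracting these to isolate $b_2$ gives $(1+\gamma_m^2)b_2=2a_1+(\gamma_m^2-1)b_1$, and back-substituting yields $a_2=\frac{1-\gamma_m^2}{1+\gamma_m^2}a_1+\frac{2\gamma_m^2}{1+\gamma_m^2}b_1$, $b_2=\frac{2}{1+\gamma_m^2}a_1-\frac{1-\gamma_m^2}{1+\gamma_m^2}b_1$. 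Finally I would record that the parametrization in the statement is consistent, $\cos^2\theta+\sin^2\theta=1$ because $(1-\gamma_m^2)^2+4\gamma_m^2=(1+\gamma_m^2)^2$, and rewrite the four coefficients as $\cos\theta$, $\gamma_m\sin\theta$, $\sin\theta/\gamma_m$, $-\cos\theta$ using $\sin\theta=2\gamma_m/(1+\gamma_m^2)$, $\cos\theta=(1-\gamma_m^2)/(1+\gamma_m^2)$; this is exactly $\Gamma_\theta$ in \eqref{l:EQMOT1}.

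There is no genuinely hard step: this is the textbook one-dimensional elastic-collision computation, and the only point needing care is the division by $b_2-b_1$, i.e.\ isolating and discarding the trivial fixed-point solution of the quadratic system (so that $\Gamma_\theta$ is the unique nontrivial solution). An alternative I would keep in reserve is to pass to the centre-of-mass frame with velocity $(Ma_1+mb_1)/(M+m)$, where momentum conservation makes the momenta equal and opposite and energy conservation then forces each component to merely reverse sign; transforming back to the lab frame reproduces $\Gamma_\theta$ and makes the ``two solutions, one trivial'' structure transparent. I would present the direct elimination since it stays inside the algebra already assembled in \eqref{l:APP3}.
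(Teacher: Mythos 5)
Your proposal is correct and follows essentially the same route as the paper: derive the linear relation $\phi^Tv_2+\gamma_m^2\phi^Tw_2=\phi^Tv_1+\gamma_m^2\phi^Tw_1$ from momentum conservation, factor the energy constraint and cancel the common difference factor (handling the trivial no-collision case separately) to obtain $\phi^Tv_2+\phi^Tv_1=\phi^Tw_2+\phi^Tw_1$, and solve the resulting $2\times2$ linear system to recover $\Gamma_\theta$. The only differences — cancelling $b_2-b_1$ rather than $a_2-a_1$ (equivalent by momentum conservation) and solving by elimination rather than matrix inversion — are immaterial.
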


It is relatively straightforward to extend this to the n-dimensional velocities $v_1$, $w_1$, $v_2$ and $w_2$ using the parts of the equation not involved in the collision. To show this multiply the equations \eqref{l:EQMOT1} with the vector $\phi$ to derive
\begin{align*}
\begin{pmatrix}
P(\phi)v_2 \\
P(\phi)w_2
\end{pmatrix}
&= \begin{pmatrix}
\cos(\theta) & \gamma_m\sin(\theta) \\
\frac{\sin(\theta)}{\gamma_m} & -\cos(\theta)
\end{pmatrix}
\begin{pmatrix}
P(\phi)v_1\\
P(\phi)w_1
\end{pmatrix}
=
\Gamma_\theta
\begin{pmatrix}
P(\phi)v_1\\
P(\phi)w_1
\end{pmatrix}.
\end{align*}
Then add $(1-P(\phi))v_2$, $(1-P(\phi))w_2$ on both sides and add the factors $(1-P(\phi))v_1-(1-P(\phi))v_1=0$ and $(1-P(\phi))w_1-(1-P(\phi))w_1=0$ under the vector yielding the following multi-dimensional version
\begin{align*}
\begin{pmatrix}
P(\phi)v_2 + (1-P(\phi))v_2\\
P(\phi)w_2 + (1-P(\phi))w_2
\end{pmatrix}
&=
\begin{pmatrix}
(1-P(\phi))v_2\\
(1-P(\phi))w_2
\end{pmatrix}
+
\\
& \begin{pmatrix}
\cos(\theta) & \gamma_m\sin(\theta) \\
\frac{\sin(\theta)}{\gamma_m} & -\cos(\theta)
\end{pmatrix}
\begin{pmatrix}
P(\phi)v_1+(1-P(\phi))v_1-(1-P(\phi))v_1\\
P(\phi)w_1+(1-P(\phi))w_1-(1-P(\phi))w_1
\end{pmatrix},
\end{align*}
which by collision rules and $v_1=P(\phi)v_1+(1-P(\phi))v_1$, $w_1=P(\phi)w_1+(1-P(\phi))w_1$ is equivalent to
\begin{align}
\label{l:REL17}
\begin{split}
\begin{pmatrix}
v_2\\
w_2
\end{pmatrix}
& =
\begin{pmatrix}
(1-P(\phi))v_1\\
(1-P(\phi))w_1
\end{pmatrix}
+
\begin{pmatrix}
\cos(\theta) & \gamma_m\sin(\theta) \\
\frac{\sin(\theta)}{\gamma_m} & -\cos(\theta)
\end{pmatrix}
\begin{pmatrix}
v_1\\
w_1
\end{pmatrix}
-
\begin{pmatrix}
\cos(\theta) & \gamma_m\sin(\theta) \\
\frac{\sin(\theta)}{\gamma_m} & -\cos(\theta)
\end{pmatrix}
\begin{pmatrix}
(1-P(\phi))v_1\\
(1-P(\phi))w_1
\end{pmatrix}
\\
& =
\begin{pmatrix}
\cos(\theta) & \gamma_m\sin(\theta) \\
\frac{\sin(\theta)}{\gamma_m} & -\cos(\theta)
\end{pmatrix}
\begin{pmatrix}
v_1\\
w_1
\end{pmatrix}
+
\begin{pmatrix}
\gamma_m\sin(\theta) & -\gamma_m\sin(\theta)
\\
-\frac{\sin(\theta)}{\gamma_m} & \frac{\sin(\theta)}{\gamma_m}
\end{pmatrix}
\begin{pmatrix}
(1-P(\phi))v_1\\
(1-P(\phi))w_1
\end{pmatrix}
\\
& =
\begin{pmatrix}
\cos(\theta) & \gamma_m\sin(\theta) \\
\frac{\sin(\theta)}{\gamma_m} & -\cos(\theta)
\end{pmatrix}
\begin{pmatrix}
v_1\\
w_1
\end{pmatrix}
+
\begin{pmatrix}
\gamma_m\sin(\theta) (1-P(\phi))(v_1-w_1)\\
-\frac{\sin(\theta)}{\gamma_m} (1-P(\phi))(v_1-w_1)
\end{pmatrix}
\\
& =
\begin{pmatrix}
\cos(\theta) & \gamma_m\sin(\theta) \\
\frac{\sin(\theta)}{\gamma_m} & -\cos(\theta)
\end{pmatrix}
\begin{pmatrix}
v_1\\
w_1
\end{pmatrix}
+
\begin{pmatrix}
\gamma_m\Phi\\
-\frac{1}{\gamma_m}\Phi
\end{pmatrix}
\\
& =
\Gamma_\theta
\begin{pmatrix}
v_1\\
w_1
\end{pmatrix}
+(I-\Gamma_\theta)
\begin{pmatrix}
\gamma_m\Phi\\
-\frac{1}{\gamma_m}\Phi
\end{pmatrix},
\end{split}
\end{align}
combining the $(1-P(\phi))v_1$, $(1-P(\phi))w_1$ terms and using $1-\cos(\theta)=\gamma_m\sin(\theta)$ as well as $1+\cos(\theta)=\sin(\theta)/\gamma_m$. From the third line of this statement also
\begin{align}
\label{l:DRFT20}
\begin{split}
\begin{pmatrix}
v_2\\
w_2
\end{pmatrix}
& =
\begin{pmatrix}
\cos(\theta) & \gamma_m\sin(\theta)
\\
\frac{\sin(\theta)}{\gamma_m} & -\cos(\theta)
\end{pmatrix}
\begin{pmatrix}
v_1\\
w_1
\end{pmatrix}
+
\begin{pmatrix}
\gamma_m\sin(\theta) (1-P(\phi))(v_1-w_1)
\\
-\frac{\sin(\theta)}{\gamma_m} (1-P(\phi))(v_1-w_1)
\end{pmatrix}
\\
& =
\begin{pmatrix}
\cos(\theta) + \gamma_m\sin(\theta) (1-P(\phi)) & \gamma_m\sin(\theta)-\gamma_m\sin(\theta) (1-P(\phi))
\\
\frac{\sin(\theta)}{\gamma_m} -\frac{\sin(\theta)}{\gamma_m} (1-P(\phi)) & -\cos(\theta)+\frac{\sin(\theta)}{\gamma_m} (1-P(\phi))
\end{pmatrix}
\begin{pmatrix}
v_1\\
w_1
\end{pmatrix}
\\
& =
\begin{pmatrix}
I - \gamma_m\sin(\theta)P(\phi) & \gamma_m\sin(\theta)P(\phi)
\\
\frac{\sin(\theta)}{\gamma_m}P(\phi) & I-\frac{\sin(\theta)}{\gamma_m}P(\phi)
\end{pmatrix}
\begin{pmatrix}
v_1\\
w_1
\end{pmatrix}.
\end{split}
\end{align}
Both equations \eqref{l:REL17} and \eqref{l:DRFT20} are represented in equation \eqref{l:EQMOT9}

Notice that by definition $\Phi^T(v_2-v_1)=0$ as a result. This is easy to show using the proper definition for $v_2(t)$ in \eqref{l:REL17} and \eqref{l:DRFT20} and noticing that
\begin{align*}
v_2-v_1
& =(\cos(\theta)-1)v_1 + \gamma_m\sin(\theta)w_1+\gamma_m\Phi
\\
& =\gamma_m\sin(\theta)(w_1 -v_1)+\gamma_m\Phi.
\end{align*}
Now if $\Phi^T(v_2-v_1)=0$ then
\begin{align*}
0 =\gamma_m\sin(\theta)\Phi^T(w_1 -v_1)+\gamma_m\abs{\Phi}^2,
\end{align*}
with $\abs{\Phi}^2=\Phi^T\Phi$ but this is correct because
\begin{align*}
\gamma_m \sin(\theta) & \Phi^T(w_1 -v_1)+\gamma_m\abs{\Phi}^2
\\
= & -\gamma_m\sin^2(\theta)(v_1 -w_1)^T(1-P(\phi))(v_1 -w_1)
\\
& +\gamma_m\sin^2(\theta)(v_1 -w_1)^T(1-P(\phi))(1-P(\phi))(v_1 -w_1)
\\
= & -\gamma_m\sin^2(\theta)(v_1 -w_1)^T(1-P(\phi))(v_1 -w_1)
\\
& +\gamma_m\sin^2(\theta)(v_1 -w_1)^T(1-P(\phi))(v_1 -w_1)
=0.
\end{align*}
This establishes the proof for Theorem \eqref{l:THEOREM3}.
\end{proof}

\section{}
\label{l:AppendixB} \noindent {\bf Proof of Theorem
\eqref{l:THEOREM2}
\begin{proof}
Equation \eqref{l:EQMOT9} is a linear equation and its solution can be easily converted into Eigenvectors $(a,a)$, $(-\gamma_m^2b,b)$ with associated Eigenvalues $1,-1$. Since $v_1,v_2,w_1$ and $w_2$ are n-dimensional the velocity vectors $(a,a)$, $(-\gamma_m^2b,b)$ are 2n-dimensional. Then substitute
\begin{align*}
\begin{pmatrix}
v_1
\\
w_1
\end{pmatrix}=
\begin{pmatrix}
a
\\
a
\end{pmatrix}
+
\begin{pmatrix}
-\gamma_m^2b
\\
b
\end{pmatrix},
\end{align*}
so that
\begin{align*}
\begin{pmatrix}
v_2 \\
w_2
\end{pmatrix}
&=
\begin{pmatrix}
\cos(\theta) & \gamma_m\sin(\theta) \\
\frac{\sin(\theta)}{\gamma_m} & -\cos(\theta)
\end{pmatrix}
\begin{pmatrix}
v_1\\
w_1
\end{pmatrix}
+
\begin{pmatrix}
\gamma_m \Phi
\\
-\frac{1}{\gamma_m} \Phi
\end{pmatrix}
\\
&=
\begin{pmatrix}
a
\\
a
\end{pmatrix}
-
\begin{pmatrix}
-\gamma_m^2b
\\
b
\end{pmatrix}
+
\begin{pmatrix}
\gamma_m \Phi
\\
-\frac{1}{\gamma_m} \Phi
\end{pmatrix}
=
\begin{pmatrix}
a
\\
a
\end{pmatrix}
+
\begin{pmatrix}
\gamma_m^2(b+\frac{1}{\gamma_m}\Phi)
\\
-(b+\frac{1}{\gamma_m}\Phi)
\end{pmatrix}
=\begin{pmatrix}
a+\gamma_m^2b^\bot
\\
a-b^\bot
\end{pmatrix},
\end{align*}
which shows equation \eqref{l:EQMOT4}. In addition, consider the total energy $\h_1, \h_2$ conserved through the collision
\begin{align*}
\frac{2}{M}\h_1=(1+\gamma_m^2)\abs{a}^2 +\gamma_m^2(1+\gamma_m^2)\abs{b}^2 = (1+\gamma_m^2)\abs{a}^2 +\gamma_m^2(1+\gamma_m^2)\abs{b^\bot}^2=\frac{2}{M}\h_2,
\end{align*}
then clearly
\begin{align*}
\gamma_m^2(1+\gamma_m^2)\abs{b}^2 = \gamma_m^2(1+\gamma_m^2)\abs{b^\bot}^2,
\end{align*}
and $\abs{b}^2=\abs{b^\bot}^2$. This proves Theorem \eqref{l:THEOREM2}.
\end{proof}

\section{}
\label{l:AppendixC} \noindent {\bf Proof of Proposition \eqref{l:THEOREM4}
\begin{proof}
Applying equation \eqref{l:EQMOT9} in Theorem \eqref{l:THEOREM3} for velocity $v_2$ it is clear that
yields
\begin{align*}
v_2  & = (\phi\phi^T)v_2 + (v_1 - (\phi\phi^T)v_1)
\\
& = (v_1 - (\phi\phi^T)v_1) + \phi\left(\cos\left(\theta\right) (\phi^Tv_1)
+ \gamma_m \sin\left(\theta\right)(\phi^Tw_1)\right)
\\
& = v_1 + \phi\left((\cos\left(\theta\right)-1)(\phi^Tv_1)
+ \gamma_m \sin\left(\theta\right)(\phi^Tw_1)\right)
\\
& = v_1 + \phi\gamma_m \sin\left(\theta\right)
\left(-(\phi^Tv_1) + (\phi^Tw_1)\right)
\\
& = v_1 + \gamma_m \sin\left(\theta\right)
\left(P(\phi)w_1 - P(\phi)v_1\right)=v_1 +
\gamma_m\sin\left(\theta\right)P(\phi)(w_1-v_1),
\end{align*}
and
\begin{align*}
w_2  & = (\phi\phi^T)w_2 + (w_1 -
(\phi\phi^T)w_1)
\\
& = (w_1 - (\phi\phi^T)w_1) + \phi\left(\frac{\sin\left(\theta\right)}{\gamma_m} (\phi^Tv_1)
- \cos\left(\theta\right)(\phi^Tw_1)\right)
\\
& = w_1+ \phi\left(\frac{\sin\left(\theta\right)}{\gamma_m}
(\phi^Tv_1) - (\cos\left(\theta\right)+1)(\phi^Tw_1)\right)
\\
& = w_1+ \phi\frac{\sin\left(\theta\right)}{\gamma_m}\left(
(\phi^Tv_1) - (\phi^Tw_1)\right)
\\
& = w_1+ \frac{\sin\left(\theta\right)}{\gamma_m}\left(
P(\phi)v_1 - P(\phi)w_1\right)= w_1-
\frac{1}{\gamma_m}\sin\left(\theta\right)P(\phi)(w_1-v_1),
\end{align*}
combining $\phi\phi^Tv_1=P(\phi)v_1$,
$\phi\phi^Tw_1=P(\phi)w_1$. This proves Proposition \eqref{l:THEOREM4}.
\end{proof}

\section{}
\label{l:AppendixE} \noindent {\bf Proof of Theorem
\eqref{l:THEOREM5}}
\begin{proof}
This equation can of course be verified by direct substitution however the following simple argument is more intuitive. First notice that from the definitions $\sin(\theta) = 2\gamma_m/(1+\gamma_m^2)$ and $\cos(\theta)=(1-\gamma_m^2)/(1+\gamma_m^2)$ it is straighforward that
\begin{align*}
\sin\left(\theta/2\right)=\frac{\gamma_m}{\sqrt{1+\gamma_m^2}},
\cos\left(\theta/2\right)=\frac{1}{\sqrt{1+\gamma_m^2}},
\end{align*}
so that
\begin{align*}
\frac{\sin\left(\theta/2\right)}{\cos\left(\theta/2\right)}=\gamma_m.
\end{align*}

To consider the sum of the squared n-dimensional differences $v_2-v_1$ and $v_2+v_1$ for the main particle consider first the $v_2-v_1$ difference. Using \eqref{l:REL17} of Appendix\eqref{l:AppendixA} it is found that
\begin{align*}
\frac{v_2-v_1}{2}
&=\frac{1}{2} (\cos(\theta)-1)v_1+\frac{1}{2}\gamma_m\sin(\theta)w_1
+\frac{\gamma_m}{2}\Phi
\\
&=\sin\left(\theta/2\right)\big(-\sin\left(\theta/2\right)v_1
+\gamma_m\cos\left(\theta/2\right)w_1\big)
+\frac{\gamma_m}{2}\Phi
\\
&=\sin\left(\theta/2\right)A_p
+\frac{\gamma_m}{2}\Phi,
\end{align*}
with
\begin{align*}
A_p & = -\sin\left(\theta/2\right)v_1
+\gamma_m\cos\left(\theta/2\right)w_1
\\
& = \sin\left(\theta/2\right)\left(-v_1
+w_1\right).
\end{align*}
But we know that $\Phi^T(v_2-v_1)=0$ by Theorem
\eqref{l:THEOREM3} so then
\begin{align*}
\sin\left(\theta/2\right)\Phi^TA_p
+\frac{\gamma_m}{2}\abs{\Phi}^2=0,
\end{align*}
or dividing by $\gamma_m$
\begin{align*}
\cos\left(\theta/2\right)\Phi^TA_p
+\frac{1}{2}\abs{\Phi}^2=0.
\end{align*}
As a result
\begin{align*}
\frac{v_2-v_1}{2\gamma_m}
&=\cos\left(\theta/2\right)A_p +\frac{1}{2}\Phi,
\end{align*}
and
\begin{align}
\label{l:APP8}
\begin{split}
\abs{\frac{v_2-v_1}{2\gamma_m}}^2
&=\cos^2\left(\theta/2\right)\abs{A_p}^2
+\cos\left(\theta/2\right)\Phi^TA_p
+\frac{1}{4}\abs{\Phi}^2
\\
&=\cos^2\left(\theta/2\right)\abs{A_p}^2 -\frac{1}{4}\abs{\Phi}^2.
\end{split}
\end{align}

To get a similar expression for the average positive difference use \eqref{l:REL17} from Appendix\eqref{l:AppendixA} again to show that
\begin{align*}
\frac{v_2+v_1}{2}
& =\frac{1}{2} (\cos(\theta)+1)v_1+\frac{1}{2}\gamma_m\sin(\theta)w_1
+\frac{\gamma_m}{2}\Phi
\\
& = \cos\left(\theta/2\right)\big(\cos\left(\theta/2\right)v_1
+\gamma_m\sin\left(\theta/2\right)w_1\big) + \frac{\gamma_m}{2}\Phi
\\
& = \cos\left(\theta/2\right)B_p + \frac{\gamma_m}{2}\Phi,
\end{align*}
with
\begin{align*}
B_p & =\cos\left(\theta/2\right)v_1
+\gamma_m\sin\left(\theta/2\right)w_1
\\
& =\cos\left(\theta/2\right)\left(v_1
+\gamma_m^2w_1\right).
\end{align*}
Hence the difference squared becomes
\begin{align}
\label{l:APP9}
\abs{\frac{v_2+v_1}{2}}^2
& = \cos\left(\theta/2\right)^2\abs{B_p}^2 +
\gamma_m\cos\left(\theta/2\right)\Phi^TB_p+
\frac{\gamma_m^2}{4}\abs{\Phi}^2.
\end{align}

To get the sum of the squared differences add equation \eqref{l:APP8} and equation \eqref{l:APP9} to find
\begin{align*}
E_1 & =\abs{\frac{v_2+v_1}{2}}^2+\frac{1}{\gamma_m^2}
\abs{\frac{v_2-v_1}{2}}^2
\\
& =\cos\left(\theta/2\right)^2\left(\abs{A_p}^2 + \abs{B_p}^2\right)
+\gamma_m\cos\left(\theta/2\right)\Phi^TB_p+
\frac{\gamma_m^2-1}{4}\abs{\Phi}^2
\\
&=\cos\left(\theta/2\right)^2
\left(\abs{v_1}^2+\gamma_m^2\abs{w_1}^2\right)
+\gamma_m\cos\left(\theta/2\right)\Phi^TB_p+
\frac{\gamma_m^2-1}{4}\abs{\Phi}^2,
\end{align*}
because
\begin{align*}
\abs{A_p}^2 + \abs{B_p}^2
= &
\sin^2\left(\theta/2\right)\left(\abs{v_1}^2-2v_1^Tw_1+\abs{w_1}^2\right)
\\
& +\cos^2\left(\theta/2\right)\left(\abs{v_1}^2+2\gamma_m^2v_1^Tw_1+\gamma_m^4\abs{w_1}^2\right)
\\
= &\abs{v_1}^2+\left(\sin\left(\theta/2\right)^2 +
\gamma_m^4\cos\left(\theta/2\right)^2\right)\abs{w_1}^2
\\
= & \abs{v_1}^2+\gamma_m^2\abs{w_1}^2.
\end{align*}
So clearly then
\begin{align*}
E_1 & = \abs{\frac{v_2+v_1}{2}}^2+\frac{1}{\gamma_m^2}
\abs{\frac{v_2-v_1}{2}}^2
\\
& =
\frac{2\h_1}{M}\cos\left(\theta/2\right)^2 +\gamma_m\cos\left(\theta/2\right)\Phi^TB_p+
\frac{\gamma_m^2-1}{4}\abs{\Phi}^2
\\
& =
\frac{2\h_1}{M(1+\gamma_m^2)}+\gamma_m\cos\left(\theta/2\right)\Phi^TB_p+
\frac{\gamma_m^2-1}{4}\abs{\Phi}^2
\\
& =\frac{2\h_1}{M_T}
+\gamma_m\cos\left(\theta/2\right)\Phi^TB_p+
\frac{\gamma_m^2-1}{4}\abs{\Phi}^2,
\end{align*}
where $M_T=(M+m)$.

Now this exercise is repeated for $w_1, w_2$ to find the proper squared differences. First use \eqref{l:REL17} of Appendix\eqref{l:AppendixA} for the third time to show that
\begin{align*}
\gamma_m\frac{w_2-w_1}{2}
& =
\gamma_m\left(\frac{\sin(\theta)}{2\gamma_m}v_1
-\frac{1}{2}\left(\cos(\theta)+1\right)w_1
-\frac{1}{2\gamma_m}\Phi\right)
\\
& = \gamma_m\cos^2\left(\theta/2\right)\left(v_1
-w_1\right)
-\frac{1}{2}\Phi
\\
&=\cos\left(\theta/2\right)C_p
-\frac{1}{2}\Phi,
\end{align*}
with
\begin{align*}
C_p & = \gamma_m\cos\left(\theta/2\right)v_1
-\gamma_m\cos\left(\theta/2\right)w_1
\\
& = \sin\left(\theta/2\right)\left(v_1
-w_1\right) =-A_p.
\end{align*}
But we know that $\Phi^T(w_2-w_1)=0$ so (by momentum conservation $\Phi^T(v_2-v_1)=\gamma_m^2\Phi^T(w_1-w_2)=0$) then
\begin{align*}
\cos\left(\theta/2\right)\Phi^TC_p
-\frac{1}{2}\abs{\Phi}^2=0.
\end{align*}
As a result
\begin{align}
\label{l:APP10}
\begin{split}
\gamma_m^2\abs{\frac{w_2-w_1}{2}}^2
& = \cos^2\left(\theta/2\right)\abs{C_p}^2
-\cos\left(\theta/2\right)\Phi^TC_p
+\frac{1}{4}\abs{\Phi}^2
\\
&=\cos^2\left(\theta/2\right)\abs{C_p}^2 -\frac{1}{4}\abs{\Phi}^2.
\end{split}
\end{align}

For the positive difference for the $w_2, w_1$ use equation \eqref{l:REL17} in Appendix\eqref{l:AppendixA} one more time from which follows that
\begin{align*}
\frac{w_2+w_1}{2}
& =\frac{\sin(\theta)}{2\gamma_m}v_1-\frac{1}{2}\left(\cos(\theta)-1\right)w_1
-\frac{1}{2\gamma_m}\Phi
\\
& = \cos\left(\theta/2\right)\big(\cos\left(\theta/2\right)v_1
+\gamma_m\sin\left(\theta/2\right)w_1\big) - \frac{1}{2\gamma_m}\Phi
\\
& = \cos\left(\theta/2\right)D_p - \frac{1}{2\gamma_m}\Phi,
\end{align*}
with
\begin{align*}
D_p & =\cos\left(\theta/2\right)v_1
+\gamma_m\sin\left(\theta/2\right)w_1
\\
& =\cos\left(\theta/2\right)\left(v_1
+\gamma_m^2w_1\right) = B_p.
\end{align*}
Hence
\begin{align}
\label{l:APP11}
\abs{\frac{w_2+w_1}{2}}^2
& = \cos\left(\theta/2\right)^2\abs{D_p}^2 -
\frac{1}{\gamma_m}\cos\left(\theta/2\right)\Phi^TD_p+
\frac{1}{4\gamma_m^2}\abs{\Phi}^2.
\end{align}

Now adding the squares using \eqref{l:APP10} and
\eqref{l:APP11} results in
\begin{align*}
& \abs{\frac{w_2+w_1}{2}}^2+\gamma_m^2\abs{\frac{w_2-w_1}{2}}^2
\\
& = \cos\left(\theta/2\right)^2\left(\abs{C_p}^2 + \abs{D_p}^2\right)
-\frac{1}{\gamma_m}\cos\left(\theta/2\right)\Phi^TD_p +
\frac{1}{4}\left(\frac{1}{\gamma_m^2}-1\right)\abs{\Phi}^2
\\
& = \cos\left(\theta/2\right)^2
\left(\abs{v_1}^2+\gamma_m^2\abs{w_1}^2\right)
-\frac{1}{\gamma_m}\cos\left(\theta/2\right)\Phi^TD_p+
\frac{1}{4}\left(\frac{1}{\gamma_m^2}-1\right)\abs{\Phi}^2,
\end{align*}
because
\begin{align*}
\abs{C_p}^2 + \abs{D_p}^2 = \abs{A_p}^2 + \abs{B_p}^2
= \abs{v_1}^2+\gamma_m^2\abs{w_1}^2.
\end{align*}
So clearly then for the sum $E_2$
\begin{align*}
E_2 & =\abs{\frac{w_2+w_1}{2}}^2+\gamma_m^2
\abs{\frac{w_2-w_1}{2}}^2
\\
& =
\frac{2\h_1}{M}\cos\left(\theta/2\right)^2 -\frac{1}{\gamma_m}\cos\left(\theta/2\right)\Phi^TD_p+
\frac{1}{4}\left(\frac{1}{\gamma_m^2}-1\right)\abs{\Phi}^2
\\
& =
\frac{2\h_1}{M(1+\gamma_m^2)}-\frac{1}{\gamma_m}\cos\left(\theta/2\right)\Phi^TD_p+
\frac{1}{4}\left(\frac{1}{\gamma_m^2}-1\right)\abs{\Phi}^2
\\
& =\frac{2\h_1}{M_T}
-\frac{1}{\gamma_m}\cos\left(\theta/2\right)\Phi^TD_p+
\frac{1}{4}\left(\frac{1}{\gamma_m^2}-1\right)\abs{\Phi}^2,
\end{align*}
and finally
\begin{align*}
E_1+\gamma_m^2E_2 = \frac{2\h_1(1+\gamma_m^2)}{M_T}=\frac{2\h_1}{M},
\end{align*}
as the second and third term cancel. This proves
Theorem \eqref{l:THEOREM5}.
\end{proof}

\section{}
\label{l:APP5} \noindent {\bf Proof of Theorem \eqref{l:THEOREM6}}
\begin{proof}
This argument is a variation on the proofs presented by Nelson~\cite{ENELSON1} and Carlen~\cite{ECARL1}. The point of this Theorem is to find a combination of the density $\rho(x,t)$ the forward drift $b^+$ and the backward drift $b^-$ for the stochastic process \eqref{l:DRFT2aa} so that \eqref{l:HAMILT1} is time invariant for the given potential $V(x,t)$. Once $b^+$ is determined equation \eqref{l:DRFT2ab} derives the backward drift $b^-$ and the probability density can be calculated from equations \eqref{l:DRFT3c}.  Beyond the potential $V(x,t)$ there are no other forces working on the particles.

First simplify the n-dimensional drifts $b^+,b^-$ by introducing the sufficiently smooth functions $R=R(x,t)$, $S=S(x,t)$ as follows
\begin{subequations}
\begin{align}
\label{l:APP13a}
& 2\gamma_m\nabla R = \sigma^2\frac{\nabla\rho}{\rho} =b^+-b^-,
\\
\label{l:APP13b}
& 2\nabla S =\left(b^++b^-\right),
\end{align}
\end{subequations}
or equivalently
\begin{align}
\label{l:APP1}
\begin{split}
b^+ & =\nabla S + \gamma_m\nabla R,
\\
b^- & =\nabla S - \gamma_m\nabla R.
\end{split}
\end{align}
Notice that \eqref{l:APP13b} implies that $b^++b^-$ can be represented in a gradient and \eqref{l:APP13a} is derived from \eqref{l:DRFT2aa} while suggesting that $\rho=\exp{{\left[\frac{2 \gamma_m R}{\sigma^2}\right]}}$ where the function $R$ is scaled to normalize the density.

Using \eqref{l:APP13a} and \eqref{l:APP13b} the expected functional \eqref{l:DRFT13} or functional
\eqref{l:HAMILT1} can be represented as
\begin{align}
\label{l:HAMILT37}
\begin{split}
\frac{2}{M}E\left[\h_1\right]
& = E
\begin{matrix}
\abs{\nabla S}^2
\end{matrix}
+
E\begin{matrix} \abs{\nabla R}^2
\end{matrix}
+\frac{2}{M}\int\rho Vdx
\\
& =
\int \rho S_{x_j}S_{x_j}dx
+
\int\rho R_{x_j}R_{x_j}+\frac{2}{M}\int\rho Vdx,
\end{split}
\end{align}
using Einstein's summation convention. The $\gamma_m$ in equation \eqref{l:APP13a} was clearly chosen for this equation to remove the $1/\gamma_m^2$ weighting of the functional \eqref{l:HAMILT1}. To be time independent this functional must have a zero derivative with respect to time. However, the time derivative of this functional creates time derivatives of $R_{x_j}$, $S_{x_j}$ with the potential $V=V(x,t)$ and the partial derivative of $\rho=\rho(x,t)$ satisfying the continuity equation \eqref{l:DRFT3c}.

Starting with time behaviour of the probability density it becomes clear from equation
\eqref{l:DRFT3c} that
\begin{align}
\label{l:HAMILT43}
\frac{\partial \rho}{\partial t}=\rho_t=-\nabla^T\left(\frac{b^++b^-}{2}\rho
\right)=-\nabla^T\left(\rho\nabla S \right),
\end{align}
from which follows
\begin{align}
\label{l:HAMILT41}
\begin{split}
\frac{\partial R}{\partial t} = &R_t=-\left(\frac{\sigma^2}{2\gamma_m}S_{x_jx_j} +
S_{x_j}R_{x_j}\right),
\\
\frac{\partial^2 R}{\partial t\partial x_k}=&R_{tx_k}=-\left(\frac{\sigma^2}{2\gamma_m}
S_{x_jx_j} + S_{x_j}R_{x_j}\right)_{x_k},
\end{split}
\end{align}
using Einstein's convention for the summation of indices again. This determines the time behaviour for $\rho$, $R$ and $R_{x_j}$.

The time derivative of the functional \eqref{l:HAMILT1} yields
\begin{align*}
&\int\rho_t
\begin{pmatrix}
S_{x_k}S_{x_k} + R_{x_k}R_{x_k}
\end{pmatrix}
dx  && \text{(a)}
\\
&+2\int\rho
\begin{pmatrix}
S_{x_k}S_{x_kt} +R_{x_k}R_{tx_k}
\end{pmatrix}
dx && \text{(b)}
\\
&+\frac{2}{M}\frac{d}{dt}\int\rho(x,t)V(x,t)dx=0, &&
\text{(c)}
\end{align*}
where $\rho_t$ is the time derivative of the probability density $\rho=\rho(x,t)$. As there are no other processes in this collision the time derivative has to be zero. The three individual terms are examined in order starting with $(a)$.

Substituting $\rho=\exp\left[\frac{2 \gamma_m R}{\sigma^2}\right]$ using equation \eqref{l:HAMILT41} the first term $(a)$ becomes
\begin{align*}
\text{(a)} = &\int\frac{2\gamma_m}{\sigma^2}\rho R_t
\begin{pmatrix}
S_{x_k}S_{x_k} + R_{x_k}R_{x_k}
\end{pmatrix}
dx
=\frac{2\gamma_m}{\sigma^2}\int\rho
\begin{pmatrix}
\left(-\frac{\sigma^2}{2\gamma_m} S_{x_jx_j} -
S_{x_j}R_{x_j}\right)
\\
\begin{pmatrix}
S_{x_k}S_{x_k} +R_{x_k}R_{x_k}
\end{pmatrix}
\end{pmatrix}
dx
\\
=&-\int \rho S_{x_jx_j}
\begin{pmatrix}
S_{x_k}S_{x_k} + R_{x_k}R_{x_k}
\end{pmatrix}
dx
-\frac{2\gamma_m}{\sigma^2}\int\rho S_{x_j}R_{x_j}
\begin{pmatrix}
S_{x_k}S_{x_k} + R_{x_k}R_{x_k}
\end{pmatrix}
dx.
\end{align*}
So with one partial integral on the first term this reduces to
\begin{align*}
\text{(a)}
=&\int\rho S_{x_j}
\begin{pmatrix}
S_{x_k}S_{x_k} + R_{x_k}R_{x_k}
\end{pmatrix}_{x_j}
dx
+\int \rho_{x_j} S_{x_j}
\begin{pmatrix}
S_{x_k}S_{x_k} + R_{x_k}R_{x_k}
\end{pmatrix}
dx
\\
&-\frac{2\gamma_m}{\sigma^2}\int\rho S_{x_j}R_{x_j}
\begin{pmatrix}
S_{x_k}S_{x_k} + R_{x_k}R_{x_k}
\end{pmatrix}
dx,
\end{align*}
or
\begin{align*}
\text{(a)} =&\int \rho S_{x_j}
\begin{pmatrix}
S_{x_k}S_{x_k} + R_{x_k}R_{x_k}
\end{pmatrix}_{x_j}
dx
+\frac{2\gamma_m}{\sigma^2}\int \rho R_{x_j} S_{x_j}
\begin{pmatrix}
S_{x_k}S_{x_k} + R_{x_k}R_{x_k}
\end{pmatrix}
dx
\\
&-\frac{2\gamma_m}{\sigma^2}\int\rho
R_{x_j}S_{x_j}
\begin{pmatrix}
S_{x_k}S_{x_k} + R_{x_k}R_{x_k}
\end{pmatrix}
dx,
\end{align*}
so that finally
\begin{align*}
\text{(a)}=&\int \rho S_{x_j}
\begin{pmatrix}
S_{x_k}S_{x_k} + R_{x_k}R_{x_k}
\end{pmatrix}_{x_j}
dx.
\end{align*}

To reduce the $(b)$ term use equation \eqref{l:HAMILT41} again for $R_{tx_k}$ to yield
\begin{align*}
\text{(b)}= & 2\int\rho
\begin{pmatrix}
S_{x_k}S_{x_kt}
- R_{x_k}\left(\frac{\sigma^2}{2\gamma_m}S_{x_jx_j} + S_{x_j}R_{x_j}\right)_{x_k}
\end{pmatrix}
dx
\\
=&2\int\rho
\begin{pmatrix}
S_{x_k}S_{x_kt} + R_{x_kx_k}
\begin{pmatrix}
\frac{\sigma^2}{2\gamma_m} S_{x_jx_j} + S_{x_j}R_{x_j}
\end{pmatrix}
\\
+\frac{2 \gamma_m}{\sigma^2}R_{x_k} R_{x_k}
\begin{pmatrix}
\frac{\sigma^2}{2\gamma_m} S_{x_jx_j} + S_{x_j}R_{x_j}
\end{pmatrix}
\end{pmatrix}
dx
\\
=&2\int\rho
\begin{pmatrix}
S_{x_k}S_{x_kt}
+\frac{\sigma^2}{2\gamma_m}R_{x_kx_k} S_{x_jx_j} + R_{x_kx_k}S_{x_j}R_{x_j}
\\
+ R_{x_k}R_{x_k}S_{x_jx_j} + \frac{2\gamma_m}{\sigma^2}R_{x_k}R_{x_k}S_{x_j}R_{x_j}
\end{pmatrix}
dx,
\end{align*}
using one partial integral on the second term. Rearranging the terms this can be written as
\begin{align*}
\text{(b)}= &2\int\rho
\begin{pmatrix}
S_{x_k}S_{x_kt} + S_{x_j}R_{x_j}\left(R_{x_kx_k} +\frac{2\gamma_m}{\sigma^2}R_{x_k}
R_{x_k}\right)
\end{pmatrix}
dx
\\
&+2\int\rho S_{x_jx_j}
\begin{pmatrix}
\frac{\sigma^2}{2\gamma_m}R_{x_kx_k} + R_{x_k} R_{x_k}
\end{pmatrix}
dx,
\end{align*}
and with one partial integral against the last term
this becomes
\begin{align*}
\text{(b)}= &2\int\rho
\begin{pmatrix}
S_{x_k}S_{x_kt} + S_{x_j}R_{x_j}\left(R_{x_kx_k} +\frac{2\gamma_m}{\sigma^2}R_{x_k}
R_{x_k}\right)
\end{pmatrix}
dx
\\
&-2\int\rho
S_{x_j}
\begin{pmatrix}
\frac{\sigma^2}{2\gamma_m}R_{x_kx_k} + R_{x_k} R_{x_k}
\end{pmatrix}_{x_j}
dx
\\
&-2\int\rho S_{x_j}R_{x_j}
\begin{pmatrix}
R_{x_kx_k} +\frac{2\gamma_m}{\sigma^2} R_{x_k} R_{x_k}
\end{pmatrix}
dx.
\end{align*}
Clearly the second term and the fourth term cancel so $(b)$ reduces to
\begin{align*}
\text{(b)}= &2\int\rho
\left(S_{x_k}S_{x_kt} - S_{x_j}
\begin{pmatrix}
\frac{\sigma^2}{2\gamma_m}R_{x_kx_k} + R_{x_k} R_{x_k}
\end{pmatrix}_{x_j}
\right)
dx.
\end{align*}

So then $(a)+ (b)$ becomes
\begin{align*}
\text{(a)}+\text{(b)}&
=\int \rho S_{x_j}
\begin{pmatrix}
S_{x_k}S_{x_k} + R_{x_k}R_{x_k}
\end{pmatrix}_{x_j}
dx
\\
&+2\int\rho
\left(S_{x_k}S_{x_kt} - S_{x_j}
\begin{pmatrix}
\frac{\sigma^2}{2\gamma_m}R_{x_kx_k} + R_{x_k} R_{x_k}
\end{pmatrix}_{x_j}
\right)
dx,
\end{align*}
which reduces to
\begin{align*}
\begin{split}
\text{(a)}+\text{(b)}
= \int\rho S_{x_j}
\begin{pmatrix}
2S_t+S_{x_k}S_{x_k} -R_{x_k}R_{x_k} - \frac{\sigma^2}{\gamma_m}R_{x_kx_k}
\end{pmatrix}_{x_j}
dx,
\end{split}
\end{align*}
once the summation over $k$ is written in $j$.

Assume that there is an appropriate function $\Psi=\Psi(x,t)$ capturing the time behaviour of the potential then for the $(c)$ term
\begin{align*}
\frac{\partial V}{\partial t} = \big(\nabla S\big)^T \nabla\Psi,
\end{align*}
and with one partial integral
\begin{align*}
\frac{d}{dt} \int \rho V(x,t) dx
& = \int \rho_t V dx + \int \rho \frac{\partial V}{\partial t}dx
\\
& = -\int \nabla^T(\rho\nabla S) V dx +
\int \rho\big(\nabla S\big)^T \nabla\Psi dx
\\
&= \int \rho S_{x_j} V_{x_j} dx +\int \rho S_{x_j}\Psi_{x_j} dx.
\end{align*}
Now adding the second term $\text{(c)}$ to the results $(a) + (b)$ the time change of the potential becomes
\begin{align}
\label{l:DRFT131}
\begin{split}
\text{(a)}+\text{(b)}+\text{(c)}
=&\int \rho S_{x_j}
\begin{pmatrix}
2 S_t + S_{x_k}S_{x_k} - \frac{\sigma^2}{\gamma_m}R_{x_kx_k} - R_{x_k}R_{x_k}
\end{pmatrix}_{x_j}
dx
\\
&+\int\rho S_{x_j}
\frac{2}{M}(V+\Psi)_{x_j}dx,
\end{split}
\end{align}
including the derivative of the potential energy $V$ term. Clearly then a sufficient condition for the Hamiltonian to become time-independent is that the integrand in \eqref{l:DRFT131} is to equal zero. In other words
\begin{align}
\label{l:SCHROD6}
S_{t}-\frac{1}{2}\abs{\nabla R}^2
+\frac{1}{2} \abs{\nabla S}^2 -\frac{\eta}{2M}\Delta R
+\frac{1}{M}\left(V+\Psi\right)=0.
\end{align}
The following proposition shows that this is equivalent to demanding that the probability for the main particle position $\rho=\rho(x,t)$ is derived from a wave function satisfying Schr\"{o}dinger's equation. The following Proposition shows this the case.

\begin{prop}
\label{l:prp22}
Define the wave function using proper functions $R=R(x,t)$ nd $S=S(x,t)$ then
\begin{align*}
\psi=\psi(x,t)=\exp\left[M\left(\frac{R+iS}{\eta}\right)\right],
\end{align*}
with $\rho=\rho(x,t)=\,|\psi(x,t)\,|^2=\,|\psi\,|^2$ as the probability of $x(t)$ and $\eta = M\sigma^2/\gamma_m$. Then the functional \eqref{l:HAMILT1} represented as
\begin{align*}
\begin{split}
E\left[\h_1\right] = E\left[\h_2\right]
= \frac{M}{2}
\left(E\abs{\nabla S}^2 + E\abs{\nabla R}^2\right)
+\int\rho\left(V+\Psi\right)dx,
\end{split}
\end{align*}
is time invariant if and only if
\begin{align}
\label{app:POT13}
i\eta\psi_t=-\frac{\eta^2}{2M}\Delta
\psi+(V+\Psi)\psi.
\end{align}
\end{prop}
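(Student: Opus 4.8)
The plan is to verify directly that the single complex equation \eqref{app:POT13} for $\psi=\exp[M(R+iS)/\eta]$ is equivalent to the pair of real equations already isolated in the proof --- the continuity relation \eqref{l:HAMILT41} (equivalently \eqref{l:HAMILT43}) and the Hamilton--Jacobi type identity \eqref{l:SCHROD6} --- and then to read off time invariance of the functional from the computation of $\text{(a)}+\text{(b)}+\text{(c)}$ carried out above.

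First I would differentiate $\psi$. Using $M/\eta=\gamma_m/\sigma^2$ one gets $\psi_t=\tfrac{\gamma_m}{\sigma^2}(R_t+iS_t)\psi$, $\nabla\psi=\tfrac{\gamma_m}{\sigma^2}(\nabla R+i\nabla S)\psi$ and $\Delta\psi=\big[\tfrac{\gamma_m}{\sigma^2}(\Delta R+i\Delta S)+\tfrac{\gamma_m^2}{\sigma^4}\big(\abs{\nabla R}^2-\abs{\nabla S}^2+2i(\nabla R)^T\nabla S\big)\big]\psi$. Substituting into \eqref{app:POT13} and cancelling the common factor $\psi$, the only arithmetic needed is $\eta\gamma_m/\sigma^2=M$ and $\eta^2\gamma_m^2/\sigma^4=M^2$, so $i\eta\psi_t$ contributes $iMR_t-MS_t$ and the two pieces of $-\tfrac{\eta^2}{2M}\Delta\psi$ carry the clean coefficients $\tfrac{\eta}{2}$ (on $\Delta R+i\Delta S$) and $\tfrac{M}{2}$ (on the quadratic terms) respectively.

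Second, I would split the resulting identity into real and imaginary parts. The real part, after dividing by $-M$ and using $\sigma^2/\gamma_m=\eta/M$, is exactly \eqref{l:SCHROD6}: $S_t-\tfrac12\abs{\nabla R}^2+\tfrac12\abs{\nabla S}^2-\tfrac{\eta}{2M}\Delta R+\tfrac1M(V+\Psi)=0$. The imaginary part, after dividing by $M$, is $R_t=-\tfrac{\sigma^2}{2\gamma_m}\Delta S-(\nabla R)^T\nabla S$, which is precisely the continuity relation \eqref{l:HAMILT41} --- equivalently \eqref{l:HAMILT43} once one recalls $\rho=\abs{\psi}^2=\exp[2\gamma_m R/\sigma^2]$ and $2\nabla S=b^++b^-$. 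Hence $\psi$ solves \eqref{app:POT13} if and only if both \eqref{l:HAMILT41} and \eqref{l:SCHROD6} hold; since \eqref{l:HAMILT41} is built into the construction (it is the continuity equation for the diffusion with drifts $b^\pm=\nabla S\pm\gamma_m\nabla R$), the only genuine content is \eqref{l:SCHROD6}. Finally I would close the loop: the computation of $\text{(a)}+\text{(b)}+\text{(c)}$ shows $\tfrac{d}{dt}E[\h_1]=\tfrac{M}{2}\int\rho\,S_{x_j}\big(2S_t+\abs{\nabla S}^2-\tfrac{\sigma^2}{\gamma_m}\Delta R-\abs{\nabla R}^2+\tfrac2M(V+\Psi)\big)_{x_j}\,dx$, and the bracket is $2M$ times the left side of \eqref{l:SCHROD6}; so \eqref{l:SCHROD6} forces this to vanish, and conversely requiring invariance forces the bracket constant in $x$, hence zero after normalising $S$ up to a function of $t$. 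Combining the two directions gives the stated equivalence.

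The step I expect to be the main obstacle is the converse direction ("time invariance $\Rightarrow$ Schr\"odinger"): the identity $\int\rho\,S_{x_j}[\cdots]_{x_j}\,dx=0$ for a single pair $(\rho,S)$ does not by itself force the integrand's bracket to be spatially constant. One must either demand invariance along the whole family of evolutions driven by \eqref{l:HAMILT43}, or integrate by parts once more using the continuity equation and appeal to non-degeneracy of $\rho$ together with the ambiguity of $S$ by an additive function of $t$. I would make explicit which normalisation of $S$ and which class of densities makes this rigorous; everything else reduces to the routine (if lengthy) substitution above.
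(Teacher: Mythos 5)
Your proposal follows essentially the same route as the paper's own proof: differentiate $\psi=\exp\left[M(R+iS)/\eta\right]$, separate real and imaginary parts, identify the imaginary part with the continuity relation \eqref{l:HAMILT41} and the real part with \eqref{l:SCHROD6}, and connect \eqref{l:SCHROD6} to the vanishing of the terms $\text{(a)}+\text{(b)}+\text{(c)}$ computed earlier. The obstacle you flag in the converse direction is genuine but is equally unaddressed in the paper, which only establishes sufficiency (it explicitly describes the vanishing of the integrand as a \emph{sufficient} condition for time independence), so your treatment is, if anything, the more careful one on the ``only if'' half.
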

\begin{proof}
The proof involves a straightforward verification of equation \eqref{l:DRFT131} by brute force.  Expanding the derivatives on the righthand side of this wave function
\begin{align*}
\frac{\eta}{iM}\psi_t&=\psi\left( -i R_t+S_t\right),
\\
\frac{\eta}{2M}\psi_{x_j}&=\frac{1}{2}\psi\left( R_{x_j}+iS_{x_j}\right),
\\
\frac{\eta^2}{2M^2}\psi_{x_jx_j}&=\frac{1}{2}\psi\left(R_{x_j}+iS_{x_j}\right)^2 + \frac{\eta}{2M}\psi\left( R_{x_jx_j}+i S_{x_jx_j}\right),
\end{align*}
which combined becomes
\begin{align*}
\frac{\eta}{iM}\psi_t-\frac{\eta^2}{2M^2}\psi_{x_jx_j} =
&-i\psi\left( R_t + R_{x_j}S_{x_j}+\frac{\eta}{2M}S_{x_jx_j}\right)
\\
&\quad + \psi\left( S_t-\frac{1}{2}\abs{\nabla R}^2 +
\frac{1}{2}\abs{\nabla S}^2
-\frac{\eta}{2M}R_{x_jx_j}\right),
\end{align*}
because of \eqref{l:SCHROD6}. Using the definition of $\eta$ the first term disappears and applying \eqref{l:SCHROD6} this reduces to
\begin{align*}
\frac{\eta}{iM}\psi_t=\frac{\eta^2}{2M^2}\psi_{x_jx_j}
-\frac{\left(V+\Psi\right)}{M}\psi
\end{align*}
which after multiplication with the mass $M$ and the complex number $i^2=-1$ yields Schr\"{o}dinger's equation \eqref{app:POT13} with $\eta = M\sigma^2/\gamma_m$. Notice that
\begin{align*}
\rho(x,t)=\abs{\psi(x,t)}^2=e^{\frac{2M R}{\eta}}=e^{\frac{2\gamma_m R}{\sigma^2}},
\end{align*}
which is consistent with our original assumption.
\end{proof}
An immediate difference with quantum mechanics is that the mean acceleration of the motion is guided by the potential $V(x,t)+\Psi(x,t)$. This means that the Schr\"{o}dinger's equation does not by itself preserve the energy as defined by \eqref{l:HAMILT1} if the potential is time dependent. The usual properties of quantum mechanics apply though.

\begin{prop}
\label{l:PROP2} Some straightforward derivatives show that
\begin{align*}
\frac{d^2}{dt^2}E[x(t)]=-\frac{1}{M}\int\rho\nabla(V+\Psi)
dx.
\end{align*}
\end{prop}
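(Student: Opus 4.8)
The statement is the Ehrenfest relation for the classical Schr\"odinger equation derived above, and the plan is to obtain it by differentiating $E[x(t)]=\int x\,\rho\,dx$ twice, using only the continuity equation \eqref{l:DRFT3c}--\eqref{l:HAMILT43} and the Hamilton--Jacobi type identity \eqref{l:SCHROD6} that characterises the time-invariant solutions in Theorem \eqref{l:THEOREM6}. Throughout I assume that $\rho$ and its derivatives decay fast enough at infinity that every boundary term produced by the partial integrals vanishes; this holds whenever $\psi$ is a genuine $L^2$ solution of \eqref{l:SCHROD5}, e.g. under Carlen's Kato--Rellich hypothesis on $V$.

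First I would differentiate once. Since $\rho=\abs{\psi}^2$ evolves by $\rho_t=-\nabla^T(\rho\nabla S)$ with velocity field $\nabla S=(b^++b^-)/2$, one partial integral gives $\frac{d}{dt}\int x_k\rho\,dx=-\int x_k\,\partial_j(\rho S_{x_j})\,dx=\int\rho\,S_{x_k}\,dx$, i.e. $\frac{d}{dt}E[x(t)]=\int\rho\,\nabla S\,dx$. Differentiating a second time splits into $\int\rho_t\,\nabla S\,dx+\int\rho\,\nabla S_t\,dx$. For the first piece, using the continuity equation again and one more partial integral, $\int\rho_t S_{x_k}\,dx=\int\rho\,S_{x_j}S_{x_jx_k}\,dx=\tfrac12\int\rho\,\partial_k\abs{\nabla S}^2\,dx$. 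For the second piece I would substitute $S_t$ from \eqref{l:SCHROD6}, namely $S_t=\tfrac12\abs{\nabla R}^2-\tfrac12\abs{\nabla S}^2+\tfrac{\eta}{2M}\Delta R-\tfrac1M(V+\Psi)$, and pull the gradient under the integral.

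Adding the two contributions, the two $\tfrac12\int\rho\,\partial_k\abs{\nabla S}^2\,dx$ terms cancel, leaving $\tfrac12\int\rho\,\partial_k\abs{\nabla R}^2\,dx+\tfrac{\eta}{2M}\int\rho\,\partial_k(\Delta R)\,dx-\tfrac1M\int\rho\,\partial_k(V+\Psi)\,dx$. The key step, and the only one that uses the explicit shape of $\rho$, is that $\rho=e^{2MR/\eta}$ (equivalently $\rho=e^{2\gamma_m R/\sigma^2}$, as recorded after Proposition \eqref{l:prp22}) gives $\rho_{x_j}=\tfrac{2M}{\eta}\rho\,R_{x_j}$; hence $\tfrac12\int\rho\,\partial_k\abs{\nabla R}^2\,dx=\int\rho\,R_{x_j}R_{x_jx_k}\,dx=\tfrac{\eta}{2M}\int\rho_{x_j}R_{x_jx_k}\,dx=-\tfrac{\eta}{2M}\int\rho\,\partial_k(\Delta R)\,dx$ after one further partial integral. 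Thus the two osmotic terms annihilate and only $-\tfrac1M\int\rho\,\nabla(V+\Psi)\,dx$ survives, which is the assertion.

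The computation is essentially routine integration by parts; the one place requiring care is the cancellation of the $R$-terms, which depends on precisely the normalisation $\eta=M\sigma^2/\gamma_m$ built into $\psi$ — the same choice that collapsed the imaginary part of the Schr\"odinger equation to the continuity equation in the proof of Theorem \eqref{l:THEOREM6}. As a cross-check I would also note the textbook route: compute $\frac{d}{dt}\int\bar\psi\,x\,\psi\,dx$ and then $\frac{d}{dt}\int\bar\psi(-i\eta\nabla)\psi\,dx$ directly from \eqref{l:SCHROD5} and its conjugate, which reproduces the same identity with $\hbar$ replaced by $\eta$; but I would present the $R,S$ derivation since it remains inside the paper's stochastic formalism.
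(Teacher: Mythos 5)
Your proposal follows essentially the same route as the paper's proof: differentiate $E[x_k(t)]$ twice using the continuity equation, substitute $S_t$ from the Hamilton--Jacobi identity \eqref{l:SCHROD6}, and let the $\abs{\nabla S}^2$ and $R$-terms cancel. You additionally justify the one step the paper dismisses as ``clearly'' zero --- the cancellation $\tfrac12\int\rho\,\nabla\abs{\nabla R}^2\,dx=-\tfrac{\eta}{2M}\int\rho\,\nabla(\Delta R)\,dx$ via $\nabla\rho=\tfrac{2M}{\eta}\rho\nabla R$ --- and your coefficient $\tfrac{\eta}{2M}\Delta R$ is the one consistent with \eqref{l:SCHROD6}, whereas the paper's display writes $\tfrac{\eta^2}{2M}\Delta R$, an apparent typo.
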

\begin{proof}
Using the continuity equation \eqref{l:HAMILT37} with one partial derivative
\begin{align*}
\frac{d}{dt}E[x_k(t)]&=\frac{d}{dt}\int x_k \rho(x,t)dx =
\int x_k \rho_tdx = -\int x_k \left(S_{x_j}\rho\right)_{x_j} dx
\\
&=\int S_{x_k}\rho dx,
\end{align*}
so that
\begin{align*}
\frac{d^2}{dt^2}E[x_k(t)]
&=
\int S_{x_kt}\rho dx
+\int S\rho_t dx
=
\int S_{x_kt}\rho dx
-\int S_{x_k} \left(\rho S_{x_j}\right)_{x_j} dx
\\
&=
\int S_{x_kt}\rho dx +
\int S_{x_jx_k} S_{x_j}\rho dx
=
\int S_{x_kt}\rho dx +
\frac{1}{2}\int \left(S_{x_j} S_{x_j}\right)_{x_k}\rho dx.
\end{align*}

Using the Schr\"{o}dinger's equation \eqref{l:SCHROD5} and extending to all dimensions this equation becomes
\begin{align*}
\frac{d^2}{dt^2}E[x_k(t)] =
&\int\rho \nabla\left(S_t+\frac{1}{2}\abs{\nabla
S}^2\right) dx
\\
=&\int\rho
\nabla\left(\frac{1}{2}\abs{\nabla R}^2+\frac{\eta^2}{2M}\Delta
R- \frac{(V+\Psi)}{M}\right) dx
\\
=&-\frac{1}{M}\int\rho\nabla \left(V+\Psi\right)dx,
\end{align*}
as clearly
\begin{align*}
\int\rho
\nabla\left(\frac{1}{2}\abs{\nabla R}^2+\frac{\eta^2}{2M}\Delta
R\right) dx = 0,
\end{align*}
settling the proof.
\end{proof}
\end{proof}

\section{}
\label{l:APP14} \noindent {\bf Proof of Theorem
\eqref{l:THEOREM8}}
\begin{proof}
This argument is an extension of the proof produced in \eqref{l:APP5} where now both the main and the incident particle are represented by the equation of motion \eqref{l:DRFTDEF1a} and \eqref{l:DRFTDEF1b}. The easiest way of achieving the constant energy for the combined energy \eqref{l:HAMILT1} is to consider both sets of stochastic differential equation with independent drift functions $b^+=b^+(x,y,t),b^-=b^-(x,y,t)$ and the variances $\eta,\eta_w$. Not surprisingly in this case both the main particle $v_1,v_2$ and incident particle $w_1,w_2$ particle must be described by Schr\"{o}dinger's equation. This equation is the same representation as presented in Theorem \eqref{l:TIMEERG2} so the minimum \eqref{l:HAMILT39} is written as a double term plus a variance represented as a constant.

To unify the notation to two dimensions introduce the sufficiently smooth function for the main particle $S_1=S_1(x,t)$, $R_1=R_1(x,t)\in \Re^{n}$ with $x=(x_1,...,x_n)$ and the incident particle $S_2=S_2(y,t)$, $R_2=R_2(y,t)\in \Re^{n}$ with $y=(x_{n+1},...,x_{2n})$ and define constants 
\begin{align}
\label{l:APP20}
\begin{split}
\alpha'(j) = & 1, \beta'(j) = \gamma_m, \quad\quad\quad\quad 1 \leq j \leq n,
\\
\alpha'(j) = & 1/\gamma_m, \beta'(j) = 1/\gamma_m^2, \quad n < j \leq 2n.
\end{split}
\end{align}
Define the $R$ and $S$ functions and assume that
\begin{align}
\label{l:APP22}
\begin{split}
R(x,y,t) & = R(x_1,...,x_n,x_{n+1}...,x_{2n},t) = R_1(x_1,...,x_n,t) + R_2(x_{n+1},...,x_{2n},t)
\\
& = R_1(x,t) + R_2(y,t),
\\
S(x,y,t) & = S(x_1,...,x_n,x_{n+1}...,x_{2n},t) = S_1(x_1,...,x_n,t) + S_2(x_{n+1},...,x_{2n},t)
\\
& = S_1(x,t) + S_2(y,t),
\end{split}
\end{align}
which simplifies the probability distribution. Now assume that for all $j=1,...,2n$ the drifts can be separated as follows
\begin{subequations}
\begin{align}
\label{l:APP14a}
2\alpha'(j)\nabla_j S & = \left(b_j^++b_j^-\right),
\\
\label{l:APP14b}
2\beta'(j)\nabla_j R & = \sigma^2\frac{\nabla_j\rho}{\rho} =b_j^+-b_j^-,
\end{align}
\end{subequations}
for $\rho=\rho(x,y,t),x,y\in\Re^{2n}$ and $\nabla_j=\frac{\partial}{\partial x_j},j=1,...,2n$. Then the drifts can be calculated as
\begin{align*}
b_j^+ & =\alpha'(j)\nabla_j S +\beta'(j)\nabla_j R,
\\
b_j^- & =\alpha'(j)\nabla_j S -\beta'(j)\nabla_j R,
\end{align*}
Notice that \eqref{l:APP14a} are derived from \eqref{l:DRFT2aa} while \eqref{l:APP14b} implies that $b^++b^-$ can now be represented in a gradient of the function $S$. Specifically, equations \eqref{l:APP14b} suggests that for all $j$
\begin{align}
\label{l:APP17}
\begin{split}
\nabla_{j} R &= \frac{\sigma^2}{2\beta'(j)} \frac{\nabla_j\rho}{\rho}=\frac{\sigma^2}{2\beta'(j)} \nabla_j\log(\rho),
\end{split}
\end{align}
which can be solved by $\rho=\exp(2(\beta_1R_1 + \beta_2R_2)/\sigma^2)=\exp(2\beta_1R_1/\sigma^2)\exp(2\beta_2R_2/\sigma^2)$ so that the density for $\rho(x,y,t)=\rho_1(x_1,...,x_n)\rho_2(x_{n+1},...,x_{2n})=\rho_1(x,t)\rho_2(y,t)$ is scaled by $R_1,R_2$. The constant is not included meaning that $R_1,R_2$ have to be scaled to accommodate the probability density unity.

To obtain the full Hamiltonian \eqref{l:HAMILT39} for the joint energy for both the main and incident particle using \eqref{l:APP14a}, \eqref{l:APP14b} equals
\begin{align}
\label{l:HAMILT40}
\begin{split}
\frac{2}{M}E\left[\h_1\right] = & \frac{2}{M}E\left[\h_2\right]
\\
= & \sum_{j=1}^{j=n}\int \rho \abs{\frac{b_j^++b_j^-}{2}}^2dx + \frac{1}{\gamma_m^2}\sum_{j=1}^{j=n}\int \rho \abs{\frac{b_j^+-b_j^-}{2}}^2dx
\\
& + \gamma_m^2\sum_{j=n+1}^{j=2n}\int \rho \abs{\frac{b_j^++b_j^-}{2}}^2dx + \gamma_m^4\sum_{j=n+1}^{j=2n}\int \rho \abs{\frac{b_j^+-b_j^-}{2}}^2dx
\\
= & \sum_{j=1}^{j=n}\alpha'(j)^2\int \rho \abs{\nabla_j S}^2dx + \frac{1}{\gamma_m^2}\sum_{j=1}^{j=n}\beta'(j)^2\int \rho \abs{\nabla_j R}^2dx,
\\
& + \gamma_m^2\sum_{j=n+1}^{j=2n}\alpha'(j)^2\int \rho \abs{\nabla_j S}^2dx + \gamma_m^4\sum_{j=n+1}^{j=2n}\beta'(j)^2\int \rho \abs{\nabla_j R}^2dx,
\end{split}
\end{align}
which becomes
\begin{align*}
\frac{2}{M}E\left[\h_1\right] & =\frac{2}{M}E\left[\h_2\right] = \sum_{j=1}^{j=2n}\int \rho \abs{\nabla_j S}^2dx + \sum_{j=1}^{j=2n}\int \rho \abs{\nabla_j R}^2dx,
\end{align*}
for the constants $\alpha'(j),\beta'(j)$ scaled in \eqref{l:APP20} with $\gamma_m^2 = m/M$ and $E[.] = \int.\rho(x,y,t)dx$. This is Theorem \eqref{l:THEOREM5} again but now there are appropriate choices of stochastic drifts relating to $w_1,w_2$.

To solve the time dependence of \eqref{l:HAMILT40} some time relationships need to be determined and employing equations \eqref{l:APP14a}, \eqref{l:APP14b} and equation using
\begin{align*}
\frac{\partial \rho}{\partial t}=-\sum_{j=1}^{j=2n}\nabla_j\left(\rho\nabla_j S\right),
\end{align*}
then the time dependence of $\rho$ using \eqref{l:APP17} becomes
\begin{align}
\label{l:APP16}
\frac{\partial \rho}{\partial t} = &  -\rho\sum_{j=1}^{j=2n}\left(\nabla_{jj}S +\frac{2\beta'(j)\nabla_jS\nabla_jR}{\sigma^2}\right).
\end{align}
Using \eqref{l:APP17} again with equation \eqref{l:APP16} it is found that for all $j$ the time dependence becomes
\begin{align}
\label{l:APP18}
2\beta'(j)\nabla_{j} R_t= \sigma^2 \nabla_j\left(\frac{\partial \log(\rho)}{\partial t}\right)=-\sigma^2\nabla_j\sum_{k=1}^{k=2n}\left(\nabla_{kk}S +\frac{2\beta'(k)\nabla_kS\nabla_kR}{\sigma^2}\right),
\end{align}
which will be used below.
\begin{comment}
\begin{align*}
V(x,t) & =
\frac{M\gamma_m^2}{2}
\left(
\begin{matrix}
\abs{\nabla S'}^2
\end{matrix}
+
\gamma_m^2
\begin{matrix}
\abs{\nabla R'}^2
\end{matrix}
\right)
\\
& =
\frac{M\gamma_m^2}{2}\left(S'_{x_j}S'_{x_j}
+
\gamma_m^2R'_{x_j}R'_{x_j}\right)
\\
& =
\frac{M\gamma_m^2}{2}\abs{\nabla \psi'(x,t)}^2,
\end{align*}

where
\begin{align*}
\psi' &=\psi'(x,t)=
exp\left[\frac{\delta mR'(x,t)+i
mS'(x,t)}{\eta}\right],
\end{align*}
satisfying Schr\"{o}dinger's equation
\begin{align*}
i\eta_w\psi'(x,t)_t&=-\frac{\eta_w^2}{2m}\Delta\psi'(x,t),
\end{align*}
using $\eta_w=M\gamma_w^3\sigma^2=\gamma_w^4\eta$, $m=M\gamma_m^2$ and Einstein's summation convention.

Solving \eqref{l:HAMILT37} now means applying Theorem \eqref{l:THEOREM5} using \eqref{l:DRFT19} then
\end{comment}

To be time independent this functional must have a zero derivative with respect to time. So taking the time derivative of the functional \eqref{l:HAMILT40} for the main and incident particles yields
\begin{align*}
\frac{2}{M}\frac{\partial E[\h_1]}{\partial t} & =\frac{2}{M}\frac{\partial E[\h_2]}{\partial t} =
\\
& \sum_{j=1}^{j=2n}\int\frac{\partial \rho}{\partial t} \abs{\nabla_j S}^2dx + \sum_{j=1}^{j=2n}\int\frac{\partial \rho}{\partial t} \abs{\nabla_j R}^2
dx
&& \text{(a)}
\\
& +2\sum_{j=1}^{j=2n}\int\rho \nabla_jS \nabla_j S_tdx + 2\sum_{j=1}^{j=2n}\int\rho \nabla_jR \nabla_jR_t
dx.
&& \text{(b)}
\end{align*}
To determine the first term $(a)$ substitute \eqref{l:APP16} and changing the variable $j$ to $k$ the first term becomes
\begin{align*}
\text{(a)} = & -\sum_{j=1}^{j=2n}\int\rho \left(\nabla_{jj}S +\frac{2\beta'(j)\nabla_jS\nabla_jR}{\sigma^2}\right)
\sum_{k=1}^{k=2n}
\begin{pmatrix}
\abs{\nabla_kS}^2 + \abs{\nabla_kR}^2
\end{pmatrix}
dx
\\
= & -\sum_{j=1}^{j=2n}\int\rho \nabla_{jj}S
\sum_{k=1}^{k=2n}
\begin{pmatrix}
\abs{\nabla_kS}^2 + \abs{\nabla_kR}^2
\end{pmatrix}
dx
\\
&-\sum_{j=1}^{j=2n}\frac{2\beta'(j)}{\sigma^2}\int\rho \nabla_jS\nabla_jR
\sum_{k=1}^{k=2n}
\begin{pmatrix}
\abs{\nabla_kS}^2 + \abs{\nabla_kR}^2
\end{pmatrix}
dx,
\end{align*}
\begin{comment}
\begin{align*}
\text{(a)}
=&\alpha\int\rho S_{x_j}
\begin{pmatrix}
\alpha^2S_{x_k}S_{x_k} +\beta^2\delta^2R_{x_k}R_{x_k}
\end{pmatrix}_{x_j}
dx
+\alpha\int \rho_{x_j} S_{x_j}
\begin{pmatrix}
\alpha^2S_{x_k}S_{x_k} +\beta^2\delta^2R_{x_k}R_{x_k}
\end{pmatrix}
dx
\\
&-\frac{2\alpha\beta}{\sigma^2}\int\rho S_{x_j}R_{x_j}
\begin{pmatrix}
\alpha^2S_{x_k}S_{x_k} +\beta^2\delta^2R_{x_k}R_{x_k}
\end{pmatrix}
dx,
\end{align*}
or
\begin{align*}
\text{(a)} =&\alpha\int \rho S_{x_j}
\begin{pmatrix}
\alpha^2S_{x_k}S_{x_k} +\beta^2\delta^2R_{x_k}R_{x_k}
\end{pmatrix}_{x_j}
dx
+\frac{2\alpha\beta}{\sigma^2}\int \rho R_{x_j} S_{x_j}
\begin{pmatrix}
\alpha^2S_{x_k}S_{x_k} +\beta^2\delta^2R_{x_k}R_{x_k}
\end{pmatrix}
dx
\\
&-\frac{2\alpha\beta}{\sigma^2}\int\rho
R_{x_j}S_{x_j}
\begin{pmatrix}
\alpha^2S_{x_k}S_{x_k} +\beta^2\delta^2R_{x_k}R_{x_k}
\end{pmatrix}
dx,
\end{align*}
\end{comment}
and with one partial integral on the first term this reduces to
\begin{align*}
\text{(a)} = & \sum_{j=1}^{j=2n}\int\rho \nabla_{j}S \nabla_j\left(
\sum_{k=1}^{k=2n}
\begin{pmatrix}
\abs{\nabla_kS}^2 + \abs{\nabla_kR}^2
\end{pmatrix}
\right)
dx,
\end{align*}
as the second term of the partial integral cancels the last term.

To reduce the (b) term using equation \eqref{l:APP18} first the derivatives with time are removed via partial integration so that
\begin{align*}
\text{(b)}= & 2\sum_{j=1}^{j=2n}\int\rho \nabla_jS \nabla_j S_tdx + 2\sum_{j=1}^{j=2n}\int\rho \nabla_jR \nabla_jR_tdx
\\
= & 2\sum_{j=1}^{j=2n}\int\rho \nabla_jS \nabla_j S_tdx -
\sigma^2\sum_{j=1}^{j=2n}\int\rho \nabla_jR \frac{\nabla_j\sum_{k=1}^{k=2n}\left(\nabla_{kk}S +\frac{2\beta'(k)\nabla_kS\nabla_kR}{\sigma^2}\right)}{\beta'(j)}dx
\\
= & 2\sum_{j=1}^{j=2n}\int\rho \nabla_jS \nabla_j S_tdx +
\sigma^2\sum_{j=1}^{j=2n}\int\rho \nabla_{jj}R \frac{\sum_{k=1}^{k=2n}\left(\nabla_{kk}S +\frac{2\beta'(k)\nabla_kS\nabla_kR}{\sigma^2}\right)}{\beta'(j)}dx
\\
& \qquad +\sigma^2\sum_{j=1}^{j=n}\int\nabla_j\rho \nabla_jR\frac{\sum_{k=1}^{k=2n}\left(\nabla_{kk}S +\frac{2\beta'(k)\nabla_kS\nabla_kR}{\sigma^2}\right)}{\beta'(j)}dx
\\
= & 2\sum_{j=1}^{j=2n}\int\rho \nabla_jS \nabla_j S_tdx +
\sigma^2\sum_{j=1}^{j=2n}\int\rho \nabla_{jj}R \frac{\sum_{k=1}^{k=2n}\left(\nabla_{kk}S +\frac{2\beta'(k)\nabla_kS\nabla_kR}{\sigma^2}\right)}{\beta'(j)}dx
\\
& \qquad +2\sum_{j=1}^{j=2n}\int\rho \abs{\nabla_jR}^2\sum_{k=1}^{k=2n}\left(\nabla_{kk}S +\frac{2\beta'(k)\nabla_kS\nabla_kR}{\sigma^2}\right)dx,
\end{align*}
using \eqref{l:APP17} and \eqref{l:APP14a} again.

Collecting terms in this equation yields
\begin{align*}
\text{(b)}= & 2\sum_{j=1}^{j=2n}\int\rho \nabla_jS \nabla_j S_tdx +
\sigma^2\sum_{j,k=1}^{j,k=2n}\int\rho \nabla_{kk}S\left(\frac{\nabla_{jj}R}{\beta'(j)} +\frac{2\abs{\nabla_jR}^2}{\sigma^2}\right)dx
\\
& +2\sum_{j,k=1}^{j,k=2n}\frac{\beta'(k)}{\beta'(j)}\int\rho \nabla_kS\nabla_{jj}R\nabla_kRdx +\frac{4}{\sigma^2}\sum_{j,k=1}^{j,k=2n}\beta'(k)\int\rho \nabla_kS\abs{\nabla_jR}^2\nabla_kR
dx,
\end{align*}
and then taking another partial derivative against the second term removing the second derivative
\begin{align*}
\text{(b)}
= & 2\sum_{j=1}^{j=2n}\int\rho \nabla_jS \nabla_j S_tdx -
\sigma^2\sum_{j,k=1}^{j,k=2n}\int\rho \nabla_kS\nabla_k\left(\frac{\nabla_{jj}R}{\beta'(j)} +\frac{2\abs{\nabla_jR}^2}{\sigma^2}\right)dx
\\
& \qquad -2\sum_{j,k=1}^{j,k=2n}\beta'(k)\int\rho \nabla_kS\nabla_kR\left(\frac{\nabla_{jj}R}{\beta'(j)} +\frac{2\abs{\nabla_jR}^2}{\sigma^2}\right)dx
\\
& \qquad +2\sum_{j,k=1}^{j,k=2n}\frac{\beta'(k)}{\beta'(j)}\int\rho \nabla_kS\nabla_{jj}R\nabla_kRdx
\\
& \qquad + \frac{4}{\sigma^2}\sum_{j,k=1}^{j,k=2n}\beta'(k)\int\rho \nabla_kS\abs{\nabla_jR}^2\nabla_kR
dx
\\
= & 2\sum_{j=1}^{j=2n}\int\rho \nabla_jS \nabla_j S_tdx -
\sigma^2\sum_{j,k=1}^{j,k=2n}\int\rho \nabla_kS\nabla_k\left(\frac{\nabla_{jj}R}{\beta'(j)} +\frac{2\abs{\nabla_jR}^2}{\sigma^2}\right)dx,
\end{align*}
as four terms cancel in the equation. Applying the terms finally results in
\begin{align*}
\text{(a+b)}= &\sum_{j=1}^{j=2n}\int\rho \nabla_jS \nabla_j
\begin{pmatrix}
2\frac{\partial S}{\partial t} + \sum_{k=1}^{k=2n}\abs{\nabla_kS}^2 - \sum_{k=1}^{k=2n}\abs{\nabla_kR}^2
-\sum_{k=1}^{k=2n}\frac{\sigma^2}{\beta'(k)}\nabla_{kk}R
\end{pmatrix}dxdy.
\end{align*}

To simplify this equation notice that this expression is equivalent to the sum of two Schr\"{o}dinger equations employing different variances $\eta$ and $\eta_w$. Following \eqref{l:DRFT131} imagine that there are two potentials $-V_1(x,y,t)$ and $V_2(x,y,t)$ and the parameter separation suggested in \eqref{l:APP22} then this equation becomes
\begin{align*}
\text{(a+b)}=
&\sum_{j=1}^{j=n}\int\rho \nabla_jS_1 \nabla_j
\begin{pmatrix}
2\frac{\partial S_1}{\partial t} + \sum_{k=1}^{k=n}\abs{\nabla_kS_1}^2 - \sum_{k=1}^{k=n}\abs{\nabla_kR_1}^2
-\frac{\eta}{M}\sum_{k=1}^{k=n}\nabla_{kk}R_1
\end{pmatrix}
dxdy
\\
& + \sum_{j=n+1}^{j=2n}\int\rho \nabla_jS_2 \nabla_j
\begin{pmatrix}
2\frac{\partial S_2}{\partial t} + \sum_{k=n+1}^{k=2n}\abs{\nabla_kS_2}^2 - \sum_{k=n+1}^{k=2n}\abs{\nabla_kR_2}^2
\\
-\frac{\eta_w}{m}\sum_{k=n+1}^{k=2n}\nabla_{kk}R_2
\end{pmatrix}
dydx
\\
= &\frac{2}{M}\sum_{k=1}^{k=n}\int \rho(x,y,t) \nabla_jS_1 \nabla_j
\begin{pmatrix}
\left(V_1(x,y,t)+\Psi_1(x,y,t)\right)
\end{pmatrix}dxdy
\\
&\qquad - \frac{2}{m}\sum_{k=n+1}^{k=2n}\int \rho(x,y,t) \nabla_jS_2 \nabla_j
\begin{pmatrix}
\left(V_2(x,y,t)+\Psi_2(x,y,t)\right)
\end{pmatrix}dxdy,
\end{align*}
since by equation \eqref{l:APP19}
\begin{align*}
\frac{\sigma^2}{\beta'(k)} = & \frac{\sigma^2}{\gamma_m}=\frac{\eta}{M},\quad 1\leq k \leq n,
\\
\frac{\sigma^2}{\beta'(k)} = & \sigma^2\gamma_m=\frac{\eta_w}{m},\quad n < k \leq 2n,
\end{align*}
while $\eta = M\sigma^2/\gamma_m$, $\eta_w = m\sigma^2\gamma_m$ and
\begin{align}
\label{APP28}
\begin{split}
\frac{\partial V_1(x,y,t)}{\partial t} & = \big(\nabla S\big)^T \nabla\Psi_1(x,y,t),
\\
\frac{\partial V_2(x,y,t)}{\partial t} & = \big(\nabla S\big)^T \nabla\Psi_2(x,y,t).
\end{split}
\end{align}

Since it was assumed that
\begin{align*}
\rho(x,y,t)=\rho_1(x,t)\rho_2(y,t)=\abs{\psi_1(x,t)}^2\abs{\psi_2(y,t)}^2,
\end{align*}
it is finally shown that
\begin{align}
\label{APP25}
\begin{split}
\text{(a+b)}=
&\frac{2}{M}\sum_{k=1}^{k=n}\int \rho_1(x,t) \nabla_jS_1 \nabla_j \big(\overline{V_1(x,t)}\big)dx
\\
& - \frac{2}{m}\sum_{k=n+1}^{k=2n}\int \rho_2(y,t) \nabla_jS_2 \nabla_j \big(\overline{V_2(y,t)}\big)dy,
\end{split}
\end{align}
where
\begin{align*}
\overline{V_1(x,t)} & = \int \rho_2(y,t)\left(V_1(x,y,t)+\Psi_1(x,y,t)\right)dy,
\\
\overline{V_2(y,t)} & = \int \rho_1(x,t)\left(V_2(x,y,t)+\Psi_2(x,y,t)\right)dx.
\end{align*}
Hence the energy $(a+b)$ will be zero if the wave functions
\begin{align*}
\psi_1(x,t) & = \exp\left(\frac{M(R_1(x,t)+iS_1(x,t))}{\eta}\right),
\\
\psi_2(y,t) & = \exp\left(\frac{m(R_2(y,t)+iS_2(y,t))}{\eta_w}\right),
\end{align*}
satisfy Schr\"{o}dinger equations with potentials $\overline{V_1(x,t)}=\int\rho_2(y,t)\left(V_1(x,y,t)+\Psi_1(x,y,t)\right)dy$ and $\overline{V_2(x,t)}=\int\rho_1(x,t)\left(V_2(x,y,t)+\Psi_2(x,y,t)\right)dx$ satisfying \eqref{APP28} and \eqref{APP25}.

\end{proof}

\end{document}